\tikzset{
  cfgedge/.style   = {black, ->, >=stealth},
  cfgedgestar/.style   = {black, ->, >=stealth, shorten >=.17em,
                          to path={-- node[inner sep=0pt,at end,sloped] {${}^*$} (\tikztotarget) \tikztonodes}},
  cfgedgeplus/.style   = {black, ->, >=stealth, shorten >=.17em,
                          to path={-- node[inner sep=0pt,at end,sloped] {${}^+$} (\tikztotarget) \tikztonodes}},
  forward/.style = { blue, ->, >=angle 45},
  forwardstar/.style = {blue, ->, >=angle 45, shorten >=.17em,
             to path={-- node[inner sep=0pt,at end,sloped] {${}^*$} (\tikztotarget) \tikztonodes}},
  forwardplus/.style = {blue, ->, >=angle 45, shorten >=.17em,
             to path={-- node[inner sep=0pt,at end,sloped] {${}^+$} (\tikztotarget) \tikztonodes}},
  backward/.style = { red, densely dashed, ->, >=latex' },
  backwardstar/.style = { red, densely dashed, ->, >=latex', shorten >=.17em,
                      to path={-- node[inner sep=0pt,at end,sloped] {${}^*$} (\tikztotarget) \tikztonodes}},
}
\newcommand{\cfgarrow}{\mathbin{\tikz[baseline]\path (0,0) edge[cfgedge, yshift=0.6ex] (.9em,0);}}
\newcommand*{\tikzmk}[1]{\tikz[remember picture,overlay,] \node (#1) {};\ignorespaces}
\newcommand{\boxit}[2]{\tikz[remember picture,overlay]{\node[yshift=3pt,xshift=3pt,fill=#1,opacity=.15,fit={(A)($(B)+(#2\linewidth,.8\baselineskip)$)}] {};}\ignorespaces}
\colorlet{pink}{red!40}
\colorlet{cyan}{cyan!60}
\colorlet{gray}{gray!60}
\newcommand{\pref}{\prettyref}
\newlist{ecomponents}{enumerate}{1}
\setlist[ecomponents,1]{label={\bfseries C\arabic*},align=left}
\newlist{casesp}{enumerate}{3} %
\setlist[casesp]{align=left, %
                 listparindent=\parindent, %
                 parsep=\parskip, %
                 font=\normalfont\bfseries, %
                 leftmargin=0pt, %
                 labelwidth=0pt, %
                 itemindent=.4em,labelsep=.4em, %
                 partopsep=0pt, %
                 }
\setlist[casesp,1]{label=Case~\arabic*,ref=\arabic*,leftmargin=0ex}
\setlist[casesp,2]{label=Case~\thecasespi.\alph*,ref=\thecasespi.\alph*,leftmargin=2ex}
\setlist[casesp,3]{label=Case~\thecasespii.\roman*,ref=\thecasespii.\roman*,leftmargin=2ex}
\definecolor{lapislazuli}{rgb}{0.15, 0.38, 0.61}
\definecolor{light-gray}{gray}{0.9}
\definecolor{light-pink}{rgb}{0.858, 0.188, 0.478}
\definecolor{maroon}{rgb}{0.5, 0.0, 0.0}
\newcommand{\etch}{{\color{maroon}h}}
\newcommand{\cbox}[2]{\fboxsep 1pt%
    \colorbox{#1}{#2}}
\newcommand{\highlightmath}[1]{\cbox{light-gray}{$#1$}}
\def\mathcolor#1#{\@mathcolor{#1}}
\def\@mathcolor#1#2#3{%
  \protect\leavevmode
  \begingroup
    \color#1{#2}#3%
  \endgroup
}
\newcommand*{\redv}{\mathcolor{red}{v}}
\newcommand{\Omit}[1]{}
\newcommand{\AbsDomain}{\mathcal{A}}
\newcommand{\xqed}[1]{%
 \leavevmode\unskip\penalty9999 \hbox{}\nobreak\hfill
  \quad\hbox{\ensuremath{#1}}}
\newcommand*{\qef}{\xqed{\scriptstyle{\blacksquare}}}
\newcommand{\subsubsubsection}[1]{\smallskip\noindent\textbf{\emph{#1}}\enspace}
\newcommand{\RQ}[1]{\textbf{RQ#1}}
\newcommand{\nestingrelation}{\mathsf{N}}
\newcommand{\nesting}{\mathcolor{maroon}{\preceq_{\nestingrelation}}}
\newcommand{\nestingneq}{\mathcolor{maroon}{\prec_{\mathsf{N}}}}
\newcommand{\luo}{\mathcolor{PineGreen}{\leq}}
\newcommand{\luoneq}{\mathcolor{PineGreen}{<}}
\newcommand{\relation}{\mathsf{R}}
\newcommand{\widen}{\triangledown}
\newcommand{\wto}{\mathcal{W}}
\newcommand{\hto}{\mathcal{H}}
\newcommand{\components}{\mathcal{C}}
\def\llceil{\lceil\kern-3pt\lceil}
\def\rrceil{\rceil\kern-3pt\rceil}
\def\llfloor{\lfloor\kern-3pt\lfloor}
\def\rrfloor{\rfloor\kern-3pt\rfloor}
\newcommand{\postset}[1]{\llfloor {#1}\rceil}
\newcommand{\preset}[1]{\lfloor{#1}\rrceil}
\newcommand{\component}[2]{\llfloor {#1}, {#2}\rrceil}
\DeclareMathOperator*{\foo}{\texttt{Lift}}
\newcommand{\eqdef}{\buildrel \mbox{\tiny\textrm{def}} \over =}
\newcommand*{\mikos}{\textsc{Mikos}}
\newcommand*{\ikos}{\text{IKOS}}
\newcommand*{\pre}{\textsc{Pre}}
\newcommand*{\post}{\textsc{Post}}
\newcommand*{\fpcheck}{\varphi}
\newcommand*{\checkmap}{\textsc{Ck}}
\newcommand*{\fmProg}{\mathit{Prog}}
\newcommand*{\fmexec}{\mathtt{exec}}
\newcommand*{\fmrepeat}{\mathtt{repeat}}
\newcommand*{\fmseq}{\fatsemi} %
\newcommand*{\FM}{\texttt{FM}}
\newcommand*{\memconfig}{\mathcolor{blue}{\mathcal{M}}}
\newcommand*{\memconfigdef}{\memconfig_{\dflt}}
\newcommand*{\memconfigopt}{\memconfig_{\opt}}
\newcommand*{\checkvertices}{V_C}
\newcommand*{\genprog}{\mathtt{genProg}}
\newcommand{\deallocpre}[1]{\Delete\ \pre[#1]}
\newcommand{\deallocpost}[1]{\Delete\ \post[#1]}
\newcommand*{\dpost}{\mathcolor{blue}{\textsc{Dpost}}}
\newcommand*{\achk}{\mathcolor{blue}{\textsc{Achk}}}
\newcommand*{\dpostl}{\mathcolor{blue}{\textsc{Dpost}^\ell}}
\newcommand*{\dprel}{\mathcolor{blue}{\textsc{Dpre}^\ell}}
\newcommand*{\instr}{\texttt{Inst}}
\newcommand*{\opt}{\mathcolor{red}{\text{opt}}}
\newcommand*{\dpostopt}{\dpost_{\opt}}
\newcommand*{\achkopt}{\achk_{\opt}}
\newcommand*{\dpostlopt}{\dpostl_{\opt}}
\newcommand*{\dprelopt}{\dprel_{\opt}}
\newcommand*{\dflt}{\text{dflt}}
\newcommand*{\dpostdef}{\dpost_{\dflt}}
\newcommand*{\achkdef}{\achk_{\dflt}}
\newcommand*{\dpostldef}{\dpostl_{\dflt}}
\newcommand*{\dpreldef}{\dprel_{\dflt}}
\DeclareMathSymbol{\qm}{\mathalpha}{operators}{"3F}
\DeclareMathSymbol{\lp}{\mathalpha}{operators}{"28}
\DeclareMathSymbol{\rp}{\mathalpha}{operators}{"29}
\DeclareMathSymbol{\cl}{\mathalpha}{operators}{"3A}
\DeclareMathAlphabet{\mathbbold}{U}{bbold}{m}{n}
\newcommand*\OR{\ |\ }
\newcommand{\algwto}{\texttt{Generate$\FM$Program}}
\newcommand{\timelimit}{an hour}%
\newcommand{\memlimit}{64 GB}%
\newcommand{\filtertime}{5 seconds}
\newcommand{\noncomplete}{\textcolor{red}{$\times$}}
\newcommand{\osstotal}{1503}
\newcommand{\ossmikosto}{76}
\newcommand{\ossmikosso}{8}
\newcommand{\ossikosfive}{994}
\newcommand{\ossfiltered}{426}
\newcommand{\ossfilteredok}{425}
\newcommand{\ossfilterednok}{1}
\newcommand{\ossmaxmem}{0.022}
\newcommand{\ossminmem}{0.998}
\newcommand{\ossgeomeanok}{0.436}
\newcommand{\ossgeomean}{0.437}
\newcommand{\ossgeomeanpercentage}{43.7}
\newcommand{\ossgeomeantimes}{2.29}
\newcommand{\osshistal}{44}
\newcommand{\osshistau}{184}
\newcommand{\osshistbl}{187}
\newcommand{\osshistbu}{436}
\newcommand{\osshistcl}{449}
\newcommand{\osshistcu}{1085}
\newcommand{\osshistdl}{1133}
\newcommand{\osshistdu}{64000}
\newcommand{\ossspeedupmin}{0.88$\times$}
\newcommand{\ossspeedupmax}{2.83$\times$}
\newcommand{\ossspeedupavg}{1.08$\times$}
\newcommand{\ossspeedupabsmin}{-409.74~\text{s}}
\newcommand{\ossspeedupabsmax}{198.39~\text{s}}
\newcommand{\ossspeedupabsavg}{1.29~\text{s}}
\newcommand{\ossspeedup}{331}
\newcommand{\ossshorttimemin}{0.06}
\newcommand{\ossshorttimemax}{4.98}
\newcommand{\ossshorttimeavg}{1.07}
\newcommand{\ossshortmemmin}{9}
\newcommand{\ossshortmemmax}{218}
\newcommand{\ossshortmemavg}{46}
\newcommand{\ossshorttimep}{+1.33}
\newcommand{\ossshorttimem}{-0.05}
\newcommand{\ossshorttimea}{+0.14}
\newcommand{\ossshortmemp}{+172}
\newcommand{\ossshortmemm}{-0.43}
\newcommand{\ossshortmema}{+18}
\newcommand{\svctotal}{2928}
\newcommand{\svcmikosto}{435}
\newcommand{\svcikosfive}{1709}
\newcommand{\svcfiltered}{784}
\newcommand{\svcfilteredok}{755}
\newcommand{\svcfilterednok}{29}
\newcommand{\svcmaxmem}{0.001}
\newcommand{\svcminmem}{0.895}
\newcommand{\svcgeomeanok}{0.044}
\newcommand{\svcgeomean}{0.041}
\newcommand{\svcgeomeanq}{0.009}
\newcommand{\svcgeomeanpercentage}{4.07}
\newcommand{\svcgeomeantimes}{24.57}
\newcommand{\svchistal}{52}
\newcommand{\svchistau}{642}
\newcommand{\svchistbl}{647}
\newcommand{\svchistbu}{2676}
\newcommand{\svchistcl}{2700}
\newcommand{\svchistcu}{9764}
\newcommand{\svchistdl}{9865}
\newcommand{\svchistdu}{64000}
\newcommand{\svcspeedupmin}{0.87$\times$}
\newcommand{\svcspeedupmax}{1.80$\times$}
\newcommand{\svcspeedupavg}{1.29$\times$}
\newcommand{\svcspeedupabsmin}{-7.47~\text{s}}
\newcommand{\svcspeedupabsmax}{1160.04~\text{s}}
\newcommand{\svcspeedupabsavg}{96.90~\text{s}}
\newcommand{\svcspeedup}{740}
\newcommand{\svcshorttimemin}{0.11}
\newcommand{\svcshorttimemax}{4.98}
\newcommand{\svcshorttimeavg}{0.58}
\newcommand{\svcshortmemmin}{25}
\newcommand{\svcshortmemmax}{564}
\newcommand{\svcshortmemavg}{42}
\newcommand{\svcshorttimep}{+1.44}
\newcommand{\svcshorttimem}{-0.61}
\newcommand{\svcshorttimea}{+0.08}
\newcommand{\svcshortmemp}{+490}
\newcommand{\svcshortmemm}{-0.37} %
\newcommand{\svcshortmema}{+12}
\newif\ifarxiv
\begin{document}
\title{Memory-Efficient Fixpoint Computation}
\author{Sung Kook Kim\inst{1} \and
Arnaud J.\ Venet\inst{2} \and
Aditya V.\ Thakur\inst{1}}
\authorrunning{S.\ Kim et al.}
\institute{University of California, Davis CA 95616, USA\\
\email{\{sklkim,avthakur\}@ucdavis.edu} \and
Facebook, Inc., Menlo Park CA 94025, USA\\
\email{ajv@fb.com}}
\maketitle              %
\begin{abstract}
  Practical adoption of static analysis often requires trading precision for
performance. This paper focuses on improving the memory efficiency of abstract
interpretation without sacrificing precision or time efficiency.
Computationally, abstract interpretation reduces the problem of inferring
program invariants to computing a fixpoint of a set of equations. This paper
presents a method to minimize the memory footprint in Bourdoncle's iteration
strategy, a widely-used technique for fixpoint computation.
Our technique is agnostic to the abstract domain used. We prove that our
technique is optimal (i.e., it results in minimum memory footprint) for
Bourdoncle's iteration strategy while computing the same result. We evaluate the
efficacy of our technique by implementing it in a tool called $\mikos$, which
extends the state-of-the-art abstract interpreter $\ikos$.
When verifying user-provided assertions, $\mikos$ shows a decrease in
peak-memory usage to $\svcgeomeanpercentage$\% ($\svcgeomeantimes\times$) on
average compared to $\ikos$. When performing interprocedural buffer-overflow
analysis, $\mikos$ shows a decrease in peak-memory usage to
$\ossgeomeanpercentage$\% ($\ossgeomeantimes\times$) on average compared to
$\ikos$.
\end{abstract}
\section{Introduction}
\label{sec:Introduction}

\emph{Abstract interpretation}~\cite{kn:CC77} is a general framework for
expressing static analysis of programs. Program invariants inferred by an
abstract interpreter are used in client applications such as program verifiers,
program optimizers, and bug finders. To extract the invariants, an abstract
interpreter computes a fixpoint of an equation system approximating the program
semantics. The efficiency and precision of the abstract interpreter depends on
the \emph{iteration strategy}, which specifies the order in which the equations
are applied during fixpoint computation.

The \emph{recursive iteration strategy} developed by
Bourdoncle~\cite{bourdoncle1993efficient} is widely used for fixpoint
computation in academic and industrial abstract interpreters such as NASA
IKOS~\cite{ikos2014}, Crab~\cite{crab}, Facebook SPARTA~\cite{sparta}, Kestrel
Technology CodeHawk~\cite{codehawk}, and Facebook
Infer~\cite{DBLP:conf/nfm/CalcagnoD11}. Extensions to Bourdoncle's approach that improve
precision~\cite{amatoSAS2013} and time
efficiency~\cite{DBLP:journals/pacmpl/KimVT20} have also been proposed.

This paper focuses on improving the memory efficiency of abstract interpretation.
This is an important problem in practice because large memory requirements can
prevent clients such as compilers and developer tools from using sophisticated
analyses.
This has motivated approaches for efficient implementations of abstract domains
\cite{DBLP:conf/cav/JeannetM09,DBLP:journals/scp/BagnaraHZ08,DBLP:conf/popl/SinghPV17},
including techniques that trade precision for efficiency
\cite{GangeNSSS16:VMCAI2016,DBLP:journals/sigsoft/BertraneCCFMMR11,DBLP:conf/sas/HeoOY16}.

This paper presents a technique for memory-efficient fixpoint computation. Our
technique minimizes the memory footprint in Bourdoncle's recursive iteration
strategy. Our approach is agnostic to the abstract domain and does not sacrifice
time efficiency. We prove that our technique exhibits optimal peak-memory usage
for the recursive iteration strategy while computing the same
fixpoint~(\pref{sec:Algorithm}). Specifically, our approach does not change the
iteration order but provides a mechanism for early deallocation of abstract
values. Thus, there is no loss of precision when improving memory performance.
Furthermore, such ``backward compatibility'' ensures that existing
implementations of Bourdoncle's approach can be replaced without impacting
clients of the abstract interpreter, an important requirement in practice.

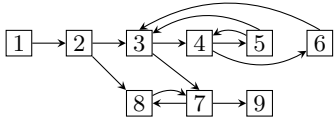
\begin{figure}[t]
  \centering
  \begin{tikzpicture}[auto,node distance=.8cm,font=\small]
    \tikzstyle{every node} = [rectangle, draw, inner sep=0pt,minimum size=2.5ex]
    \node (1) {$1$};
    \node [right of=1] (2)  {$2$};
    \node [right of=2] (3)  {$3$};
    \node [right of=3] (4)  {$4$};
    \node [right of=4] (5)  {$5$};
    \node [right of=5] (6)  {$6$};

    \node [below of=3] (8)  {$8$};
    \node [right of=8] (7)  {$7$};
    \node [right of=7] (9)  {$9$};

    \path (1) edge[cfgedge] (2);
    \path (2) edge[cfgedge] (3);
    \path (3) edge[cfgedge] (4);
    \path (4) edge[cfgedge] (5);
    \path (4) edge[cfgedge, bend right] (6);
    \path (6.north) edge[cfgedge, bend right] (3.north);
    \path (5) edge[cfgedge, bend right] (4);
    \path (5.north) edge[cfgedge, bend right] (3);%

    \path (3) edge[cfgedge] (7.north);
    \path (2) edge[cfgedge] (8);

    \path (7) edge[cfgedge] (8);
    \path (8) edge[cfgedge, bend left] (7);
    \path (7) edge[cfgedge] (9);
  \end{tikzpicture}
  \vspace{-1ex}
  \caption{Control-flow graph $G_1$}
\label{fig:ex}
\label{fig:cfg1}
\vspace{-5ex}
\end{figure}

Suppose we are tasked with proving assertions at program points $4$ and $9$ of
the control-flow graph $G_1(V, \cfgarrow)$ in \pref{fig:cfg1}. Current
approaches (\pref{sec:Bourdoncle}) allocate abstract values for each program
point during fixpoint computation, check the assertions at $4$ and $9$ after
fixpoint computation, and then deallocate all abstract values. In contrast, our
approach deallocates abstract values and checks the assertions during fixpoint
computation while guaranteeing that the results of the checks remain the same
and that the peak-memory usage is optimal.

We prove that our approach deallocates abstract values as soon as they are no
longer needed during fixpoint computation. Providing this theoretical guarantee
is challenging for arbitrary irreducible graphs such as $G_1$. For example,
assuming that node $8$ is analyzed after $3$, one might think that the fixpoint
iterator can deallocate the abstract value at $2$ once it analyzes $8$. However,
$8$ is part of the strongly-connected component $\{7, 8\}$, and the fixpoint
iterator might need to iterate over node $8$ multiple times. Thus, deallocating
the abstract value at $2$ when node $8$ is first analyzed will lead to incorrect
results. In this case, the earliest that the abstract value at $2$ can be
deallocated is after the stabilization of component $\{7, 8\}$.

Furthermore, we prove that our approach performs the assertion checks as early
as possible during fixpoint computation. Once the assertions are checked, the
associated abstract values are deallocated. For example, consider the assertion
check at node $4$. Notice that $4$ is part of the strongly-connected components
$\{4,5 \}$ and $\{3,4,5,6 \}$. Checking the assertion the first time node $4$ is
analyzed could lead to an incorrect result because the abstract value at $4$ has
not converged. The earliest that the check at node $4$ can be executed is after
the convergence of the component $\{3,4,5,6\}$. Apart from being able to
deallocate abstract values earlier, early assertion checks provide partial
results on timeout.

The key theoretical result (\pref{thm:full}) is that our iteration strategy is
memory-optimal (i.e., it results in minimum memory footprint) while computing
the same result as Bourdoncle's approach. Furthermore, we present an
almost-linear time algorithm to compute this optimal iteration
strategy~(\pref{sec:Efficient}).

We have implemented this memory-optimal fixpoint computation in a tool called
$\mikos$~(\pref{sec:Implementation}), which extends the state-of-the-art
abstract interpreter for C/C++, $\ikos$~\cite{ikos2014}. We compared the memory
efficiency of $\mikos$ and $\ikos$ on the following tasks:
\begin{itemize}
  \item[T1] Verifying user-provided assertions. Task~T1 represents the
  program-verification client of a fixpoint computation. We performed
  interprocedural analysis of \svcfiltered{} SV-COMP 2019
  benchmarks~\cite{svcomp} using reduced product of Difference Bound Matrix with
  variable packing~\cite{GangeNSSS16:VMCAI2016} and
  congruence~\cite{Granger:IJCM1989} domains.
  \item[T2] Proving absence of buffer overflows. Task~T2 represents the
  bug-finding and compiler-optimization client of fixpoint computation. In the
  context of bug finding, a potential buffer overflow can be reported to the
  user as a potential bug. In the context of compiler optimization, code to
  check buffer-access safety can be elided if the buffer access is verified to
  be safe. We performed interprocedural buffer overflow analysis of
  \ossfiltered{} open-source programs using the interval abstract domain.
\end{itemize}
\noindent On Task~T1, $\mikos$ shows a decrease in peak-memory usage to
$\svcgeomeanpercentage$\% ($\svcgeomeantimes\times$) on average compared to
$\ikos$. For instance, peak-memory required to analyze the SV-COMP 2019
benchmark \texttt{ldv-3.16-rc1/205\_9a-net-rtl8187} decreased from 46 GB to 56
\emph{MB}. Also, while \texttt{ldv-3.14/usb-mxl111sf} spaced out in $\ikos$ with
64 GB memory limit, peak-memory usage was 21 GB for $\mikos$. On Task~T2,
$\mikos$ shows a decrease in peak-memory usage to $\ossgeomeanpercentage$\%
($\ossgeomeantimes\times$) on average compared to $\ikos$. For instance,
peak-memory required to analyze a benchmark \texttt{ssh-keygen} decreased from
30 GB to 1 GB.

The contributions of the paper are as follows:
\begin{itemize}
  \item A memory-optimal technique for Bourdoncle's recursive iteration strategy
  that does not sacrifice precision or time efficiency~(\pref{sec:Algorithm}).
  \item An almost-linear time algorithm to construct our memory-efficient
  iteration strategy~(\pref{sec:Efficient}).
  \item $\mikos$, an interprocedural implementation of our
  approach~(\pref{sec:Implementation}).
  \item An empirical evaluation of the efficacy of $\mikos$ using a large set of
  C benchmarks~(\pref{sec:evaluation}).
\end{itemize}

\noindent \pref{sec:Preliminaries} presents necessary background on fixpoint
computation, including Bourdoncle's approach; \pref{sec:Related} presents
related work; \pref{sec:Conclusion} concludes.

\section{Fixpoint Computation Preliminaries}
\label{sec:Preliminaries}

This section presents background on fixpoint computation that will allow us to
clearly state the problem addressed in this paper~(\pref{sec:ProblemStatement}).
This section is not meant to capture all possible approaches to implementing
abstract interpretation. However, it does capture the relevant high-level
structure of abstract-interpretation implementations such as
IKOS~\cite{ikos2014}.

Consider an equation system $\Phi$ whose dependency graph is $G(V, \cfgarrow)$.
The graph $G$ typically reflects the control-flow graph of the program, though
this is not always true. The aim is to find the fixpoint of the
equation system~$\Phi$:
\begin{align}
  \label{eq:FixpointEquation}
  \pre[v] &=  \bigsqcup\left\{\post[p] \mid p \cfgarrow v\right\}&v\in V \\
  \post[v] &= \tau_v(\pre[v])&v\in V \nonumber
\end{align}
The maps $\pre \colon V \to \AbsDomain$ and $\post \colon V \to \AbsDomain$
maintain the abstract values at the beginning and end of each program point,
where $\AbsDomain$ is an abstract domain. The abstract transformer $\tau_v
\colon \AbsDomain \to \AbsDomain$ overapproximates the semantics of program
point $v \in V$. After fixpoint computation, $\pre[v]$ is an invariant for $v
\in V$.

Client applications of the abstract interpreter typically query these fixpoint
values to perform assertion checks, program optimizations, or report bugs. Let
$\checkvertices \subseteq V$ be the set of program points where such checks
are performed, and let $\fpcheck_v \colon \AbsDomain \to bool$ represent the
corresponding functions that performs the check for each $v \in \checkvertices$.
To simplify presentation, we assume that the check function merely returns
\texttt{true} or \texttt{false}. Thus, after fixpoint computation, the client
application computes $\fpcheck_v(\pre[v])$ for each $v \in \checkvertices$.

The exact least solution of the system \pref{eq:FixpointEquation} can be
computed using Kleene iteration provided $\AbsDomain$ is Noetherian. However,
most interesting abstract domains require the use of \emph{widening} ($\widen$)
to ensure termination followed by \emph{narrowing} to improve the post solution.
In this paper, we use ``fixpoint'' to refer to such an approximation of the
least fixpoint. Furthermore, for simplicity of presentation, we restrict our
description to a simple widening strategy. However, our implementation
(\pref{sec:Implementation}) uses more sophisticated widening and narrowing
strategies implemented in state-of-the-art abstract interpreters
\cite{ikos2014,amatoSAS2013}.

An \emph{iteration strategy} specifies the order in which the individual
equations are applied, where widening is used, and how convergence of the
equation system is checked. For clarity of exposition, we introduce a
\emph{Fixpoint Machine (\FM)} consisting of an imperative set of instructions.
An $\FM$ program represents a particular iteration strategy used for fixpoint
computation.
The syntax of Fixpoint Machine programs is defined by the following
grammar:
\begin{align}
  \label{eq:FixpointMachineSyntax}
  \fmProg ::= \fmexec\ v
                \OR  \fmrepeat\ v\ \texttt{[}\fmProg\texttt{]}
                \OR \fmProg \fmseq \fmProg \ \ \ \ , v \in V
\end{align}
Informally, the instruction $\fmexec\ v$ applies $\tau_v$ for $v \in V$; the
instruction $\fmrepeat\ v\ \texttt{[}P_1 \texttt{]} $ repeatedly executes the
$\FM$ program $P_1$ until convergence and performs widening at $v$; and the
instruction $P_1 \fmseq P_2$ executes $\FM$ programs $P_1$ and $P_2$ in
sequence.

The syntax (\pref{eq:FixpointMachineSyntax}) and semantics
(\pref{fig:FixpointMachineSemantics}) of the Fixpoint Machine are sufficient to
express Bourdoncle's recursive iteration strategy~(\pref{sec:Bourdoncle}), a
widely-used approach for fixpoint computation~\cite{bourdoncle1993efficient}. We
also extend the notion of iteration strategy to perform memory management of the
abstract values as well as perform checks during fixpoint
computation~(\pref{sec:MemoryManagement}).

\subsection{Bourdoncle's Recursive Iteration Strategy}
\label{sec:Bourdoncle}

In this section, we review Bourdoncle's recursive iteration
strategy~\cite{bourdoncle1993efficient} and show how to generate the
corresponding $\FM$ program.

Bourdoncle's iteration strategy relies on the notion of \emph{weak topological
ordering~(WTO)} of a directed graph $G(V, \cfgarrow)$. A WTO is defined using
the notion of a \emph{hierarchical total ordering (HTO)} of a set.

\begin{definition}
  \label{def:BourdoncleHTO}
  A \emph{hierarchical total ordering} $\hto$ of a set $S$ is a well
  parenthesized permutation of $S$ without two consecutive ``(''. \qef
\end{definition}
An HTO $\hto$ is a string over the alphabet $S$ augmented with left and right
parenthesis. Alternatively, we can denote an HTO $\hto$ by the tuple $(S,
\preceq, \omega)$, where $\preceq$ is the total order induced by $\hto$ over the
elements of $S$ and $\omega \colon V \to 2^V$. The elements between two matching
parentheses are called a \emph{component}, and the first element of a component
is called the \emph{head}. Given $l \in S$, $\omega(l)$ is the set of heads of
the components containing $l$. We use $\components \colon V \to 2^V$ to denote the
mapping from a head to its component.

\begin{example}
  \label{exa:HTO}
  Let $V = \{1, 2, 3, 4, 5, 6, 7, 8, 9\}$. An example HTO $\hto_1(V, \preceq,
  \omega)$ is\\
  $\mathtt{1\ 2}$ $\mathtt{(3\ (4\ 5)\ 6)\ (7\ 8)\ 9}$. $\omega(3) = \{3\}$,
  $\omega(5) = \{3, 4\}$, and $\omega(1) = \emptyset$. It has components
  $\components(4) = \{4, 5\}$, $\components(7) = \{7, 8\}$ and $\components(3) =
  \{3, 6\} \cup \components(4)$.
  \qef
\end{example}

A weak topological ordering (WTO) $\wto$ of a directed graph $G(V, \cfgarrow)$
is an HTO $\hto(V, \preceq, \omega)$ satisfying certain constraints listed
below:

\begin{definition}
  \label{def:BourdoncleWTO}
  A \emph{weak topological ordering} $\wto(V, \preceq, \omega)$ of a
  directed graph $G(V, \cfgarrow)$ is an HTO $\hto(V, \preceq, \omega)$ such
  that for every edge $u \rightarrow v$, either (i)~$u \prec v$, or (ii)~$v \preceq u$
  and $v \in \omega(u)$. \qef
\end{definition}

\begin{example}
  \label{exa:WTO}
  HTO $\hto_1$ in \pref{exa:HTO} is a WTO $\wto_1$ of the graph $G_1$
  (\pref{fig:cfg1}).
  \qef
\end{example}

Given a directed graph $G(V, \cfgarrow)$ that represents the dependency graph of
the equation system, Bourdoncle's approach uses a WTO $\wto(V, \preceq, \omega)$
of $G$ to derive the following \emph{recursive iteration strategy}:
\begin{itemize}
  \item The total order $\preceq$ determines the order in which the equations
  are applied. The equation after a component is applied only after the component
  stabilizes.
  \item The stabilization of a component $\components(h)$ is determined by
  checking the stabilization of the head $h$.
  \item Widening is performed at each of the heads.
\end{itemize}
We now show how the WTO can be represented using the syntax of our Fixpoint
Machine ($\FM$) defined in \pref{eq:FixpointMachineSyntax}.
The following function $\genprog \colon \text{WTO} \to \fmProg$ maps a given WTO
$\wto$ to an $\FM$ program:
\vspace{-1.2ex}
\begin{equation}
  \label{eq:genprog}
  \genprog(\wto) :=
  \begin{cases}
   \fmrepeat\ v\ \texttt{[} \genprog(\wto') \texttt{]} &\text{if } \wto = (v\ \wto') \\
   \genprog(\wto_1) \fmseq \genprog(\wto_2) &\text{if } \wto = \wto_1\ \wto_2 \\
   \fmexec\ v &\text{if } \wto = v
  \end{cases}
\end{equation}
Each node $v \in V$ is mapped to a single $\FM$ instruction by $\genprog$; we
use $\instr[v]$ to refer to this $\FM$ instruction corresponding to $v$. Note
that if $v\in V$ is a head, then $\instr[v]$ is an instruction of the form
$\fmrepeat\ v\ \texttt{[}\ldots \texttt{]}$, else $\instr[v]$ is $\fmexec\ v$.

\begin{example}
  \label{exa:map}
  The WTO $\wto_1$ of graph $G_1$ (\pref{fig:cfg1}) is $\mathtt{1\ 2\
  \mathcolor{red}{(}3\ \mathcolor{blue}{(}4\ 5\mathcolor{blue}{)}\
  6\mathcolor{red}{)}\ \mathcolor{ForestGreen}{(}7\ 8\mathcolor{ForestGreen}{)}\
  9}$.
  The corresponding $\FM$ program is $P_1 = \genprog(\wto_1) = \fmexec\ 1 \fmseq
  \fmexec\ 2 \fmseq\ \fmrepeat\ 3\ \mathcolor{red}{\texttt{[}}\fmrepeat\ 4\
  \mathcolor{blue}{\texttt{[}}\fmexec\ 5\mathcolor{blue}{\texttt{]}} \fmseq
  \fmexec\ 6\mathcolor{red}{\texttt{]}} \fmseq \fmrepeat\ 7\
  \mathcolor{ForestGreen}{\texttt{[}}\fmexec\
  8\mathcolor{ForestGreen}{\texttt{]}} \fmseq \fmexec\ 9$.
  The colors used for brackets and parentheses are to more clearly indicate the
  correspondence between the WTO and the $\FM$ program.
  Note that $\instr[1] = \fmexec\ 1$, and $\instr[4] = \fmrepeat\ 4
  \mathcolor{blue}{\texttt{ [}}\fmexec\ 5\mathcolor{blue}{\texttt{]}}$.
  \qef
\end{example}
Ignoring the text in \highlightmath{\text{gray}}, the semantics of the $\FM$
instructions shown in \pref{fig:FixpointMachineSemantics} capture Bourdoncle's
recursive iteration strategy. The semantics are parameterized by the graph $G(V,
\cfgarrow)$ and a WTO $\wto(V, \preceq, \omega)$.

\begin{figure}[t]
  \begin{equation*}
  \begin{array}{l l}
    G(V, \cfgarrow),& \text{WTO } \wto(V, \preceq, \omega),\\
    \highlightmath{V_C \subseteq V},& \highlightmath{\text{ memory configuration }\memconfig (\dpost, \achk, \dpostl, \dprel)}
  \end{array}
  \vspace{-2ex}
\end{equation*}
  \begin{align*}
    \llbracket \texttt{exec}\ \redv \rrbracket_{\highlightmath{\memconfig}} {\color{maroon}\eqdef}\ &
    \begin{aligned}[t]
        & \pre[\redv] \gets \bigsqcup\{\post[p] \mid p \cfgarrow \redv\} \\
        & \highlightmath{  \foreachKw\ u \in V \colon \redv = \dpost[u] \Rightarrow \deallocpost{u}  } \\
        & \post[\redv] \gets \tau_{\redv}(\pre[\redv]) \\
        &\highlightmath{\begin{aligned}
        & \redv \notin \checkvertices \Rightarrow \deallocpre{\redv} \\
        & \foreachKw\ u \in \checkvertices \colon \redv = \achk[u]\Rightarrow
                                \begin{aligned}[t]&\checkmap[u] \gets \fpcheck_{u}(\pre[u]); \\ &\deallocpre{u} \end{aligned}
        \end{aligned}}%
    \end{aligned}
    \\
    \llbracket \texttt{repeat}\ \redv\ \texttt{[}P \texttt{]} \rrbracket_{\highlightmath{\memconfig}} {\color{maroon}\eqdef} &
    \left.
    \begin{aligned}
         & tpre \gets \bigsqcup\{\post[p] \mid p \cfgarrow \redv \wedge \redv \notin \omega(p)\}
    \end{aligned}
    \vphantom{\frac{\frac{1}{2}}{3}}\bm{\right\}}\text{\scriptsize Preamble}
    \\ &
    \left.
    \begin{aligned}
            & \doKw \textbf{ \{} \\
            &\highlightmath{\begin{aligned}
            & \quad   \foreachKw\ u \in V \colon \redv \in \dpostl[u] \Rightarrow \deallocpost{u} \\
            & \quad   \foreachKw\ u \in \checkvertices \colon \redv \in \dprel[u] \Rightarrow \deallocpre{u}
            \end{aligned}}%
            \\
            & \quad \pre[\redv], \post[\redv] \gets tpre, \tau_{\redv}(tpre) \\
            & \quad \llbracket P \rrbracket_{\highlightmath{\memconfig}} \\
            & \quad tpre \gets \pre[\redv] \widen \bigsqcup\{\post[p] \mid p \cfgarrow v\} \\
            & \textbf{\} } \whileKw (tpre \not\sqsubseteq \pre[\redv])
    \end{aligned}
    \right\}\text{\scriptsize Loop}
    \\ &
    \left.
    \highlightmath{
    \begin{aligned}
        & \foreachKw\ u \in V \colon \redv = \dpost[u] \Rightarrow \deallocpost{u} \\
        & \redv \notin \checkvertices \Rightarrow \deallocpre{\redv} \\
        & \foreachKw\ u \in \checkvertices \colon \redv = \achk[u]\Rightarrow
                  \begin{aligned}[t]&\checkmap[u] \gets \fpcheck_{u}(\pre[u]); \\ &\deallocpre{u} \end{aligned}
    \end{aligned}
    }%
    \right\}\text{\scriptsize Postamble}
    \\
    \llbracket P_1 \fmseq P_2 \rrbracket_{\highlightmath{\memconfig}} {\color{maroon}\eqdef} &
      \begin{aligned}[t]
        & \llbracket P_1 \rrbracket_{\highlightmath{\memconfig}} \\
        & \llbracket P_2 \rrbracket_{\highlightmath{\memconfig}}%
      \end{aligned}
  \end{align*}
  \vspace{-3ex}
  \caption{The semantics of the Fixpoint Machine ($\FM$) instructions of \pref{eq:FixpointMachineSyntax}.}
  \label{fig:FixpointMachineSemantics}
\end{figure}
\vspace{-2ex}

\subsection{Memory Management during Fixpoint Computation}
\label{sec:MemoryManagement}

In this paper, we extend the notion of iteration strategy to indicate when
abstract values are deallocated and when checks are executed. The
\highlightmath{\text{gray}} text in \pref{fig:FixpointMachineSemantics} shows
the semantics of the $\FM$ instructions that handle these issues. The right-hand
side of $\Rightarrow$ is executed if the left-hand side evaluates to true.
Recall that the set $V_C \subseteq V$ is the set of program points that have
assertion checks. The map $\checkmap \colon V_C \to \texttt{bool}$ records the
result of executing the check $\fpcheck_u(\pre[u])$ for each $u \in V_C$. Thus,
the \emph{output of the $\FM$ program} is the map $\checkmap$. In practice, the
functions $\fpcheck_u$ are expensive to compute. Furthermore, they often write
the result to a database or report the output to a user. Consequently, we assume
that only the first execution of $\fpcheck_u$ is recorded in $\checkmap$.

The \emph{memory configuration} $\memconfig$ is a tuple $(\dpost, \achk,
\dpostl, \dprel)$ where
\begin{itemize}
  \item The map $\dpost \colon V \to V$ controls the deallocation of values in
  $\post$ that have no further use. If $v = \dpost[u]$, $\post[u]$ is
  deallocated after the execution of $\instr[v]$.
  \item The map $\achk \colon V_C \to V$ controls when the check function
  $\fpcheck_u$ corresponding to $u \in V_C$ is executed, after which the
  corresponding $\pre$ value is deallocated. If $\achk[u] = v$, assertions in
  $u$ are checked and $\pre[u]$ is subsequently deallocated after the execution
  of $\instr[v]$.
  \item The map $\dpostl \colon V \to 2^V$ control deallocation of $\post$ values
  that are recomputed and overwritten in the loop of a $\fmrepeat$ instruction
  before its next use. If $v \in \dpostl[u]$, $\post[u]$ is deallocated in the
  loop of $\instr[v]$.
  \item The map $\dprel \colon \checkvertices \to 2^V$ control deallocation of $\pre$
   values that recomputed and overwritten in the loop of a $\fmrepeat$
   instruction before its next use. If $v \in \dprel[u]$, $\pre[u]$ is
   deallocated in the loop of $\instr[v]$.
\end{itemize}

To simplify presentation, the semantics in \pref{fig:FixpointMachineSemantics}
does not make explicit the allocations of abstract values: if a $\post$ or
$\pre$ value that has been deallocated is accessed, then it is allocated and
initialized to $\bot$.

\subsection{Problem Statement}
\label{sec:ProblemStatement}
Two memory configurations are \emph{equivalent} if they result in the same
values for each check in the program:
\begin{definition}
  Given an $\FM$ program $P$, memory configuration $\memconfig_1$ is
  \emph{equivalent to} $\memconfig_2$, denoted by $\llbracket P
  \rrbracket_{\memconfig_1} = \llbracket P \rrbracket_{\memconfig_2}$, iff for
  all $u \in V_C$, we have $\checkmap_1[u] = \checkmap_2[u]$, where
  $\checkmap_1$ and $\checkmap_2$ are the check maps corresponding to execution
  of $P$ using $\memconfig_1$ and $\memconfig_2$, respectively.
  \qef
\end{definition}

The \emph{default memory configuration $\memconfigdef$} performs checks and
deallocations at the end of the $\FM$ program after fixpoint has been computed.
\begin{definition}
  Given an $\FM$ program $P$, the \emph{default memory configuration}
  $\memconfigdef$ $(\dpostdef, \achkdef, \dpostldef, \dpreldef)$ is
  $\dpostdef[v] = z$ for all $v \in V$, $\achkdef[c] = z$ for all $c \in
  \checkvertices$, and $\dpostldef = \dpreldef = \emptyset$, where $z$ is the
  last instruction in $P$.
  \qef
\end{definition}

\begin{example}
  Consider the $\FM$ program $P_1$ from \pref{exa:map}. Let $V_C = \{4, 9\}$.
  $\dpostdef[v] = 9$ for all $v \in V$. That is, all $\post$ values are
  deallocated at the end of the fixpoint computation. Also, $\achkdef[4] =
  \achkdef[9] = 9$, meaning that assertion checks also happen at the end.
  $\dpostldef = \dpreldef = \emptyset$, so the $\FM$ program does not clear
  abstract values whose values will be recomputed and overwritten in a loop of
  $\fmrepeat$ instruction. \qef
\end{example}

Given an $\FM$ program $P$, a memory configuration $\memconfig$ is \emph{valid}
for $P$ iff it is equivalent to the default configuration; i.e., $\llbracket P
\rrbracket_{\memconfig} = \llbracket P \rrbracket_{\memconfigdef}$.

Furthermore, a valid memory configuration $\memconfig$ is \emph{optimal} for a
given $\FM$ program iff memory footprint of $\llbracket P
\rrbracket_{\memconfig}$ is smaller than or equal to that of $\llbracket P
\rrbracket_{\memconfig'}$ for all valid memory configuration $\memconfig'$.
The problem addressed in this paper can be stated as:
\begin{mdframed}[backgroundcolor=light-gray] Given an $\FM$ program $P$, find an
  optimal memory configuration~$\memconfig$.
\end{mdframed}

An optimal configuration should deallocate abstract values during fixpoint
computation as soon they are no longer needed. The challenge is ensuring that
the memory configuration remains valid even without knowing the number of loop
iterations for $\fmrepeat$ instructions. \pref{sec:Algorithm} gives the optimal
memory configuration for the $\FM$ program $P_1$ from \pref{exa:map}.

\section{Declarative Specification of Optimal Memory Configuration $\memconfigopt$}
\label{sec:Algorithm}

This section provides a declarative specification of an optimal memory
configuration $\memconfigopt$($\dpostopt$, $\achkopt$, $\dpostlopt$,
$\dprelopt$). The proofs of the theorems in this section can be found in
\pref{app:appendix-proof}. \pref{sec:Efficient} presents an efficient algorithm
for computing $\memconfigopt$.

\begin{definition}
  \label{def:NestingRelation}
  Given a WTO $\wto(V, \preceq, \omega)$ of a graph $G(V, \cfgarrow)$, the
  \emph{nesting relation $\nestingrelation$} is a tuple $(V, \nesting)$ where $x
  \nesting y$ iff $x = y$ or $y \in \omega(x)$ for $x, y \in V$.\qef
\end{definition}
Let $\postset{v}_{\nesting} \eqdef \{ w\in V \mid v\ \nesting\ w\}$; that is,
$\postset{v}_{\nesting}$ equals the set containing $v$ and the heads of
components in the WTO that contain $v$.  The nesting relation
$\nestingrelation(V, \nesting)$ is a \emph{forest}; i.e. a partial order such that
for all $v\in V$,
$(\postset{v}_{\nesting}, \nesting)$ is a chain (\pref{thm:forest}, \pref{app:appendix-order}).

\begin{example}
  \label{exa:nesting}
  For the WTO $\wto_1$ of $G_1$ in \pref{exa:WTO}, $\nestingrelation_1(V,
  \nesting)$ is:
\raisebox{-20pt}{
  \begin{tikzpicture}[auto,node distance=.8cm,font=\small]
    \tikzstyle{every node} = [inner sep=0pt,minimum size=2ex, node distance=0.5cm]
    \node (1) {$1$};
    \node [right of=1, node distance=0.35cm] (2)  {$2$};
    \node [right of=2, node distance=0.35cm] (3)  {$3$};
    \node [right of=3, node distance=.8cm]   (7)  {$7$};
    \node [right of=7, node distance=0.35cm] (9)  {$9$};
    \node [below of=3] (4)  {$4$};
    \node [right of=4, node distance=0.35cm] (6)  {$6$};
    \node [below of=4] (5)  {$5$};
    \node [below of=7] (8)  {$8$};

    \path (3) edge (4);
    \path (4) edge (5);
    \path (3) edge (6);
    \path (7) edge (8);
  \end{tikzpicture}
}.
 Note that $\postset{5}_{\nesting} = \{5, 4, 3\}$, forming a chain $5 \nesting 4
  \nesting 3$.
  \qef
\end{example}

\subsection{Declarative Specification of $\dpostopt$}

$\dpostopt[u] = v$ implies that $v$ is the earliest instruction at
which $\post[u]$ can be deallocated while ensuring that there are no
subsequents reads of $\post[u]$ during fixpoint computation.
We cannot conclude $\dpostopt[u] = v$ from a dependency $u \cfgarrow v$ as
illustrated in the following example.

\begin{example}
\label{exa:nl}
Consider the $\FM$ program $P_1$ from \pref{exa:map}, whose graph $G_1(V,
\cfgarrow)$ is in \pref{fig:cfg1}. Although $2 \cfgarrow 8$, memory
configuration with $\dpost[2] = 8$ is not valid: $\post[2]$ is read by
$\instr[8]$, which is executed repeatedly as part of $\instr[7]$; if $\dpost[2]
= 8$, $\post[2]$ is deallocated the first time $\instr[8]$ is executed, and
subsequent executions of $\instr[8]$ will read $\bot$ as the value of
$\post[2]$.
\qef
\end{example}

In general, for a dependency $u \cfgarrow v$, we must find the head of maximal
component that contains $v$ but not $u$ as the candidate for $\dpostopt[u]$. By
choosing the head of \emph{maximal} component, we remove the possibility of
having a larger component whose head's $\fmrepeat$ instruction can execute
$\instr[v]$ after deallocating $\post[u]$. If there is no component that
contains $v$ but not $u$, we simply use $v$ as the candidate.
The following $\foo$ operator gives us the candidate of $\dpostopt[u]$ for
$u \cfgarrow v$:
\vspace{-1.2ex}
\begin{equation}
  \label{eq:Lift}
  \foo(u,v) \eqdef \text{max}_{\nesting}((\postset{v}_{\nesting}
  \setminus \postset{u}_{\nesting}) \cup \{v\})
\end{equation}
$\postset{v}_{\nesting}$ gives us $v$ and the heads of components that
contain~$v$. Subtracting $\postset{u}_{\nesting}$ removes the heads of
components that also contain~$u$. We put back $v$ to account for the case when
there is no component containing $v$ but not $u$ and $\postset{v}_{\nesting}
\setminus \postset{u}_{\nesting}$ is empty. Because $\nestingrelation(V,
\nesting)$ is a forest, $\postset{v}_{\nesting}$ and $\postset{u}_{\nesting}$
are chains, and hence, $\postset{v}_{\nesting} \setminus \postset{u}_{\nesting}$
is also a chain. Therefore, maximum is well-defined.

\begin{example}
  Consider the nesting relation $\nestingrelation_1(V, \nesting)$ from \pref{exa:nesting}.
  $\foo(2, 8)$\\$ = \text{max}_{\nesting}((\{8, 7\} \setminus \{2\}) \cup \{8\}) =
  7$. We see that $7$ is the head of the maximal component containing $8$ but
  not $2$.
  Also, $\foo(5, 4) = \text{max}_{\nesting}((\{4, 3\} \setminus \{5, 4, 3\})
  \cup \{4\}) = 4$. There is no component that contains $4$ but not $5$.
  \qef
\end{example}

For each instruction $u$, we now need to find the last instruction from among
the candidates computed using $\foo$. Notice that deallocations of $\post$
values are at a postamble of $\fmrepeat$ instructions in
\pref{fig:FixpointMachineSemantics}. Therefore, we cannot use the total order
$\preceq$ of a WTO to find the last instruction: $\preceq$ is the order in which
the instruction begin executing, or the order in which \emph{preamble}s are
executed.
\begin{example}
\label{exa:to}
Let $\dpost_{to}[u] \eqdef \text{max}_{\preceq}\{\foo(u, v) \mid u \cfgarrow
v\}, u \in V$, an incorrect variant of $\dpostopt$ that uses the total order
$\preceq$. Consider the $\FM$ program $P_1$ from \pref{exa:map}, whose graph
$G_1(V, \cfgarrow)$ is in \pref{fig:cfg1} and nesting relation
$\nestingrelation_1(V, \nesting)$ is in \pref{exa:nesting}. $\post[5]$ has
dependencies $5 \cfgarrow 4$ and $5 \cfgarrow 3$. $\foo(5, 4) = 4$, $\foo(5, 3)
= 3$. Now, $\dpost_{to}[5] = 4$ because $3 \preceq 4$. However, a memory
configuration with $\dpost[5] = 4$ is not valid: $\instr[4]$ is nested in
$\instr[3]$. Due to the deletion of $\post[5]$ in $\instr[4]$, $\instr[3]$ will
read $\bot$ as the value of $\post[5]$.
\qef
\end{example}

To find the order in which the instructions finish executing, or the order in
which \emph{postamble}s are executed, we define the relation $(V, \luo)$, using
the total order $(V, \preceq)$ and the nesting relation $(V, \nesting)$:
\vspace{-1.2ex}
\begin{equation}
  \label{eq:TotalOrder}
  x \luo y \eqdef x \nesting y \vee (y \not\nesting x \wedge x \preceq y)
  \vspace{-1.2ex}
\end{equation}
In the definition of $\luo$, the nesting relation $\nesting$ takes precedence
over $\preceq$. $(V, \luo)$ is a total order (\pref{thm:total},
\pref{app:appendix-order}). Intuitively, the total order $\luo$ moves the heads
in the WTO to their corresponding closing parentheses `)'.

\begin{example}
  \label{exa:luo}
  For $G_1$ (\pref{fig:cfg1}) and its WTO $\wto_1$, $\mathtt{1\ 2\ (3\ (4\ 5)\
  6)\ (7\ 8)\ 9}$, we have $1 \luo 2 \luo 5 \luo 4 \luo 6 \luo 3 \luo 8 \luo 7
  \luo 9$.
  Note that $3 \preceq 6$ while $6 \luo 3$. Postamble of $\fmrepeat\ 3\
  \texttt{[}\ldots\texttt{]}$ is executed after $\instr[6]$, while preamble of
  $\fmrepeat\ 3\ \texttt{[}\ldots\texttt{]}$ is executed before $\instr[6]$.
  \qef
\end{example}

We can now define $\dpostopt$. Given a nesting relation $\nestingrelation(V,
\nesting)$ for the graph $G(V, \cfgarrow)$, $\dpostopt$ is defined as:
\vspace{-1.2ex}
\begin{equation}
  \label{eq:dpostopt}
  \dpostopt[u] \eqdef \text{max}_{\luo}\{\foo(u, v) \mid u \cfgarrow v\}
  \ \ \ \ , u \in V
\end{equation}

\begin{example}
  \label{exa:dpostopt}
  Consider the $\FM$ program $P_1$ from \pref{exa:map}, whose graph $G_1(V,
  \cfgarrow)$ is in \pref{fig:cfg1} and nesting relation $\nestingrelation_1(V,
  \nesting)$ is in \pref{exa:nesting}.
  An optimal memory configuration $\memconfigopt$ defined by \pref{eq:dpostopt}
  is:
  \vspace{-1ex}
  \begin{equation*}
  \begin{aligned}
  \dpostopt[1] &= 2,\ \dpostopt[2] = \dpostopt[3] = \dpostopt[8] = 7,\ \dpostopt[4] = 6,\\
  \dpostopt[5] &=  \dpostopt[6] = 3,\ \dpostopt[7] = \dpostopt[9] = 9.
  \end{aligned}
  \end{equation*}

  Successors of $u$ are first lifted to compute $\dpostopt[u]$. For example, to
  compute $\dpostopt[2]$, $2$'s successors, $3$ and $8$, are lifted to $\foo(2,
  3) = 3$ and $\foo(2, 8) = 7$. To compute $\dpostopt[5]$, $5$'s successors, $3$
  and $4$, are lifted to $\foo(5, 3) = 3$ and $\foo(5, 4) = 4$. Then, the
  maximum (as per the total order $\luo$) of the lifted successors is chosen as
  $\dpostopt[u]$. Because $3 \luo 7$, $\dpostopt[2] = 7$. Thus, $\post[2]$ is
  deleted in $\instr[7]$. Also, because $4 \luo 3$, $\dpostopt[5] = 3$, and
  $\post[5]$ is deleted in $\instr[3]$.
  \qef
\end{example}

\subsection{Declarative Specification of $\achkopt$}

$\achkopt[u] = v$ implies that $v$ is the earliest instruction at which the
assertion check at $u \in \checkvertices$ can be executed so that the invariant
passed to the assertion check function $\fpcheck_u$ is the same as when using
$\memconfigdef$. Thus, guaranteeing the same check result $\checkmap$.

Because an instruction can be executed multiple times in a loop, we cannot
simply execute the assertion checks right after the instruction, as illustrated
by the following example.
\begin{example}
  \label{exa:achkoptwrong}
  Consider the $\FM$ program $P_1$ from \pref{exa:map}. Let $V_C = \{4,
  9\}$. A memory configuration with $\achk[4] = 4$ is not valid: $\instr[4]$ is
  executed repeatedly as part of $\instr[3]$, and the first value of $\pre[4]$
  may not be the final invariant.
  Consequently, executing $\fpcheck_4(\pre[4])$ in $\instr[4]$ may not give the
  same result as executing it in $\instr[9]$ ($\achkdef[4] = 9$).
  \qef
\end{example}

In general, because we cannot know the number of iterations of the loop in a
$\fmrepeat$ instruction, we must wait for the convergence of the maximal
component that contains the assertion check. After the maximal component
converges, the $\FM$ program never visits the component again, making $\pre$
values of the elements inside the component final. Only if the element is not in
any component can its assertion check be executed right after its instruction.

Given a nesting relation $\nestingrelation (V, \nesting)$ for the graph $G(V,
\cfgarrow)$, $\achkopt$ is defined as:
\vspace{-1.2ex}
\begin{equation}
  \label{eq:achkopt}
  \achkopt[u] \eqdef \text{max}_{\nesting}\postset{u}_{\nesting}
  \ \ \ \ , u \in \checkvertices
\end{equation}
Because $\nestingrelation (V,\nesting)$ is a forest, $(\postset{u}_{\nesting},
\nesting)$ is a chain. Hence, $\text{max}_{\nesting}$ is well-defined.

\begin{example}
  \label{exa:achkopt}
  Consider the $\FM$ program $P_1$ from \pref{exa:map}, whose graph $G_1(V,
  \cfgarrow)$ is in \pref{fig:cfg1} and nesting relation $\nestingrelation_1(V,
  \nesting)$ is in \pref{exa:nesting}. Suppose that $\checkvertices = \{4, 9\}$.
  $\achkopt[4] = \text{max}_{\nesting}\{4, 3\} = 3$ and $\achkopt[9] =
  \text{max}_{\nesting}\{9\} = 9$.
  \qef
\end{example}

\subsection{Declarative Specification of $\dpostlopt$}

$v \in \dpostl[u]$ implies that $\post[u]$ can be deallocated at $v$ because it
is recomputed and overwritten in the loop of a $\fmrepeat$ instruction before a
subsequent use of $\post[u]$. 

$\dpostlopt[u]$ must be a subset of $\postset{u}_{\nesting}$: only the
instructions of the heads of components that contain $v$ recompute $\post[u]$.
We can further rule out the instruction of the heads of components that contain
$\dpostopt[u]$, because $\instr[\dpostopt[u]]$ deletes $\post[u]$.
We add back $\dpostopt[u]$ to $\dpostlopt$ when $u$ is contained in
$\dpostopt[u]$, because deallocation by $\dpostopt$ happens after the
deallocation by $\dpostlopt$.

Given a nesting relation $\nestingrelation (V, \nesting)$ for the graph $G(V,
\cfgarrow)$, $\dpostlopt$ is defined as:
\vspace{-1.2ex}
\begin{equation}
  \label{eq:dpostlopt}
  \dpostlopt[u] \eqdef (\postset{u}_{\nesting} \setminus \postset{d}_{\nesting}) \cup \mathbbold{\lp} u \nesting d\ \mathbbold{\qm}\ \{d\}\ \mathbbold{\cl}\ \emptyset \mathbbold{\rp}
  \ \ \ \ , u \in V
\end{equation}
where $d = \dpostopt[u]$ as defined in \pref{eq:dpostopt}, and $\mathbbold{\lp}
\texttt{b}\ \mathbbold{\qm}\ \texttt{x}\ \mathbbold{\cl}\ \texttt{y}
\mathbbold{\rp}$ is the ternary conditional choice operator.

\begin{example}
  Consider the $\FM$ program $P_1$ from \pref{exa:map}, whose graph $G_1(V,
  \cfgarrow)$ is in \pref{fig:cfg1}, nesting relation $\nestingrelation_1(V,
  \nesting)$ is in \pref{exa:nesting}, and $\dpostopt$ is in
  \pref{exa:dpostopt}.
  \vspace{-1ex}
  \begin{equation*}
  \begin{aligned}
  \dpostlopt[1] &= \{1\},\ \dpostlopt[2] = \{2\},\ \dpostlopt[3] = \{3\},\\
  \dpostlopt[4] &= \{4\},\ \dpostlopt[5] = \{3, 4, 5\},\ \dpostlopt[6] = \{3, 6\},\\
  \dpostlopt[7] &= \{7\},\ \dpostlopt[8] = \{7, 8\},\ \dpostlopt[9] = \{9\}.
  \end{aligned}
  \end{equation*}

  For $7$, $\dpostopt[7] = 9$. Because $7 \not\nesting 9$, $\dpostlopt[7] =
  \postset{7}_{\nesting} \setminus \postset{9}_{\nesting} = \{7\}$. Therefore,
  $\post[7]$ is deleted in each iteration of the loop of $\instr[7]$.
  While $\instr[9]$ reads $\post[7]$ in the future, the particular values of
  $\post[7]$ that are deleted by $\dpostlopt[7]$ are not used in $\instr[9]$.
  For $5$, $\dpostopt[5] = 3$. Because $5 \nesting 3$, $\dpostlopt[5]$ =
  $\postset{5}_{\nesting} \setminus \postset{3}_{\nesting} \cup \{3\}$ = $\{5,
  4, 3\}$.
   \qef
\end{example}

\subsection{Declarative Specification of $\dprelopt$}

$v \in \dprel[u]$ implies that $\pre[u]$ can be deallocated at $v$ because it is
recomputed and overwritten in the loop of a $\fmrepeat$ instruction before a
subsequent use of $\pre[u]$. 

$\dprelopt[u]$ must be a subset of $\postset{u}_{\nesting}$: only the
instructions of the heads of components that contain $v$ recompute $\pre[u]$. If
$\instr[u]$ is a $\fmrepeat$ instruction, $\pre[u]$ is required to perform
widening. Therefore, $u$ must not be contained in $\dprelopt[u]$.
\begin{example}
  \label{exa:dpreloptwrong}
  Consider the $\FM$ program $P_1$ from \pref{exa:map}. Let $V_C = \{4,
  9\}$. A memory configuration with $\dprel[4] = \{3, 4\}$ is not valid, because
  $\instr[4]$ would read $\bot$ as the value of $\post[4]$ when performing
  widening.
  \qef
\end{example}

Given a nesting relation $\nestingrelation (V, \nesting)$ for the graph $G(V,
\cfgarrow)$, $\dprelopt$ is defined as:
\vspace{-1.2ex}
\begin{equation}
  \label{eq:dprelopt}
  \dprelopt[u] \eqdef \postset{u}_{\nesting} \setminus \{u\}
  \ \ \ \ , u \in \checkvertices
\end{equation}
\begin{example}
  Consider the $\FM$ program $P_1$ from \pref{exa:map}, whose graph $G_1(V,
  \cfgarrow)$ is in \pref{fig:cfg1} and nesting relation $\nestingrelation_1(V,
  \nesting)$ is in \pref{exa:nesting}. Let $\checkvertices = \{4, 9\}$.
  $\dprelopt[4] = \{4, 3\} \setminus \{4\} = \{3\}$ and $\dprelopt[9] = \{9\}
  \setminus \{9\} = \emptyset$. Therefore, $\pre[4]$ is deleted in each loop
  iteration of $\instr[3]$.
  \qef
\end{example}

The following theorem is proved in \pref{app:appendix-memopt}:
\begin{restatable}{theorem}{ThmMemopt}
  \label{thm:full}
  The memory configuration \emph{$\memconfigopt$($\dpostopt$, $\achkopt$,
  $\dpostlopt$, $\dprelopt$)} is optimal.
\end{restatable}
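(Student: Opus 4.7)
The plan is to split the proof into two parts, \emph{validity} (that $\memconfigopt$ is equivalent to $\memconfigdef$) and \emph{minimality} (that every other valid configuration uses at least as much memory), and to further decompose each part along the four coordinates $\dpostopt$, $\achkopt$, $\dpostlopt$, $\dprelopt$, arguing about one coordinate at a time with the remaining three fixed. Since peak-memory footprint is bounded by the pointwise-in-time count of live abstract values, minimality is cleanest to phrase as a pointwise inclusion: at every step of the $\FM$ execution, the set of currently allocated $\pre$ and $\post$ values under $\memconfigopt$ is contained in the corresponding set under any other valid $\memconfig$. Pointwise containment then implies peak dominance.

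Validity hinges on a key lemma about $\foo$: for any dependency $u \cfgarrow v$, once $\instr[\foo(u,v)]$ finishes, no subsequent read of $\post[u]$ can occur without $\instr[u]$ first being re-executed by some surrounding $\fmrepeat$. I would prove it by case analysis on whether $v \in \postset{u}_{\nesting}$: either $\foo(u,v)$ is the head of the maximal component that contains $v$ but not $u$, so exiting its postamble under $\luo$ precludes further access to $v$ before $u$ is looped through; or $\foo(u,v) = v$ and the read is a one-shot. Taking $\max_{\luo}$ across successors in \pref{eq:dpostopt} then pins down the last postamble at which $\post[u]$ is still needed, giving validity of $\dpostopt$. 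Validity of $\achkopt$ follows because $\max_{\nesting}\postset{u}_{\nesting}$ identifies the head of the maximal enclosing component, after whose postamble $\pre[u]$ is frozen; validity of $\dpostlopt$ and $\dprelopt$ is immediate, because every head they list overwrites the corresponding slot before any subsequent read.

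For minimality, the argument is adversarial: for each coordinate, I would exhibit, for every configuration that deallocates strictly earlier or checks strictly earlier than $\memconfigopt$, a concrete read (or check) whose outcome diverges from the default. For $\dpost$, the witness is the successor $v$ achieving the maximum in \pref{eq:dpostopt}: $\instr[v]$ reads $\post[u]$ inside the loop whose head is $\foo(u,v)$, and iteration of that loop drags the last such read all the way to its postamble, exactly matching $\dpostopt[u]$; deallocating earlier makes some such read return $\bot$. For $\achkopt$, prematurely checking inside the maximal enclosing component can return a pre-stabilization value, as in \pref{exa:achkoptwrong}. For $\dprelopt$, placing any head outside the chain $\postset{u}_{\nesting} \setminus \{u\}$ either violates the widening requirement (as in \pref{exa:dpreloptwrong}) or is too late and therefore no better; an analogous pointwise argument covers $\dpostlopt$.

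The main obstacle I expect is the $\dpost$ minimality step. Because $\luo$ interleaves postambles of nested components in a non-obvious way (see \pref{exa:luo}), making precise the claim that the witness read of $\post[u]$ survives exactly until the postamble of $\foo(u,v)$ requires a delicate induction on nesting depth. I would structure it as a structural induction on the $\FM$ program produced by $\genprog$, maintaining an invariant that links the set of live $\post$ values after each subprogram to the heads of the components currently unstabilized, and letting the chain structure of $\postset{\cdot}_{\nesting}$ guaranteed by \pref{thm:forest} absorb most of the combinatorics.
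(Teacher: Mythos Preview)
Your decomposition---validity plus minimality, each split across the four maps---is exactly the paper's structure: eight sub-theorems (\pref{thm:dpostvalid}--\pref{thm:dprelopt}), one validity and one optimality lemma per coordinate, glued together via the splice operator $\lightning$ so that each map can be argued in isolation. Your validity sketches (the $\foo$ lemma, the maximal-component argument for $\achkopt$, the overwrite argument for the loop-level maps) match the paper's proofs closely.

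Where you diverge is in the $\dpostopt$ minimality step. You flag it as the main obstacle and propose a structural induction on the $\FM$ program produced by $\genprog$, tracking live $\post$ values against unstabilized heads. The paper avoids this entirely: its proof of \pref{thm:dpostopt} is a half-page direct contradiction using only the order-theoretic properties of $\luo$ and $\nesting$. Concretely, if some $b \luoneq d = \dpostopt[u]$ were a valid earlier deallocation point, then every successor $v$ of $u$ would have to satisfy $v \luo b$. A two-case split on whether $u \nesting d$ then finishes it: if $u \nesting d$ then $u \cfgarrow d$ itself forces $d \luo b$, a contradiction; otherwise $d = \foo(u,v)$ for some edge $u \cfgarrow v$ with $v \nestingneq d$, and the constraints $v \luo b \luoneq d$ pin $b$ strictly inside $\components(d)$, so a single failed stabilization check of $d$ re-executes $\instr[v]$ after $\post[u]$ is gone. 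No induction, no tracking invariant---the total order $\luo$ (your \pref{exa:luo}) already encodes the postamble ordering you were planning to reconstruct inductively. Your route would work, but the paper's is considerably shorter.
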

\section{Efficient Algorithm to Compute $\memconfigopt$}
\label{sec:Efficient}

\SetKwProg{Def}{def}{:}{end} \SetKwFunction{DKw}{D} \SetKwFunction{BKw}{B}
\SetKwFunction{CFKw}{CF} \SetKwFunction{nesttr}{P} \SetKwFunction{RKw}{R}
\SetKwFunction{TKw}{T} \SetKwFunction{repKw}{rep} \SetKwFunction{mergeKw}{merge}
\SetKw{inKw}{in} \SetKw{suchKw}{such} \SetKw{thatKw}{that}
\SetKw{thereKw}{there} \SetKw{stKw}{s.t.} \SetKw{existsKw}{exists}
\SetKw{andKw}{and} \SetKw{continueKw}{continue} \SetKw{returnKw}{return}
\SetKw{questionKw}{?} \SetKw{colonKw}{:}
\SetKwFunction{rmCFKw}{removeAllCrossFwdEdges}
\SetKwFunction{genFMKw}{generate$\FM$Instruction}
\SetKwFunction{rsCFKw}{restoreCrossFwdEdges}
\SetKwFunction{maxSCCKw}{connect$\FM$Instructions}
\SetKwFunction{FindNestedSCCsKw}{findNestedSCCs}
\SetKwFunction{DepthFirstForestKw}{DepthFirstForest}

\begin{figure}
  \begin{minipage}[t]{0.96\textwidth}
  \SetInd{.3em}{0.5em}
  \vspace{0pt}
  \begin{algorithm}[H]
    \small
    \DontPrintSemicolon
    \setlength{\columnsep}{15pt}
    \Indmm
    \KwIn{Directed graph $G(V, \cfgarrow)$}
    \KwOut{$\FM$ program $pgm$,
           $\memconfigopt(\dpostopt, \achkopt, \dpostlopt, \dprelopt)$}

    \Indpp
    \begin{multicols}{2}
    \small
    \tikzmk{A}
    $\DKw \coloneqq \DepthFirstForestKw(G)$\;
    $\cfgarrow_{\BKw} \coloneqq \textrm{back edges in $\DKw$}$\label{li:wto-B}\;
    $\cfgarrow_{\CFKw} \coloneqq \textrm{cross \& forward edges in $\DKw$}$\label{li:wto-CF}\;
    $\cfgarrow' \coloneqq \cfgarrow \setminus \cfgarrow_{\BKw}$\;
    \lFor{$v \in V$}{
      $\repKw(v) \coloneqq v$; $\RKw[v] \coloneqq \emptyset$\label{li:wto-init}
    }
    $\nesttr \coloneqq \emptyset$\;
    \rmCFKw{}\label{li:wto-rm}\;
    \For{$\etch \in V$ \inKw \text{descending} $\texttt{DFN}_{\DKw}$}{\label{li:wto-loop}
      \rsCFKw{$\etch$}\label{li:wto-rs}\;
      \genFMKw{$\etch$}\label{li:wto-scc}
    }
    $pgm \coloneqq \maxSCCKw{}$\label{li:wto-mscc}\;
    \returnKw{$pgm$, $\memconfigopt$}\;
    \BlankLine
    \tikzmk{B} \boxit{gray}{.94}
    \tikzmk{A}
    \Def{\rmCFKw{}}{
    \tikzmk{B} \boxit{cyan}{.89}
      \For{$(u, v) \in \cfgarrow_{\CFKw}$}{
        $\cfgarrow' \coloneqq \cfgarrow' \setminus \{ (u,v) \}$\label{li:wto-rmcf}\;
        \Comment*[l]{Lowest common ancestor.}
        $\RKw[\texttt{lca}_{\DKw}(u,v)] \coloneqq \RKw[\texttt{lca}_{\DKw}(u,v)] \cup \{(u,v)\}$\label{li:wto-lca}\;
      }
    }
    \tikzmk{A}
    \Def{\rsCFKw{$\etch$}}{
    \tikzmk{B} \boxit{cyan}{.89}
      $\cfgarrow' \coloneqq \cfgarrow' \cup \{(u, \repKw(v)) \mid (u, v) \in \RKw[\etch]\}$\label{li:wto-rscf}
    }
    \tikzmk{A}
    \Def{\FindNestedSCCsKw{$\etch$}}{
    \tikzmk{B} \boxit{cyan}{.89}
    $B_\etch \coloneqq \{ \repKw(p) \mid (p, \etch) \in \BKw\}$\label{li:wto-bp}\;
    $N_\etch \coloneqq \emptyset$\Comment*[r]{Nested SCCs except $\etch$.}%
    $W \coloneqq B_\etch \setminus \{ \etch \}$\Comment*[r]{Worklist.}
    \While{$\thereKw\ \existsKw\ v \in W$}{
      $W,\ N_\etch \coloneqq W \setminus \{v\},\ N_\etch \cup [v]$\label{li:wto-pop}\;
      \For{$u$ \stKw $u \cfgarrow' v$}{
        \If{$\repKw(u) \notin N_\etch \cup \{\etch \} \cup W$}{
          $W \coloneqq W \cup \{\repKw(u)\}$\label{li:wto-found}\;
        }
      }
    }
    \returnKw $N_\etch, B_\etch$
    }
    \tikzmk{A}
    \Def{\genFMKw{$\etch$}}{
      \tikzmk{B} \boxit{cyan}{.89}
      $N_\etch, B_\etch \coloneqq$ \FindNestedSCCsKw{$\etch$}\label{li:wto-find}\;
      \If{$B_\etch = \emptyset$}{
        $\instr[\etch] \coloneqq \fmexec\ \etch$\label{li:wto-exec}\;
        \returnKw \label{li:wto-ret}
      }
      \For{$v \in N_\etch$ \inKw \text{desc.} $\texttt{postDFN}_{\DKw}$}{\label{li:wto-lb}
        $\instr[\etch] \coloneqq \instr[\etch] \fmseq \instr[v]$\label{li:wto-repeatbody}\;
        {%
        \SetNlSty{textbf}{$\star$}{}
        \For{$u$ \stKw $u \cfgarrow' v$}{
          $\dpostopt[u] \coloneqq v$\label{li:wto-prec2}\;
          $\TKw[u] \coloneqq \repKw(u)$\label{li:wto-Tprec2}
        }
        }%
        \SetInd{.1em}{0.22em}
      }
      $\instr[\etch] \coloneqq \fmrepeat\ \etch \texttt{ [} \instr[\etch] \texttt{]}$\label{li:wto-repeat}\;
      {%
      \SetNlSty{textbf}{$\star$}{}
      \For{$u$ \stKw $u \cfgarrow_{\BKw} \etch$}{
        $\dpostopt[u] \coloneqq \TKw[u] \coloneqq \etch$\label{li:wto-nest}\label{li:wto-Tnest}
      }
      }%
      \For{$v \in N_\etch$}{
        \mergeKw{$v, \etch$}\label{li:wto-merge};
        $\nesttr \coloneqq \nesttr \cup \{(v, \etch)\}$\label{li:wto-nesttr}
      }
    }
    \tikzmk{A}
    \Def{\maxSCCKw{}}{
    \tikzmk{B} \boxit{cyan}{.89}
    $pgm := \epsilon$\Comment*[r]{Empty program.}
    \For{$v \in V$ \inKw desc. $\texttt{postDFN}_{\DKw}$}{\label{li:wto-lbb}
      \If{$\repKw(v) = v$} {
        $pgm := pgm \fmseq \instr[v]$\label{li:wto-final}\;
        {%
        \SetNlSty{textbf}{$\star$}{}
        \For{$u$ \stKw $u \cfgarrow' v$}{
          $\dpostopt[u] \coloneqq v$\label{li:wto-prec1}\;
          $\TKw[u] \coloneqq \repKw(u)$\label{li:wto-Tprec1}
        }
        }
      }
      {%
      \SetNlSty{textbf}{$\star$}{}
      \If{$v \in \checkvertices$}{
        $\achkopt[v] \coloneqq \repKw(v)$\label{li:wto-achk}\;
        $\dprelopt[v] \coloneqq \component{v}{\repKw(v)}_{\nesttr^*} \setminus \{v\}$\label{li:wto-dprel}
      }
      }%
      \SetInd{.1em}{0.22em}
    }
    {%
    \SetNlSty{textbf}{$\star$}{}
    \For{$v \in V$}{
      $\dpostlopt[v] \coloneqq \component{v}{T[v]}_{\nesttr^*}$\label{li:wto-dpostl}
    }
    }%
    \returnKw $pgm$
    }%
    \end{multicols}
    \vspace{-2ex}
    \caption{\algwto{}$(G)$}
    \label{alg:bu}
  \end{algorithm}
  \SetInd{.5em}{1em}
  \end{minipage}
  \vspace{-1ex}
\end{figure}

Algorithm \algwto{} (\pref{alg:bu}) is an almost-linear time algorithm for
computing an $\FM$ program $P$ and optimal memory configuration $\memconfigopt$
for a given directed graph $G(V, \cfgarrow)$. \pref{alg:bu} adapts the bottom-up
WTO construction algorithm presented in
Kim~et~al.~\cite{DBLP:journals/pacmpl/KimVT20}. In particular, \pref{alg:bu}
applies the $\genprog$ rules (\pref{eq:genprog}) to generate the $\FM$ program
from a WTO.
\pref{li:wto-exec} generates $\fmexec$ instructions for non-heads.
\pref{li:wto-repeat} generates $\fmrepeat$ instructions for heads, with their
bodies ([ ]) generated on \pref{li:wto-repeatbody}. Finally, instructions are
merged on \pref{li:wto-final} to construct the final output $P$.

Algorithm \algwto{} utilizes a disjoint-set data structure. Operation
$\repKw(v)$ returns the representative of the set that contains $v$. In
\pref{li:wto-init}, the sets are initialized to be $\repKw(v) = v$ for all $v
\in V$. Operation $\mergeKw(v,\etch)$ on \pref{li:wto-merge} merges the sets
containing $v$ and $\etch$, and assigns $\etch$ to be the representative for the
combined set.
$\texttt{lca}_{\DKw}(u,v)$ is the lowest common ancestor of $u, v$ in the
depth-first forest $D$~\cite{DBLP:journals/jacm/Tarjan79}. Cross and forward
edges are initially removed from $\cfgarrow'$ on \pref{li:wto-rm}, making the
graph $(V, \cfgarrow' \cup \cfgarrow_{\BKw})$ reducible. Restoring it on
\pref{li:wto-rs} when $\etch = \texttt{lca}_{\DKw}(u,v)$ restores some
reachability while keeping $(V, \cfgarrow' \cup \cfgarrow_{\BKw})$ reducible.

Lines indicated by $\star$ in \pref{alg:bu} compute $\memconfigopt$. Lines
\ref{li:wto-prec2}, \ref{li:wto-nest}, and \ref{li:wto-prec1} compute
$\dpostopt$. Due to the specific order in which the algorithm traverses $G$,
$\dpostopt[u]$ is overwritten with greater values (as per the total order
$\luo$) on these lines, making the final value to be the maximum among the
successors. $\foo$ is implicitly applied when restoring the edges in \rsCFKw:
edge $u \cfgarrow v$ whose $\foo(u, v) = \etch$ is replaced to $u \cfgarrow'
\etch$ on \pref{li:wto-rs}.

$\dpostlopt$ is computed using an auxiliary map $\TKw \colon V \to V$ and a relation
$\nesttr \colon V \times V$. At the end of the algorithm, $\TKw[u]$ will be the
maximum element (as per $\nesting$) in $\dpostlopt[u]$. That is, $\TKw[u] =
\text{max}_{\nesting}((\postset{u}_{\nesting} \setminus \postset{d}_{\nesting})
\cup \mathbbold{\lp} u \nesting d\ \mathbbold{\qm}\ \{d\}\ \mathbbold{\cl}\
\emptyset \mathbbold{\rp})$, where $d = \dpostopt[u]$. Once $\TKw[u]$ is
computed by lines \ref{li:wto-Tprec2}, \ref{li:wto-Tnest}, and
\ref{li:wto-Tprec1}, the transitive reduction of $\nesting$, $\nesttr$, is used
to find all elements of $\dpostlopt[u]$ on \pref{li:wto-dpostl}. $\nesttr$ is
computed on \pref{li:wto-nesttr}. Note that $\nesttr^* = \nesting$ and
$\component{x}{y}_{\nesttr^*} \eqdef \{v \mid x\ \nesttr^* v \wedge v\ \nesttr^*
y \}$. $\achk$ and $\dprel$ are computed on Lines~\ref{li:wto-achk} and
\ref{li:wto-dprel}, respectively.

\begin{example}
\label{exa:effex}
Consider the graph $G_1$ (\pref{fig:cfg1}).
Labels of vertices indicate a depth-first numbering (\texttt{DFN}) of $G_1$.
The graph edges are classified into tree, back, cross, and forward edges using
the corresponding depth-first forest~\cite{CLRS}. Cross and forward edges of
$G_1$, $\cfgarrow_{\CFKw} = \{(2,8)\}$, are removed on \pref{li:wto-rm}. Because
$\texttt{lca}_{\DKw}(2,8) = 2$, the removed edge $(2,8)$ will be restored in
\pref{li:wto-rs} when $\etch = 2$.
It is restored as $(2,7)$, because the disjoint set $\{8\}$ would have already
been merged with $\{7\}$ on \pref{li:wto-merge} when $\etch = 7$, making
$\repKw(8)$ to be $7$ when $\etch = 2$.

The for-loop on \pref{li:wto-loop} visits nodes in $V$ in a descending
\texttt{DFN}: from $9$ to $1$. Calling \genFMKw{$\etch$} on \pref{li:wto-scc}
generates $\instr[\etch]$, an $\FM$ instruction for $\etch$. When $\etch = 9$,
because the SCC whose entry is $9$ is trivial, $\fmexec\ 9$ is generated in
\pref{li:wto-exec}. When $\etch = 3$, the SCC whose entry is $3$ is non-trivial,
with the entries of its nested SCCs, $N_\etch = \{4, 6\}$. These entries are
visited in a topological order (descending \texttt{postDFN}), $4, 6$, and their
instructions are connected on \pref{li:wto-repeatbody} to generate $\fmrepeat\
3\ \texttt{[}\instr[4] \fmseq \instr[6]\texttt{]}$ on \pref{li:wto-repeat}.
Visiting the nodes in a descending \texttt{DFN} guarantees the instruction of
nested SCCs to be present, and removing the cross and forward edges ensures each
SCC to have a single entry.
\pref{tab:effex} shows some relevant steps and values within \genFMKw.

Finally, calling \maxSCCKw on \pref{li:wto-mscc} connects the instructions of
entries of outermost SCCs, which is detected by the boolean expression
$\repKw(v) = v$, in a topological order (descending \texttt{postDFN}) to
generate the final $\FM$ program. For the given example, it visits the nodes in
the order of $1$, $2$, $3$, $7$, and $9$, correctly generating the $\FM$ program
on \pref{li:wto-final}.

Due to $2 \cfgarrow' 3$ and $2 \cfgarrow' 7$, $\dpostopt[2]$ is set to $3$ and
then to $7$ on \pref{li:wto-prec1}. Due to $5 \cfgarrow_{\BKw} 4$ and $5
\cfgarrow_{\BKw} 3$, $\dpostopt[5]$ is set to $4$ and then to $3$ in
\pref{li:wto-nest}. $\achkopt[4]$ is set to 3, as $\repKw(4) = 3$ in
\pref{li:wto-achk}. $\TKw[7]$ is set to $2$ on \pref{li:wto-Tprec1}, and
$\dpostlopt[7]$ is set to $\{2\}$ on \pref{li:wto-dpostl}. $\TKw[5]$ is set to
$4$ and then to $3$ on \pref{li:wto-Tnest}, making $\dpostlopt[5]$ to be $\{3,
4, 5\}$. Because $\repKw(4) = 3$, $\dprelopt[4]$ is set to $\{3\}$ in
\pref{li:wto-dprel}.
\qef
\end{example}

\begin{table}[t]
  \caption{Relevant steps and values within \algwto{} when applied to graph
  $G_1$ of \pref{exa:effex}}
  \label{tab:effex}
  \let\center\empty
  \let\endcenter\relax
  \centering
  \small
\begin{tabular}{l|l|l}
  Major iteration & $\etch = 4$ & $\etch = 3$ \\
\midrule
  \pref{li:wto-repeatbody} & $\instr[5]$ & $\instr[4] \fmseq \instr[6]$ \\
\hline
  \pref{li:wto-repeat} & $\fmrepeat\ 4\ \texttt{[}\fmexec\ 5\texttt{]}$ & $\fmrepeat\ 3\ \texttt{[}\fmrepeat\ 4\ \texttt{[}\fmexec\ 5\texttt{]} \fmseq \fmexec\ 6 \texttt{]}$ \\
\hline
  \pref{li:wto-prec2} & $\dpostopt[4] = 5$ & $\dpostopt[4] = 6$, $\dpostopt[3] = 4$ \\
\hline
  \pref{li:wto-Tprec2} & $\TKw[4] = 4$ & $\TKw[4] = 4$, $\TKw[3] = 3$ \\
\hline
  \pref{li:wto-nest} & $\dpostopt[5] = \TKw[5] = 4$ & $\dpostopt[6] = \TKw[6] = \dpostopt[5] = \TKw[5] = 3$ \\
\hline
  \pref{li:wto-merge} & Sets $\{4\}$, $\{5\}$ merged. & Sets $\{3\}$, $\{4,5\}$, $\{6\}$ merged.\\
\end{tabular}
  \vspace{-1ex}
\end{table}

The proofs of the following theorems are in \pref{app:appendix-alg}:
\begin{restatable}{theorem}{ThmAlgCorrect}
  \label{thm:algcorrect}
  \algwto{} correctly computes $\memconfigopt$, defined in \pref{sec:Algorithm}.
\end{restatable}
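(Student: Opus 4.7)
The plan is to factor the argument into two layers. First, I will invoke the correctness of the underlying bottom-up WTO construction of Kim et al.~\cite{DBLP:journals/pacmpl/KimVT20}: after stripping the $\star$-marked lines, \algwto{} reduces verbatim to that construction, so the un-starred lines correctly produce a WTO $\wto(V,\preceq,\omega)$ of $G$ and apply the $\genprog$ rules of \pref{eq:genprog} to emit the corresponding $\FM$ program $pgm$. In particular I will record three invariants of the main loop that I need later: (i)~at the moment \genFMKw{$\etch$} is executed, $\etch$ is the head of an SCC of $(V,\cfgarrow'\cup\cfgarrow_{\BKw})$ whose nested SCCs have already been collapsed by \mergeKw, so that $\repKw(v)$ always equals the head of the smallest WTO-component currently containing $v$; (ii)~the transitive reduction relation $\nesttr$ built on \pref{li:wto-nesttr} satisfies $\nesttr^{\,*}=\nesting$, and $\component{x}{y}_{\nesttr^{\,*}}$ equals $\postset{x}_{\nesting}\setminus\postset{y'}_{\nesting}$ where $y'$ is the parent of $y$ in $\nesttr$; and (iii)~when a cross/forward edge $(u,v)$ with $\texttt{lca}_{\DKw}(u,v)=\etch$ is restored on \pref{li:wto-rs}, it is restored as $(u,\repKw(v))$, and by (i) $\repKw(v)$ at that moment equals $\foo(u,v)$.

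Next I will verify each of the four $\star$-computed components against its declarative specification in \pref{sec:Algorithm}. For $\dpostopt$: every successor edge $u\cfgarrow v$ is ``seen'' exactly once, at which point $\dpostopt[u]$ is overwritten with a candidate head. A back edge $u\cfgarrow_{\BKw}\etch$ contributes $\etch=\foo(u,\etch)$ on \pref{li:wto-nest}; a tree edge $u\cfgarrow' v$ that survives until \pref{li:wto-prec2} or \pref{li:wto-prec1} contributes $v=\foo(u,v)$; and by invariant (iii) a removed cross/forward edge contributes $\foo(u,v)$ when restored. The key combinatorial step is that the overall write order on $\dpostopt[u]$ is monotone in $\luo$: inner heads are processed first (descending \texttt{DFN} in the outer loop, descending \texttt{postDFN} within a head's body, and likewise in \maxSCCKw), and the definition~(\ref{eq:TotalOrder}) of $\luo$ is exactly the postamble-completion order. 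Hence the final value of $\dpostopt[u]$ equals $\text{max}_{\luo}\{\foo(u,v)\mid u\cfgarrow v\}$, matching~(\ref{eq:dpostopt}).

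For $\achkopt$: on \pref{li:wto-achk} we assign $\achkopt[v]=\repKw(v)$ only once $v$ is visited inside \maxSCCKw, which happens after every \genFMKw{$\etch$} with $v\nesting\etch$ has merged $v$ into $\etch$'s set. Hence $\repKw(v)$ is the $\nesting$-maximal head above $v$, i.e.~$\text{max}_{\nesting}\postset{v}_{\nesting}$, matching~(\ref{eq:achkopt}). For $\dpostlopt$: I will show by induction on the nesting depth that after the last write to $\TKw[u]$, $\TKw[u]$ equals $\text{max}_{\nesting}$ of the target set in~(\ref{eq:dpostlopt}). Specifically, the writes on Lines~\ref{li:wto-Tprec2}, \ref{li:wto-Tnest}, \ref{li:wto-Tprec1} track $\dpostopt[u]$'s writes one-for-one, setting $\TKw[u]$ to $\repKw(u)$ (the innermost head above $u$ in the current collapsing) when $\dpostopt[u]$ gets the candidate $v$; a short case split on whether $u\nesting d$ (where $d=\dpostopt[u]$) shows this is precisely the maximum of $(\postset{u}_{\nesting}\setminus\postset{d}_{\nesting})\cup \mathbbold{\lp} u \nesting d\,\mathbbold{\qm}\,\{d\}\,\mathbbold{\cl}\,\emptyset\mathbbold{\rp}$. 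Then \pref{li:wto-dpostl} enumerates the whole chain below $\TKw[u]$ in the forest $\nesttr^{\,*}$, which by invariant (ii) is exactly $\dpostlopt[u]$. Finally, $\dprelopt$ follows from \pref{li:wto-dprel}: $\repKw(v)$ at that point is the outermost head above $v$ (the same value used for $\achkopt[v]$), so $\component{v}{\repKw(v)}_{\nesttr^{\,*}}\setminus\{v\}=\postset{v}_{\nesting}\setminus\{v\}$, matching~(\ref{eq:dprelopt}).

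The main obstacle will be step (iii) together with the monotonicity argument for $\dpostopt$: one must show that the sequence of candidate heads written to $\dpostopt[u]$ is increasing in the \emph{postamble-completion} order $\luo$, not in the preamble order $\preceq$ used by the WTO construction. This requires simultaneously reasoning about the outer loop (descending \texttt{DFN} over heads), the inner enumerations (descending \texttt{postDFN} within \genFMKw and \maxSCCKw), and the deferred restoration of cross/forward edges at the $\texttt{lca}_{\DKw}$; the cleanest way I see is to associate to each write of $\dpostopt[u]$ the $\nesting$-depth of the candidate and argue that deeper candidates are written strictly later, with ties broken by \texttt{postDFN}, and then to verify that this precisely coincides with $\luo$ via~(\ref{eq:TotalOrder}). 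Given invariants (i)--(iii) and \pref{thm:full}, the four matches above yield $\memconfigopt=(\dpostopt,\achkopt,\dpostlopt,\dprelopt)$ as required.
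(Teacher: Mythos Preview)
Your proposal is correct and follows essentially the same route as the paper's proof: both argue that the successive writes to $\dpostopt[u]$ are $\luo$-monotone (via the descending-DFN outer loop and descending-postDFN inner loops), that edge restoration via $\repKw$ realizes $\foo$, that $\nesttr^{*}=\nesting$, and that $\repKw(v)$ at the end yields $\text{max}_{\nesting}\postset{v}_{\nesting}$---your version simply makes the invariants (i)--(iii) more explicit than the paper does. One slip to fix in your final paragraph: inner heads are processed \emph{first}, so deeper (more nested) candidates are written \emph{earlier}, not later; the last write is the $\nesting$-shallowest candidate, which is $\luo$-maximal, as you correctly state earlier.
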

\begin{restatable}{theorem}{ThmAlgFast}
  \label{thm:algfast}
  Running time of \algwto{} is almost-linear.
\end{restatable}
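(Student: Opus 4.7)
The plan is to split \algwto{} into two layers: the underlying bottom-up WTO constructor (the lines without $\star$) and the $\star$-annotated lines that produce $\memconfigopt$. The first layer is a direct adaptation of the algorithm of Kim~et~al.~\cite{DBLP:journals/pacmpl/KimVT20}, whose almost-linear bound I would simply invoke. What remains is to bound the overhead introduced by the $\star$-lines and to argue that the auxiliary data structures $\nesttr$ and $\TKw$ are maintained without asymptotic penalty.

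First I would account for the skeleton. \DepthFirstForestKw, the edge classification on Lines~\ref{li:wto-B}--\ref{li:wto-CF}, and the initialization on Line~\ref{li:wto-init} all run in $O(|V|+|\cfgarrow|)$. The lowest-common-ancestor queries on Line~\ref{li:wto-lca} are handled in aggregate by Tarjan's offline LCA algorithm~\cite{DBLP:journals/jacm/Tarjan79}; together with the union-find operations used throughout $\genFMKw$ and $\maxSCCKw$ (namely, $\repKw$ and $\mergeKw$), they contribute $O((|V|+|\cfgarrow|)\cdot \alpha(|V|))$. Inside \FindNestedSCCsKw{} each vertex enters the worklist at most once per containing component and each edge is scanned at most once per such visit, and the $\repKw$ queries on these edges charge $O(\alpha(|V|))$ per edge-visit. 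Summed over all iterations of the for-loop on Line~\ref{li:wto-loop}, the skeleton remains almost-linear.

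Next I would bound the $\star$-lines. Lines~\ref{li:wto-prec2}, \ref{li:wto-Tprec2}, \ref{li:wto-nest}, \ref{li:wto-Tnest}, \ref{li:wto-prec1}, and \ref{li:wto-Tprec1} each do $O(1)$ work per edge processed by the enclosing loop; each edge $u\cfgarrow' v$ and each back edge $u\cfgarrow_{\BKw} \etch$ is processed a bounded number of times across all invocations of $\genFMKw$ and $\maxSCCKw$, so the total contribution is $O(|\cfgarrow|)$. Line~\ref{li:wto-achk} performs one $\repKw$ query per check vertex, giving $O(|\checkvertices|\cdot \alpha(|V|))$. Line~\ref{li:wto-nesttr} inserts one edge of $\nesttr$ per nested component merge; since $\nesttr$ is a forest on $V$, the total number of insertions is at most $|V|-1$.

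The only nontrivial cost is the materialization of the chains $\component{v}{\repKw(v)}_{\nesttr^*}$ on Line~\ref{li:wto-dprel} and $\component{v}{\TKw[v]}_{\nesttr^*}$ on Line~\ref{li:wto-dpostl}. Each chain is obtained by walking parent pointers in $\nesttr$ starting at $v$, taking time proportional to $|\dprelopt[v]|+1$ and $|\dpostlopt[v]|$ respectively. Hence the total work on these two lines is proportional to $\sum_{v\in \checkvertices}|\dprelopt[v]| + \sum_{v\in V}|\dpostlopt[v]|$, i.e., the size of the emitted portion of $\memconfigopt$. The main obstacle is precisely this: in adversarial deeply nested graphs the cumulative size of $\dpostlopt$ and $\dprelopt$ can exceed $|V|$, so strictly linear running time is impossible. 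I would therefore interpret almost-linearity in the standard sense of being almost-linear in the combined size of the input and the output, or equivalently argue that each $\dpostlopt[v]$ and $\dprelopt[v]$ can be represented implicitly by the endpoints $(v,\TKw[v])$ and $(v,\repKw(v))$ together with the parent pointers of $\nesttr$, emitting only $O(|V|)$ references and yielding the overall bound $O((|V|+|\cfgarrow|)\cdot \alpha(|V|))$ that the semantics of \pref{fig:FixpointMachineSemantics} can consume via membership tests without expansion.
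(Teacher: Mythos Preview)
Your proposal follows the same two-layer decomposition as the paper: invoke the almost-linear bound of Kim~et~al.~\cite{DBLP:journals/pacmpl/KimVT20} for the un-starred WTO skeleton, then bound the $\star$-lines separately. The paper's own argument is a three-sentence sketch asserting that ``the starred lines visit each edge and vertex once''; you are considerably more careful, and in particular you correctly flag that Lines~\ref{li:wto-dprel} and~\ref{li:wto-dpostl} materialize chains in $\nesttr$ whose cumulative length $\sum_v |\dpostlopt[v]|+\sum_{v\in\checkvertices}|\dprelopt[v]|$ can exceed $|V|$ on deeply nested graphs. The paper does not address this point; your two resolutions (output-sensitive almost-linearity, or storing only the endpoints $(v,\TKw[v])$ and $(v,\repKw(v))$ together with parent pointers of $\nesttr$) are both reasonable ways to close the gap, and the second is arguably what an implementation would do anyway.
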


\section{Implementation}
\label{sec:Implementation}

We have implemented our approach in a tool called $\mikos$, which extends NASA's
$\ikos$ \cite{ikos2014}, a WTO-based abstract-interpreter for C/C++. $\mikos$
inherits all abstract domains and widening-narrowing strategies from $\ikos$. It
includes the localized narrowing strategy~\cite{amatoSAS2013} that intertwines
the increasing and decreasing sequences.

\subsubsubsection{Abstract domains in $\ikos$.}
$\ikos$ uses the state-of-the-art implementations of abstract domains comparable
to those used in industrial abstract interpreters such as Astr\'ee. In
particular, $\ikos$ implements the interval abstract domain~\cite{kn:CC77} using
functional data-structures based on Patricia Trees~\cite{okasaki1998fast}.
Astr\'ee implements intervals using OCaml's map data structure that uses
balanced trees~\cite[Section 6.2]{DBLP:conf/birthday/BlanchetCCFMMMR02}. As
shown in \cite[Section 5]{okasaki1998fast}, the Patricia Trees used by $\ikos$
are more efficient when you have to merge data structures, which is required
often during abstract interpretation. Also, $\ikos$ uses memory-efficient
variable packing Difference Bound Matrix (DBM) relational abstract
domain~\cite{GangeNSSS16:VMCAI2016}, similar to the variable packing relational
domains employed by Astr\'ee~\cite[Section
3.3.2]{DBLP:journals/sigsoft/BertraneCCFMMR11}.

\subsubsubsection{Interprocedural analysis in $\ikos$.}
$\ikos$ implements context-sensitive interprocedural analysis by means of
dynamic inlining, much like the semantic expansion of function bodies in
Astr\'ee~\cite[Section 5]{DBLP:conf/esop/CousotCFMMMR05}: at a function call,
formal and actual parameters are matched, the callee is analyzed, and the return
value at the call site is updated after the callee returns; a function pointer
is resolved to a set of callees and the results for each call are joined;
$\ikos$ returns top for a callee when a cycle is found in this dynamic call
chain. To prevent running the entire interprocedural analysis again at the
assertion checking phase, invariants at exits of the callees are additionally
cached during the fixpoint computation.

\subsubsubsection{Interprocedural extension of $\mikos$.}
Although the description of our iteration strategy focused on intraprocedural
analysis, it can be extended to interprocedural analysis as follows. Suppose
there is a call to function \texttt{f1} from a basic block contained in
component $C$. Any checks in this call to \texttt{f1} must be deferred until we
know that the component $C$ has stabilized. Furthermore, if function \texttt{f1}
calls the function \texttt{f2}, then the checks in \texttt{f2} must also be
deferred until $C$ converges. In general, checks corresponding to a function
call $f$ must be deferred until the maximal component containing the call is
stabilized.

When the analysis of callee returns in $\mikos$, only \pre{} values for the
deferred checks remain. They are deallocated when the checks are performed or
when the component containing the call is reiterated.
\section{Experimental Evaluation}
\label{sec:evaluation}

The experiments in this section were designed to answer the following
questions:
\begin{itemize}%
  \item[] \RQ{0} \textbf{[Accuracy]} Does $\mikos$ (\pref{sec:Implementation})
  have the same analysis results as $\ikos$?
  \item[] \RQ{1} \textbf{[Memory footprint]} How does the memory footprint of
  $\mikos$ compare to that of \ikos?
  \item[]\RQ{2} \textbf{[Runtime]} How does the runtime of $\mikos$ compare to
  that of \ikos?
\end{itemize}

\subsubsubsection{Experimental setup}
All experiments were run on Amazon EC2 r5.2xlarge instances (64 GiB memory, 8
vCPUs, 4 physical cores), which use Intel Xeon Platinum 8175M processors.
Processors have L1, L2, and L3 caches of sizes 1.5 MiB (data: 0.75 MiB,
instruction: 0.75 MiB), 24 MiB, and 33 MiB, respectively. Linux kernel version
4.15.0-1051-aws was used, and gcc 7.4.0 was used to compile both $\mikos$ and
$\ikos$. Dedicated EC2 instances and BenchExec~\cite{Beyer2019} were used to
improve reliability of the results. Time and space limit were set to
\timelimit{} and \memlimit{}, respectively.
The experiments can be reproduced using \url{https://github.com/95616ARG/mikos_sas2020}.

\subsubsubsection{Benchmarks}
\label{sec:BenchmarkFilter}
We evaluated $\mikos$ on two tasks that represent different client applications
of abstract interpretation, each using different benchmarks described in
Sections~\ref{sec:T1} and \ref{sec:T2}. In both tasks, we excluded benchmarks
that did not complete in \emph{both} $\ikos$ and $\mikos$ given the time and
space budget.
There were no benchmarks for which $\ikos$ succeeded but $\mikos$ failed to complete.
Benchmarks for which $\ikos$ took less
than \filtertime{} were also excluded.
Measurements for benchmarks that took less than \filtertime{} are summarized in
\ifarxiv
\pref{app:appendix-eval}.
\else
our technical report.
\fi

\subsubsubsection{Metrics}
To answer RQ1, we define and use \emph{memory reduction ratio (MRR)}:
\vspace{-2ex}
\begin{equation}
  \label{eq:MRR}
  \text{MRR} \eqdef \text{Memory footprint of $\mikos$ } / \text{ Memory footprint of $\ikos$}
  \vspace{-1ex}
\end{equation}
The smaller the MRR, the greater reduction in peak-memory usage in $\mikos$. If
MRR is less than 1, $\mikos$ has smaller memory footprint than $\ikos$.

For RQ2, we report the \emph{speedup}, which is defined as below:
\vspace{-2ex}
\begin{equation}
  \label{eq:speedup}
  \text{Speedup} \eqdef \text{Runtime of $\ikos$ } / \text{ Runtime of $\mikos$}
  \vspace{-1ex}
\end{equation}
The larger the speedup, the greater reduction in runtime in $\mikos$. If speedup
is greater than 1, $\mikos$ is faster than $\ikos$.

\subsubsection{RQ0: Accuracy of $\mikos$}
As a sanity check for our theoretical results, we experimentally validated
\pref{thm:full} by comparing the analysis results reported by $\ikos$ and
$\mikos$. $\mikos$ used a valid memory configuration, reporting the same
analysis results as $\ikos$. Recall that \pref{thm:full} also proves
that the fixpoint computation in $\mikos$ is memory-optimal (, it results
in minimum memory footprint).

\subsection{Task T1: Verifying user-provided assertions}
\label{sec:T1}
\begin{figure}[t]
  \centering
  \begin{subfigure}[t]{0.48\textwidth}
    \centering
    \includegraphics[width=\textwidth]{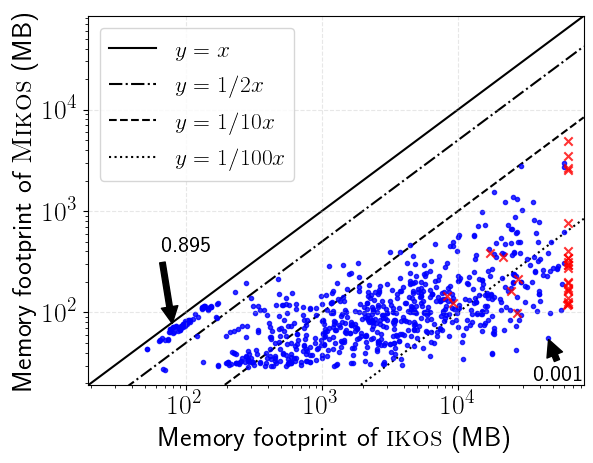}
    \caption{Min MRR: \svcminmem{}. Max MRR: \svcmaxmem{}. Geometric means:
    (i)~\svcgeomeanok{} (when \noncomplete{}s are ignored), (ii)~\svcgeomean{}
    (when measurements until timeout/spaceout are used for \noncomplete{}s).
    \svcfilterednok{} non-completions in $\ikos$.}
    \label{fig:mscat-svc}
  \end{subfigure}
  \hfill
  \begin{subfigure}[t]{0.48\textwidth}
    \centering
    \includegraphics[width=\textwidth]{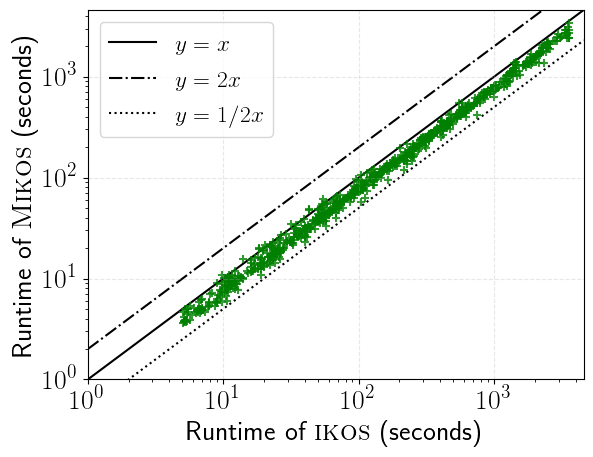}
    \caption{Min speedup: \svcspeedupmin{}. Max speedup: \svcspeedupmax{}.
      Geometric mean: \svcspeedupavg{}. Note that \noncomplete{}s are ignored as
      they space out fast in $\ikos$ compared to in $\mikos$ where they
      complete.}
    \label{fig:tscat-svc}
  \end{subfigure}
  \vspace{-1.3ex}
  \caption{\textbf{Task T1.} Log-log scatter plots of \protect\subref{fig:mscat-svc}
  memory footprint and \protect\subref{fig:tscat-svc} runtime of $\ikos$ and
  $\mikos$, with an hour timeout and \memlimit{} spaceout. Benchmarks that did
  not complete in $\ikos$ are marked \textcolor{red}{$\times$}. All
  \textcolor{red}{$\times$}s completed in $\mikos$. Benchmarks below $y = x$
  required less memory or runtime in $\mikos$.}
  \label{fig:svcplot}
  \vspace{-1ex}
\end{figure}

\subsubsubsection{Benchmarks}
For Task T1, we selected all \svctotal{} benchmarks from DeviceDriversLinux64,
ControlFlow, and Loops categories of SV-COMP 2019 \cite{svcomp}. These
categories are well suited for numerical analysis, and have been used in recent
works~\cite{DBLP:conf/cav/SinghPV18,DBLP:journals/pacmpl/SinghPV18,DBLP:journals/pacmpl/KimVT20}. From
these benchmarks, we removed \svcmikosto{} benchmarks that timed out in both
$\mikos$ and $\ikos$, and \svcikosfive{} benchmarks that took less than 5
seconds in $\ikos$. That left us with \textbf{\svcfiltered{}} SV-COMP 2019
benchmarks.

\subsubsubsection{Abstract domain}
Task T1 used the reduced product of Difference Bound Matrix (DBM) with variable
packing~\cite{GangeNSSS16:VMCAI2016} and congruence~\cite{Granger:IJCM1989}.
This domain is much richer and more expressive than the interval domain used in
task T2.

\subsubsubsection{Task}
Task T1 consists of using the results of interprocedural fixpoint computation to
prove user-provided assertions in the SV-COMP benchmarks. Each benchmark
typically has one assertion to prove.

\ifarxiv
\begin{figure}[t]
  \centering
  \begin{subfigure}[b]{0.49\textwidth}
    \centering
    \includegraphics[width=\textwidth]{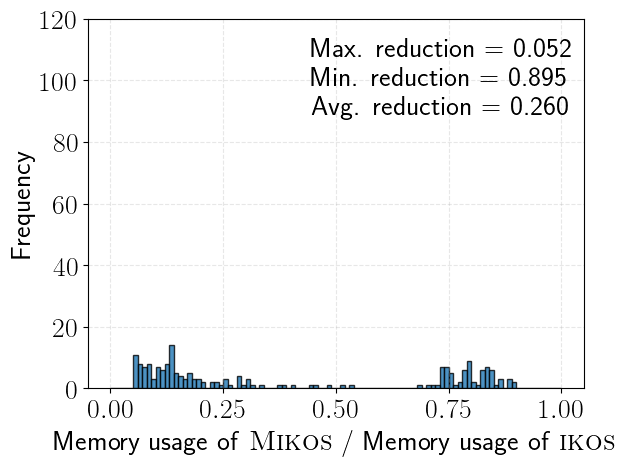}
    \caption[]%
    {{\small 0\% -- 25\% ($\svchistal$ MB -- $\svchistau$ MB)}}
    \label{fig:bottom-svc}
  \end{subfigure}
  \begin{subfigure}[b]{0.49\textwidth}
    \centering
    \includegraphics[width=\textwidth]{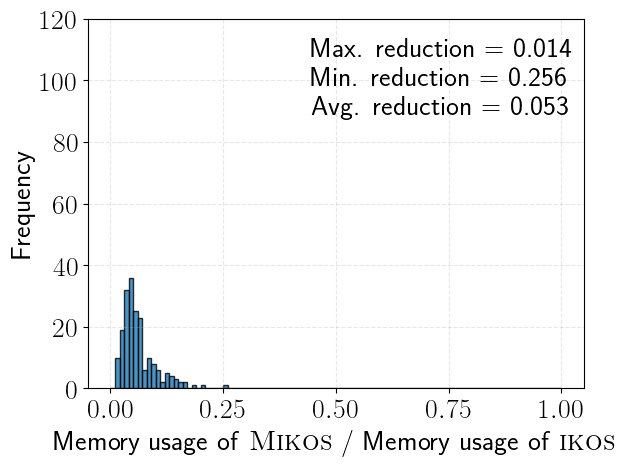}
    \caption[]%
    {{\small 25\% -- 50\% ($\svchistbl$ MB -- $\svchistbu$ MB)}}
  \end{subfigure}
  \begin{subfigure}[b]{0.49\textwidth}
    \centering
    \includegraphics[width=\textwidth]{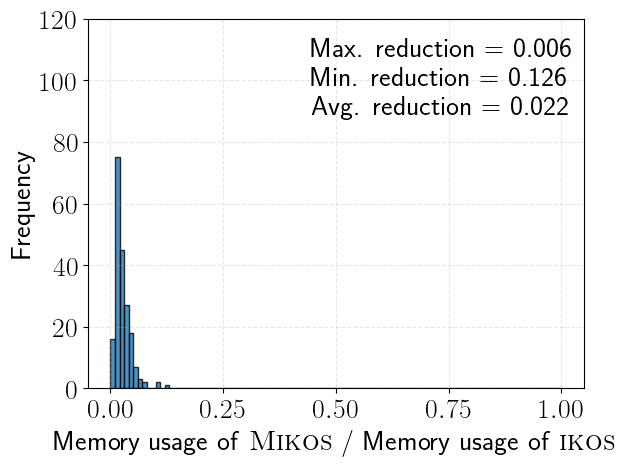}
    \caption[]%
    {{\small 50\% -- 75\% ($\svchistcl$ MB -- $\svchistcu$ MB)}}
  \end{subfigure}
  \begin{subfigure}[b]{0.49\textwidth}
    \centering
    \includegraphics[width=\textwidth]{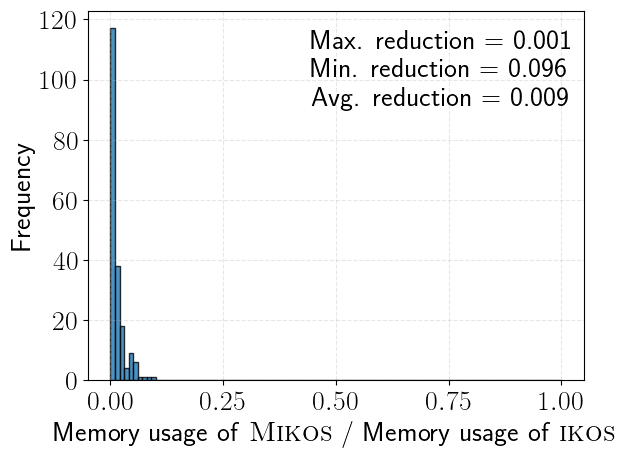}
    \caption[]%
    {{\small 75\% -- 100\% ($\svchistdl$ MB -- $\svchistdu$ MB)}}
  \end{subfigure}
  \caption{Histograms of MRR (\pref{eq:MRR}) in task T1 for different ranges.
  \pref{fig:bottom-svc} shows the distribution of benchmarks that used from
  $\svchistal$ MB to $\svchistau$ MB in $\ikos$. They are the bottom 25\% in
  terms of the memory footprint in $\ikos$. The distribution significantly
  tended toward a smaller MRR in the upper range.}
  \label{fig:svchistplot}
\end{figure}
\fi

\subsubsubsection{RQ1: Memory footprint of $\mikos$ compared to $\ikos$}
\pref{fig:mscat-svc} shows the measured memory footprints in a log-log scatter
plot. For Task T1, the MRR (\pref{eq:MRR}) ranged from \svcminmem{} to
\svcmaxmem{}. That is, the memory footprint decreased to 0.1\% in the best case.
For all benchmarks, $\mikos$ had smaller memory footprint than $\ikos$: MRR was
less than 1 for all benchmarks, with all points below the $y = x$ line in
\pref{fig:mscat-svc}. On average, $\mikos$ required only 4.1\% of the memory
required by $\ikos$, with an MRR \svcgeomean{} as the geometric mean.

As \pref{fig:mscat-svc} shows, reduction in memory tended to be greater as the
memory footprint in the baseline $\ikos$ grew. For the top 25\% benchmarks with
largest memory footprint in $\ikos$, the geometric mean of MRRs was
\svcgeomeanq{}. This trend is further confirmed by the histograms in
\pref{fig:svchistplot}. While a similar trend was observed in task T2, the trend
was significantly stronger in task T1.
\ifarxiv
\pref{tab:rq1-svc} in \pref{app:appendix-eval} lists \textbf{RQ1} results for
specific benchmarks.
\else
Our technical reports has more detailed numbers.
\fi

\subsubsubsection{RQ2: Runtime of $\mikos$ compared to $\ikos$}
\pref{fig:tscat-svc} shows the measured runtime in a log-log scatter plot. We
measured both the speedup (\pref{eq:speedup}) and the difference in the
runtimes. For fair comparison, we excluded 29 benchmarks that did not complete
in $\ikos$. This left us with 755 SV-COMP 2019 benchmarks. Out of these
\svcfilteredok{} benchmarks, \svcspeedup{} benchmarks had speedup $>1$. The
speedup ranged from \svcspeedupmin{} to \svcspeedupmax{}, with geometric mean of
\svcspeedupavg{}. The difference in runtimes (runtime of $\ikos$ $-$ runtime of
$\mikos$) ranged from $\svcspeedupabsmin$ to $\svcspeedupabsmax$, with
arithmetic mean of $\svcspeedupabsavg$.
\ifarxiv
\pref{tab:rq2-svc} in \pref{app:appendix-eval} lists \textbf{RQ2}
results for specific benchmarks.
\else
Our technical report has more detailed numbers.
\fi
\subsection{Task T2: Proving absence of buffer overflows}
\label{sec:T2}

\begin{figure}[t]
  \centering
  \begin{subfigure}[t]{0.48\textwidth}
    \centering
    \includegraphics[width=\textwidth]{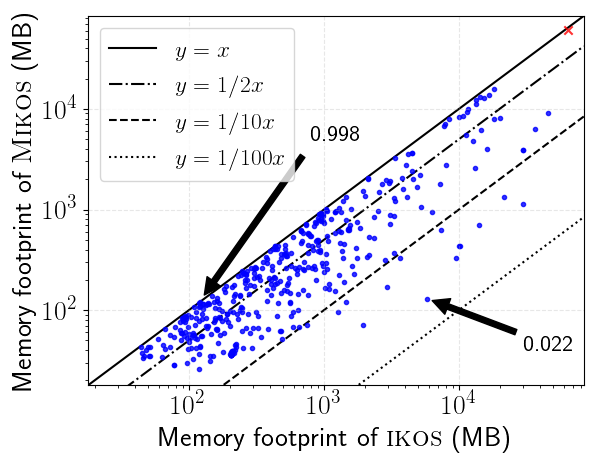}
    \caption{Min MRR: \ossminmem{}. Max MRR: \ossmaxmem{}. Geometric means:
    (i)~\ossgeomeanok{} (when \noncomplete{}s are ignored), (ii)~\ossgeomean{}
    (when measurements until timeout/spaceout are used for \noncomplete{}s).
    \ossfilterednok{} non-completions in $\ikos$.}
    \label{fig:mscat-oss}
  \end{subfigure}
  \hfill
  \begin{subfigure}[t]{0.48\textwidth}
    \centering
    \includegraphics[width=\textwidth]{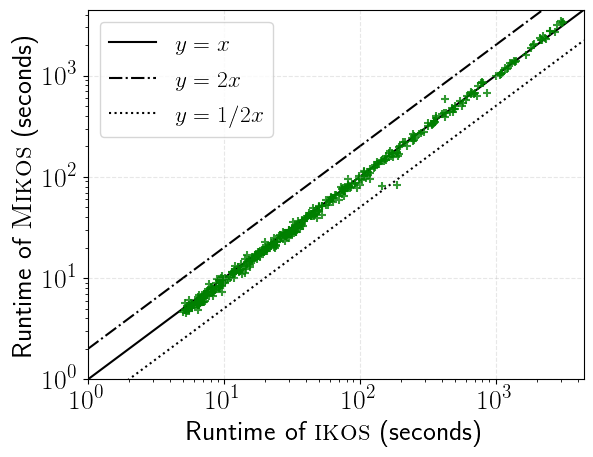}
    \caption{Min speedup: \ossspeedupmin{}. Max speedup: \ossspeedupmax{}.
      Geometric mean: \ossspeedupavg{}. Note that \noncomplete{}s are ignored as
      they space out fast in $\ikos$ compared to in $\mikos$ where they
      complete.}
    \label{fig:tscat-oss}
  \end{subfigure}
  \vspace{-1.3ex}
  \caption{\textbf{Task T2.} Log-log scatter plots of \protect\subref{fig:mscat-oss}
  memory footprint and \protect\subref{fig:tscat-oss} runtime of $\ikos$ and
  $\mikos$, with an hour timeout and \memlimit{} spaceout. Benchmarks that did
  not complete in $\ikos$ are marked \textcolor{red}{$\times$}. All
  \textcolor{red}{$\times$}s completed in $\mikos$. Benchmarks below $y = x$
  required less memory or runtime in $\mikos$.}
  \label{fig:ossplot}
  \vspace{-1ex}
\end{figure}

\subsubsubsection{Benchmarks}
For Task T2, we selected all \osstotal{} programs from the official Arch Linux
core packages that are primarily written in C and whose LLVM bitcodes are
obtainable by gllvm~\cite{gllvm}. These include, but are not limited to,
\texttt{coreutils}, \texttt{dhcp}, \texttt{gnupg}, \texttt{inetutils},
\texttt{iproute}, \texttt{nmap}, \texttt{openssh}, \texttt{vim}, etc. From these
benchmarks, we removed \ossmikosto{} benchmarks that timed out and \ossmikosso{}
benchmarks that spaced out in both $\mikos$ and $\ikos$. Also, \ossikosfive{}
benchmarks that took less than 5 seconds in $\ikos$ were removed. That left us
with \textbf{\ossfiltered{}} open-source benchmarks.

\subsubsubsection{Abstract domain}
Task T2 used the interval abstract domain~\cite{kn:CC77}. Using a richer domain
like DBM caused $\ikos$ and $\mikos$ to timeout on most benchmarks.

\subsubsubsection{Task}
Task T2 consists of using the results of interprocedural fixpoint computation to
prove the safety of buffer accesses. In this task, most program points had
checks.

\subsubsubsection{RQ1: Memory footprint of $\mikos$ compared to $\ikos$}
\pref{fig:mscat-oss} shows the measured memory footprints in a log-log scatter
plot. For Task T2, MRR (\pref{eq:MRR}) ranged from \ossminmem{} to \ossmaxmem{}.
That is, the memory footprint decreased to 2.2\% in the best case. For all
benchmarks, $\mikos$ had smaller memory footprint than $\ikos$: MRR was less
than 1 for all benchmarks, with all points below the $y = x$ line in
\pref{fig:mscat-oss}. On average, $\mikos$'s memory footprint was less than half
of that of $\ikos$, with an MRR \ossgeomean{} as the geometric mean.
\ifarxiv
\pref{tab:rq1-oss} in \pref{app:appendix-eval} lists \textbf{RQ1} results for
specific benchmarks.
\else
Our technical reports has more detailed numbers.
\fi

\ifarxiv
\begin{figure}[t]
  \centering
  \begin{subfigure}[b]{0.49\textwidth}
    \centering
    \includegraphics[width=\textwidth]{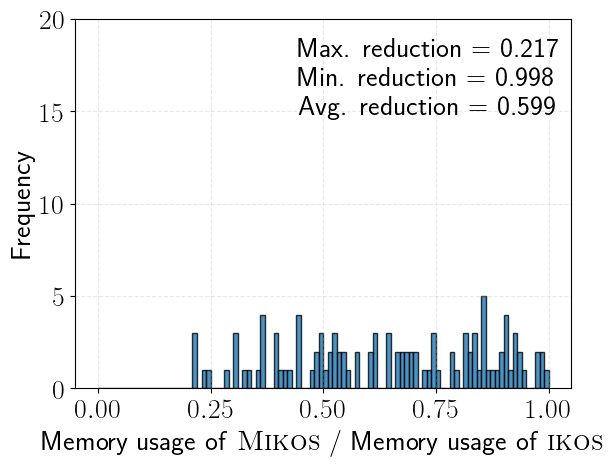}
    \caption[]%
    {{\small 0\% -- 25\% ($\osshistal$ MB -- $\osshistau$ MB)}}
    \label{fig:bottom-oss}
  \end{subfigure}
  \begin{subfigure}[b]{0.49\textwidth}
    \centering
    \includegraphics[width=\textwidth]{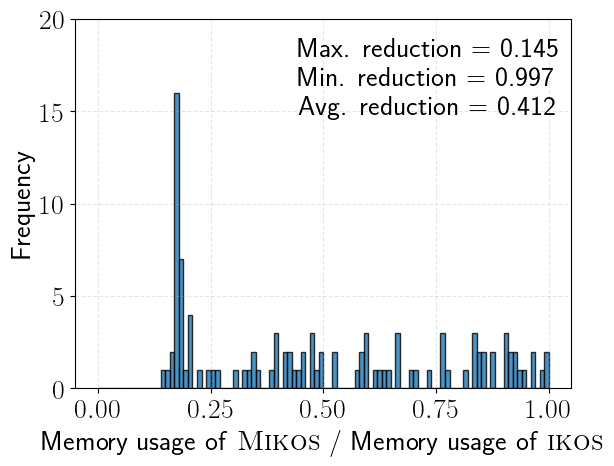}
    \caption[]%
    {{\small 25\% -- 50\% ($\osshistbl$ MB -- $\osshistbu$ MB)}}
  \end{subfigure}
  \begin{subfigure}[b]{0.49\textwidth}
    \centering
    \includegraphics[width=\textwidth]{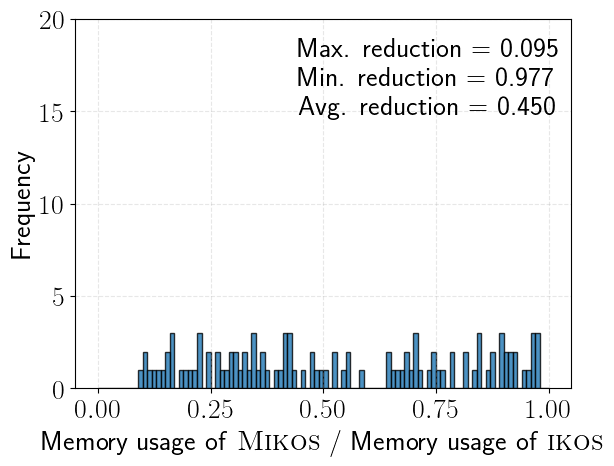}
    \caption[]%
    {{\small 50\% -- 75\% ($\osshistcl$ MB -- $\osshistcu$ MB)}}
  \end{subfigure}
  \begin{subfigure}[b]{0.49\textwidth}
    \centering
    \includegraphics[width=\textwidth]{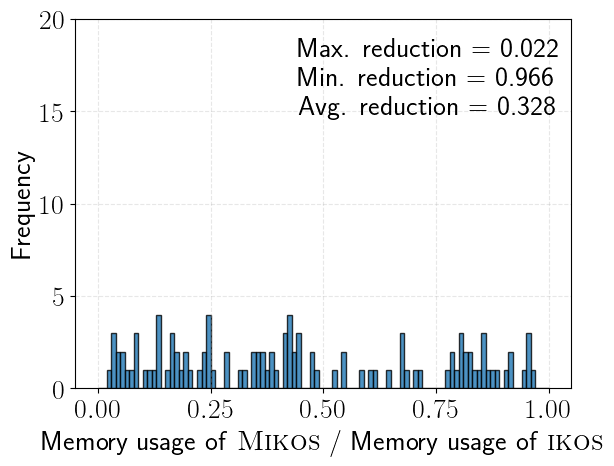}
    \caption[]%
    {{\small 75\% -- 100\% ($\osshistdl$ MB -- $\osshistdu$ MB)}}
  \end{subfigure}
  \caption{Histograms of MRR (\pref{eq:MRR}) in task T2 for different ranges.
  \pref{fig:bottom-oss} shows the distribution of benchmarks that used from
  $\osshistal$ MB to $\osshistau$ MB in $\ikos$. They are the bottom 25\% in
  terms of the memory footprint in $\ikos$. The distribution slightly tended
  toward a smaller MRR in the upper range.}
  \label{fig:osshistplot}
\end{figure}
\fi

\subsubsubsection{RQ2: Runtime of $\mikos$ compared to $\ikos$}
\pref{fig:tscat-oss} shows the measured runtime in a log-log scatter plot. We
measured both the speedup (\pref{eq:speedup}) and the difference in the
runtimes. For fair comparison, we excluded 1 benchmark that did not complete in
$\ikos$. This left us with 425 open-source benchmarks. Out of these
\ossfilteredok{} benchmarks, \ossspeedup{} benchmarks had speedup $>1$. The
speedup ranged from \ossspeedupmin{} to \ossspeedupmax{}, with geometric mean of
\ossspeedupavg{}. The difference in runtimes (runtime of $\ikos$ $-$ runtime of
$\mikos$) ranged from $\ossspeedupabsmin$ to $\ossspeedupabsmax$, with
arithmetic mean of $\ossspeedupabsavg$.
\ifarxiv
\pref{tab:rq2-oss} in \pref{app:appendix-eval} lists \textbf{RQ2}
results for specific benchmarks.
\else
Our technical report has more detailed numbers.
\fi

\vspace{-1ex}
\section{Related Work}
\label{sec:Related}
\vspace{-1ex}

Abstract interpretation has a long history of designing time and memory
efficient algorithms for specific abstract domains, which exploit variable
packing and clustering and sparse constraints
\cite{DBLP:journals/pacmpl/SinghPV18,DBLP:conf/cav/SinghPV18,DBLP:conf/popl/SinghPV17,DBLP:conf/pldi/SinghPV15,DBLP:conf/sas/HeoOY16,DBLP:conf/sas/GangeNSSS16,DBLP:conf/aplas/ChawdharyK17,DBLP:journals/fmsd/HalbwachsMG06}.
Often these techniques represent a trade-off between precision and performance
of the analysis. Nonetheless, such techniques are orthogonal to the
abstract-domain agnostic approach discussed in this paper. Approaches for
improving precision via sophisticated widening and narrowing strategies
\cite{halbwachsSAS2012,DBLP:journals/scp/AmatoSSAV16,DBLP:conf/birthday/ApinisSV16}
are also orthogonal to our memory-efficient iteration strategy.
$\mikos$ inherits the interleaved widening-narrowing strategy implemented in the
baseline $\ikos$ abstract interpreter.

As noted in \pref{sec:Introduction}, Bourdoncle's
approach~\cite{bourdoncle1993efficient} is used in many industrial and academic
abstract interpreters~\cite{ikos2014,crab,sparta,codehawk,DBLP:conf/nfm/CalcagnoD11}.
Thus, improving memory efficiency of WTO-based exploration is of great
applicability to real-world static analysis. Astr\'ee is one of the few, if not
only, industrial abstract interpreters that does not use WTO exploration, because
it assumes that programs do not have gotos and recursion~\cite[Section
2.1]{DBLP:conf/birthday/BlanchetCCFMMMR02}, and is targeted towards a specific
class of embedded C code~\cite[Section
3.2]{DBLP:journals/sigsoft/BertraneCCFMMR11}. Such restrictions makes is easier
to compute when an abstract value will not be used anymore by naturally
following the abstract syntax tree~\cite[Section
3.4.3]{DBLP:journals/ftpl/Mine17}. In contrast, $\mikos$ works for general
programs with goto and recursion, which requires the use of WTO-based
exploration.

Generic fixpoint-computation approaches for improving running time of abstract
interpretation have also been explored
\cite{PLDI:VB2004,monniaux2005parallel,DBLP:journals/pacmpl/KimVT20}. Most
recently, Kim et al.~\cite{DBLP:journals/pacmpl/KimVT20} present the notion of
weak partial order (WPO), which generalizes the notion of WTO that is used in
this paper. Kim et al.\ describe a parallel fixpoint algorithm that exploits
maximal parallelism while computing the same fixpoint as the WTO-based
algorithm. Reasoning about correctness of concurrent algorithms is complex;
hence, we decided to investigate an optimal memory management scheme in the
sequential setting first. However, we believe it would be possible to extend our
WTO-based result to one that uses WPO.

The nesting relation described in \pref{sec:Algorithm} is closely related to the
notion of Loop Nesting Forest~\cite{ramalingamTOPLAS1999,ramalingamTOPLAS2002},
as observed in Kim~et~al.~\cite{DBLP:journals/pacmpl/KimVT20}. The almost-linear
time algorithm \algwto{} is an adaptation of LNF construction algorithm by
Ramalingam~\cite{ramalingamTOPLAS1999}. The $\foo$ operation in
\pref{sec:Algorithm} is similar to the outermost-loop-excluding (OLE) operator
introduced by Rastello~\cite[Section 2.4.4]{rastello:hal-00761555}.

Seidl et al.~\cite{DBLP:conf/ppdp/SeidlV18} present time and space improvements
to a generic fixpoint solver, which is closest in spirit to the problem
discussed in this paper. For improving space efficiency, their approach
recomputes values during fixpoint computation, and does not prove optimality,
unlike our approach. However, the setting discussed in their work is also more
generic compared to ours; we assume a static dependency graph for the equation
system.

Abstract interpreters such as
Astr\'ee~\cite{DBLP:conf/birthday/BlanchetCCFMMMR02} and
CodeHawk~\cite{codehawk} are implemented in OCaml, which provides a garbage
collector. However, merely using a reference counting garbage collector will not
reduce peak memory usage of fixpoint computation. For instance, the reference
count of $\pre[u]$ can be decreased to zero only after the final check/assert
that uses $\pre[u]$. If the checks are all conducted at the end of the analysis
(as is currently done in prior tools), then using a reference counting garbage
collector will not reduce peak memory usage. In contrast, our approach lifts the
checks as early as possible enabling the analysis to free the
abstract values as early as possible.

Symbolic approaches for applying abstract transformers during fixpoint
computation
\cite{DBLP:journals/entcs/HenryMM12,DBLP:conf/vmcai/RepsSY04,DBLP:conf/popl/LiAKGC14,DBLP:conf/vmcai/RepsT16,DBLP:journals/entcs/ThakurLLR15,DBLP:conf/sas/ThakurER12,DBLP:conf/cav/ThakurR12}
allow the entire loop body to be encoded as a single formula. This might appear
to obviate the need for $\pre$ and $\post$ values for individual basic blocks
within the loop; by storing the $\pre$ value only at the header, such a symbolic
approach might appear to reduce the memory footprint. First, this scenario does
not account for the fact that $\pre$ values need to be computed and stored if
basic blocks in the loop have checks. Note that if there are no checks within
the loop body, then our approach would also only store the $\pre$ value at the
loop header. Second, such symbolic approaches only perform intraprocedural
analysis~\cite{DBLP:journals/entcs/HenryMM12}; additional abstract values would
need to be stored depending on how function calls are handled in interprocedural
analysis. Third, due to the use of SMT solvers in such symbolic approaches, the
memory footprint might not necessarily reduce, but might increase if one takes
into account the memory used by the SMT solver.

Sparse analysis \cite{DBLP:conf/pldi/OhHLLY12,DBLP:journals/toplas/OhHLLPKY14}
and database-backed analysis \cite{WeissRL:ICSE2015} improve the memory cost of
static analysis. For specific classes of static analysis such as the IFDS
framework \cite{RHSPOPL1995}, there have been approaches for improving the time
and memory efficiency
\cite{DBLP:conf/pldi/Bodden12a,DBLP:conf/cc/NaeemLR10,wangASPLOS2017,DBLP:conf/ipps/0002GJWHWL19}.

\vspace{-1ex}
\section{Conclusion}
\vspace{-1ex}
\label{sec:Conclusion}

This paper presented an approach for memory-efficient abstract interpretation
that is agnostic to the abstract domain used. Our approach is memory-optimal and
produces the same result as Bourdoncle's approach without sacrificing time
efficiency. We extended the notion of iteration strategy to intelligently
deallocate abstract values and perform assertion checks during fixpoint
computation. We provided an almost-linear time algorithm that constructs this
iteration strategy. We implemented our approach in a tool called $\mikos$, which
extended the abstract interpreter $\ikos$. Despite the use of state-of-the-art
implementation of abstract domains, $\ikos$ had a large memory footprint on two
analysis tasks. $\mikos$ was shown to effectively reduce it. When verifying
user-provided assertions in SV-COMP 2019 benchmarks, $\mikos$ showed a decrease
in peak-memory usage to $\svcgeomeanpercentage$\% ($\svcgeomeantimes\times$) on
average compared to $\ikos$. When performing interprocedural buffer-overflow
analysis of open-source programs, $\mikos$ showed a decrease in peak-memory
usage to $\ossgeomeanpercentage$\% ($\ossgeomeantimes\times$) on average
compared to $\ikos$.
\bibliographystyle{splncs04}
\bibliography{main}
\clearpage
\appendix

\section{Proofs}
\label{app:appendix-proof}
This section provides proofs of theorems presented in the paper.

\subsection{Nesting forest $(V, \nesting)$ and total order $(V, \luo)$ in \pref{sec:Algorithm}}
\label{app:appendix-order}
This section presents the theorems and proofs about $\nesting$ and $\luo$
defined in \pref{sec:Algorithm}.

A partial order $(S, \relation)$ is a forest if for all $x \in S$,
$(\postset{x}_{\relation}, R)$ is a chain, where $\postset{x}_{\relation} \eqdef
\{ y\in S \mid x\ \relation\ y\}$.

\begin{theorem}
  \label{thm:forest}
  $(V, \nesting)$ is a forest.
\end{theorem}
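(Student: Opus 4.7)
The plan is to verify the two parts in turn: first that $(V, \nesting)$ is a partial order, and then that each principal up-set $\postset{x}_{\nesting}$ is totally ordered by $\nesting$. Throughout, the key structural fact I would exploit is that, by \pref{def:BourdoncleHTO}, an HTO $\hto$ is a \emph{well-parenthesized} string, so any two components $\components(h_1)$ and $\components(h_2)$ that share an element are nested (one properly contains the other, or they are equal). Equivalently, the set $\omega(v)$ of heads of components enclosing $v$ is linearly ordered by component containment for every $v \in V$.

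For the partial-order part, reflexivity is immediate from the first disjunct of the definition of $\nesting$. For antisymmetry, suppose $x \nesting y$ and $y \nesting x$ with $x \neq y$; then $y \in \omega(x)$ and $x \in \omega(y)$, so $y$ is the head of a component containing $x$ and $x$ is the head of a component containing $y$. But a head of a component is by \pref{def:BourdoncleHTO} the first element of that component in the underlying permutation, so $y \preceq x$ and $x \preceq y$, forcing $x = y$, a contradiction. For transitivity, assume $x \nesting y \nesting z$ with all three distinct. Then $y \in \omega(x)$, giving a component $\components(y)$ containing $x$, and $z \in \omega(y)$, giving a component $\components(z)$ containing $y$. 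Since $y \in \components(y) \cap \components(z)$, well-parenthesization forces $\components(y) \subseteq \components(z)$ (as $y$ is the head of $\components(y)$ and cannot be the head of the strictly larger $\components(z)$), so $x \in \components(z)$ and thus $z \in \omega(x)$, i.e.\ $x \nesting z$.

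For the chain property, fix $x \in V$ and observe $\postset{x}_{\nesting} = \{x\} \cup \omega(x)$. Pick any $h_1, h_2 \in \omega(x)$; then both $\components(h_1)$ and $\components(h_2)$ contain $x$, so by well-parenthesization one contains the other, say $\components(h_1) \subseteq \components(h_2)$. If the inclusion is equality then $h_1 = h_2$ (heads of equal components coincide). Otherwise $h_1 \in \components(h_2)$ with $h_1 \neq h_2$, and since $h_2$ is the head of the component containing $h_1$, we get $h_2 \in \omega(h_1)$, that is $h_1 \nesting h_2$. Finally, $x \nesting h$ for every $h \in \omega(x)$ directly from the definition, so $x$ sits below all of $\omega(x)$, and $(\postset{x}_{\nesting}, \nesting)$ is a chain.

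The main obstacle is simply pinning down the well-parenthesization argument cleanly: everything reduces to the fact that two enclosing components of a common element must be nested and that the head of a component is uniquely determined as its $\preceq$-minimum. Once that observation is made, antisymmetry, transitivity, and the chain condition all follow by the same containment reasoning, so I would state and use it as a small preliminary lemma before the three verifications above.
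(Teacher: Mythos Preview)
Your proposal is correct and follows essentially the same approach as the paper: both establish reflexivity, antisymmetry, and transitivity from the head-is-first-element property of an HTO, and both derive the chain condition from the fact that well-parenthesization forces any two components sharing an element to be nested. The only difference is presentational---you isolate the ``overlapping components are nested'' observation as an explicit preliminary lemma, whereas the paper invokes it inline with the phrase ``by definition of HTO.''
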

\begin{proof}
  First, we show that $(V, \nesting)$ is a partial order. Let $x, y, z$ be a
  vertex in $V$.
  \begin{itemize}
    \item Reflexivity: $x \nesting x$. This is true by the definition of
    $\nesting$.
    \item Transitivity: $x \nesting y$ and $y \nesting z$ implies $x \nesting
    z$. (i)~If $x = y$, $x \nesting z$. (ii)~Otherwise, by definition of
    $\nesting$, $y \in \omega(x)$. Furthermore, (ii-1) if $y = z$, $z \in
    \omega(x)$; and hence, $x \nesting z$. (ii-2) Otherwise, $z \in \omega(y)$,
    and by definition of HTO, $z \in \omega(x)$.
    \item Anti-symmetry: $x \nesting y$ and $y \nesting x$ implies $x = y$.
    Suppose $x \neq y$. By definition of $\nesting$ and premises, $y \in
    \omega(x)$ and $x \in \omega(y)$. Then, by definition of HTO, $x \prec y$
    and $y \prec x$. This contradicts that $\preceq$ is a total order.
  \end{itemize}

  Next, we show that the partial order is a forest. Suppose there exists $v \in
  V$ such that $(\postset{v}_{\nesting}, \nesting)$ is not a chain. That is,
  there exists $x, y \in \postset{v}_{\nesting}$ such that $x \not\nesting y$
  and $y \not\nesting x$. Then, by definition of HTO, $\components(x) \cap
  \components(y) = \emptyset$. However, this contradicts that $v \in
  \components(x)$ and $v \in \components(y)$.
  \qed
\end{proof}

\begin{theorem}
  \label{thm:total}
  $(V, \luo)$ is a total order.
\end{theorem}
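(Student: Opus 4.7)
The plan is to verify the four defining properties of a total order in the following order: reflexivity, totality, antisymmetry, and transitivity. Reflexivity is immediate since $x \nesting x$ by \pref{thm:forest}, so the first disjunct of \pref{eq:TotalOrder} holds. For totality, I would case-split on the relative position of $x$ and $y$ in the forest $(V,\nesting)$: if $x \nesting y$ or $y \nesting x$ the result is immediate from the first disjunct, and otherwise totality of $\preceq$ chooses between the two second-disjunct cases (using $y \not\nesting x$ or $x \not\nesting y$ respectively, which hold by assumption).

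For antisymmetry, I would assume $x \luo y$, $y \luo x$, and $x \neq y$, and derive a contradiction by enumerating the four combinations of the disjuncts. The ``mixed'' cases in which one direction uses the $\nesting$ disjunct and the other the $\not\nesting$ disjunct are immediately self-contradictory; the two ``pure'' cases reduce to antisymmetry of $\nesting$ (\pref{thm:forest}) and antisymmetry of $\preceq$.

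Transitivity is the main obstacle and will require a helper observation that I would state and prove first: \emph{if $x \nesting y$ with $x \neq y$, then $y \prec x$ in $\preceq$, and moreover $\components(y)$ is a $\preceq$-contiguous block of the WTO starting at $y$}. This follows from the definition of HTO (heads are the first elements of their components, and the string is well-parenthesized, so each component occupies a contiguous range of the permutation). Given this, I would assume $x \luo y$ and $y \luo z$ and split into four cases based on which disjunct holds in each hypothesis. The case where both hypotheses use the $\nesting$ disjunct follows from transitivity of $\nesting$; the case where both use the $\preceq$ disjunct uses transitivity of $\preceq$ to get $x \preceq z$, combined with a short argument that $z \nesting x$ would force, via the forest structure at $\postset{z}_{\nesting}$, one of $x \nesting y$ or $y \nesting x$, contradicting the hypotheses.

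The remaining two mixed cases are the delicate ones. In each, I would first try to conclude $x \nesting z$ directly; failing that, I must produce $z \not\nesting x$ and $x \preceq z$. The key move is to use the helper observation: if $x \nesting y$ (and $x \neq y$) with $y \preceq z$ and $z \notin \components(y)$, then $z$ lies strictly after the contiguous block $\components(y)$ in the WTO, so $x \in \components(y)$ gives $x \prec z$; a symmetric argument handles the other mixed case. To rule out $z \nesting x$, I would observe that $x,y \in \postset{z}_{\nesting}$ would make these comparable in the chain at $z$, and chase the implication $x \nesting y$ (or $y \nesting x$) back to a contradiction with the case's hypotheses. I expect the bookkeeping in these two mixed cases to be the main source of work; everything else is a direct appeal to the forest/HTO structure established in \pref{thm:forest} and \pref{def:BourdoncleHTO}.
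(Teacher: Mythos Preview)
Your overall strategy---case-split on which disjunct of \pref{eq:TotalOrder} fires in each hypothesis, and lean on the contiguity of WTO components for the hard cases---is the same as the paper's. But the specific mechanism you propose for ruling out $z \nesting x$ does not work as written. You say that ``$x,y \in \postset{z}_{\nesting}$ would make these comparable in the chain at $z$''; the trouble is that in none of the three cases where you need this do you actually have $y \in \postset{z}_{\nesting}$. In the mixed case with $z \not\nesting y \wedge y \preceq z$ the hypothesis literally says $z \not\nesting y$; in the mixed case with $y \nesting z$ you have $z \in \postset{y}_{\nesting}$, not the reverse; and in the double-$\preceq$ case you again have $z \not\nesting y$. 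So the chain argument at $\postset{z}_{\nesting}$ never gets off the ground.

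The repairs are easy and use tools you already have. In the two mixed cases, ruling out $z \nesting x$ is plain transitivity of $\nesting$: from $z \nesting x$ and $x \nesting y$ you get $z \nesting y$ (contradiction), and from $z \nesting x$ and $y \nesting z$ you get $y \nesting x$ (contradiction). In the double-$\preceq$ case the forest structure is genuinely not enough, and you must invoke your contiguity helper instead: if $z \nesting x$ with $z \neq x$ then $z \in \components(x)$; since $x \preceq y \preceq z$ and $\components(x)$ is a $\preceq$-contiguous block starting at $x$, this forces $y \in \components(x)$, i.e.\ $y \nesting x$, contradicting the first hypothesis. With these substitutions the proof goes through along the lines you sketched.
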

\begin{proof}
  We prove the properties of a total order. Let $x, y, z$ be a vertex in $V$.
  \begin{itemize}
    \item Connexity: $x \luo y$ or $y \luo x$. This follows from the connexity
    of the total order~$\preceq$.
    \item Transitivity: $x \luo y$ and $y \luo z$ implies $x \luo z$.
    (i)~Suppose $x \nesting y$. (i-1)~If $y \nesting z$, by transitivity of
    $\nesting$, $x \nesting z$. (ii-2)~Otherwise, $z \not\nesting y$ and $y
    \preceq z$. It cannot be $z \nesting x$ because transitivity of $\nesting$
    implies $z \nesting y$, which is a contradiction. Furthermore, it cannot be
    $z \prec x$ because $y \preceq z \prec x$ and $x \nesting y$ implies $y \in
    \omega(z)$ by the definition of HTO. By connexity of $\preceq$, $x \preceq
    z$. (ii)~Otherwise $y \not\nesting x$ and $x \preceq y$. (ii-1)~If $y
    \nesting z$, $z \not\nesting x$ because, otherwise, transitivity of
    $\nesting$ will imply $y \nesting x$. By connexity of $\preceq$, it is
    either $x \preceq z$ or $z \prec x$. If $x \preceq z$, $x \luo z$. If $z
    \prec x$, by definition of HTO, $z \in \omega(z)$.
    \item Anti-symmetry: $x \luo y$ and $y \luo x$ implies $x = y$. (i)~If $x
    \nesting y$, it should be $y \nesting x$ for $y \luo x$ to be true. By
    anti-symmetry of $\nesting$, $x = y$. (ii)~Otherwise, $y \not\nesting x$ and
    $x \preceq y$. For $y \luo x$ to be true, $x \not\nesting y$ and $x \preceq
    y$. By anti-symmetry of $\preceq$, $x = y$.
  \end{itemize}
  \qed
\end{proof}

\newpage
\begin{restatable}{theorem}{ThmRead}
  \label{thm:luo}
  For $u, v \in V$, if $\instr[v]$ reads $\post[u]$, then $u \luo v$.
\end{restatable}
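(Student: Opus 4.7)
The plan is to reduce the hypothesis ``$\instr[v]$ reads $\post[u]$'' to a purely graph-theoretic condition, and then show the conclusion $u \luo v$ using the WTO axiom and the definition of $\luo$.

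First, I would inspect the semantics in \pref{fig:FixpointMachineSemantics} to catalogue every place a read of some $\post[u]$ can occur during the execution of $\instr[v]$. There are three sites: line~1 of $\fmexec\ v$, the preamble of $\fmrepeat\ v\ \texttt{[}P\texttt{]}$, and the last line of its loop body. In each case the side condition on $p$ reduces to $p \cfgarrow v$ (with the preamble clause additionally carrying $v \notin \omega(p)$). So the hypothesis ``$\instr[v]$ reads $\post[u]$'' is equivalent, up to these side conditions, to the existence of the edge $u \cfgarrow v$ in the dependency graph.

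Second, I would invoke \pref{def:BourdoncleWTO}: for every edge $u \cfgarrow v$, either (i) $u \prec v$, or (ii) $v \preceq u$ and $v \in \omega(u)$. I case split on this disjunction.

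For case (ii), by \pref{def:NestingRelation} the condition $v \in \omega(u)$ is exactly $u \nesting v$, so the first disjunct in the definition of $\luo$ (\pref{eq:TotalOrder}) fires and we get $u \luo v$ immediately. This is the easy case.

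For case (i), $u \prec v$, so we already have $u \preceq v$ and aim to satisfy the second disjunct of $\luo$ by showing $v \not\nesting u$, i.e.\ $u \notin \omega(v)$ (using $u \neq v$ from $u \prec v$). This is the step I expect to be the main obstacle, because in general one can cook up WTO configurations where $u$ is the head of an outer component whose nested body contains $v$, so that $u \in \omega(v)$ looks possible. The right tool is the extra structural information carried by the reading site together with the ``no two consecutive \texttt{(}'' clause of \pref{def:BourdoncleHTO}: for a preamble read the side condition $v \notin \omega(u)$, combined with $u \cfgarrow v$ and $u \prec v$, must be propagated along the chains $\postset{u}_{\nesting}$ and $\postset{v}_{\nesting}$ (both chains by \pref{thm:forest}) to exclude $u \in \omega(v)$; for an $\fmexec\ v$ site a similar argument uses the fact that the loop-reentering back edges are the only way to have a read target whose head-chain strictly contains the source, and those were handled by case (ii). Once $u \notin \omega(v)$ is established, the second disjunct of $\luo$ gives $u \luo v$, completing the proof.
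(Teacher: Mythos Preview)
There are two genuine gaps.

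First, your reduction of the hypothesis is incomplete. You catalogue three read sites in the semantics of $\instr[v]$ but omit a fourth: the recursive call $\llbracket P \rrbracket$ inside the loop of $\fmrepeat\ v\ \texttt{[}P\texttt{]}$. When $v$ is a head, $P$ is built from $\instr[w]$ for every $w$ in $\components(v)\setminus\{v\}$, so $\instr[v]$ reads $\post[u]$ whenever $u \cfgarrow w$ for some $w$ with $w \nesting v$, not only when $u \cfgarrow v$. The paper's proof handles exactly this by taking such an intermediate $v'$ with $u \cfgarrow v'$ and $v' \nesting v$, arguing $u \luo v'$ from the WTO axiom, and then closing with $v' \nesting v \Rightarrow v' \luo v$ together with transitivity of $\luo$. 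Your direct reduction to the bare edge $u \cfgarrow v$ loses this structure entirely.

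Second, your case-(i) plan cannot succeed as written. Take $G_1$ and the edge $3 \cfgarrow 4$: here $3 \prec 4$ (so we are in your case (i)), the preamble side condition $4 \notin \omega(3)$ holds (since $\omega(3) = \{3\}$), yet $3 \in \omega(4)$ (since $\omega(4) = \{3,4\}$), i.e.\ $4 \nesting 3$. Hence the conclusion $v \not\nesting u$ you need for the second disjunct of $\luo$ is simply false, and none of the tools you propose---the ``no two consecutive \texttt{(}'' clause, propagation along the chains $\postset{u}_{\nesting}$ and $\postset{v}_{\nesting}$, or the claim that only back edges can give $u \in \omega(v)$---can recover it: $3 \cfgarrow 4$ is a tree edge, not a back edge. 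Indeed $\instr[4]$ reads $\post[3]$ in its preamble while $4 \luo 3$ by \pref{exa:luo}, so the very implication you are trying to establish fails on this instance; the paper's own proof is equally terse at this point, simply asserting $u \luo v'$ in the $u \prec v'$ branch, so you should not expect your sketch to close the gap either.
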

\begin{proof}
  By the definition of the mapping $\instr$, there must exists $v' \in V$ such
  that $u \cfgarrow v'$ and $v' \nesting v$ for $\instr[v]$ to read $\post[u]$.
  By the definition of WTO, it is either $u \prec v'$ and $v' \notin \omega(u)$,
  or $v' \preceq u$ and $v' \in \omega(u)$. In both cases, $u \luo v'$. Because
  $v' \nesting v$, and hence $v' \luo v$, $u \luo v$.
  \qed
\end{proof}

\subsection{Optimality of $\memconfigopt$ in \pref{sec:Algorithm}}
\label{app:appendix-memopt}
This section presents the theorems and proofs about the optimality of
$\memconfigopt$ described in \pref{sec:Algorithm}. The theorem is divided into
optimality theorems of the maps that constitute $\memconfigopt$.

Given $\memconfig(\dpost, \achk, \dpostl, \dprel)$ and a map $\dpost_0$, we use
$\memconfig \lightning \dpost_0$ to denote the memory configuration ($\dpost_0$,
$\achk$, $\dpostl$, $\dprel$). Similarly, $\memconfig \lightning \achk_0$ means
($\dpost$, $\achk_0$, $\dpostl$, $\dprel$), and so on. For a given $\FM$ program
$P$, each map $X$ that constitutes a memory configuration is valid for $P$ iff
$\memconfig \lightning X$ is valid for every valid memory
configuration~$\memconfig$. Also, $X$ is optimal for $P$ iff $\memconfig
\lightning X$ is optimal for an optimal memory configuration~$\memconfig$.

\begin{restatable}{theorem}{ThmDpostValid}
  \label{thm:dpostvalid}
  $\dpostopt$ is valid. That is, given an $\FM$ program $P$ and a valid memory
  configuration $\memconfig$, $\llbracket P \rrbracket_{\memconfig \lightning
  \dpostopt} = \llbracket P \rrbracket_{\memconfig}$.
\end{restatable}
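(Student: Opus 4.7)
\textbf{Proof plan for \pref{thm:dpostvalid}.} The plan is to show that the executions $\llbracket P \rrbracket_{\memconfig}$ and $\llbracket P \rrbracket_{\memconfig \lightning \dpostopt}$ are read-equivalent on every access to $\post[u]$, whence the same values are stored in $\pre$, $\post$, and $\checkmap$ throughout. Since the two configurations differ only in when $\post$ values are deallocated, the control flow and the arithmetic done by $\tau_v$ and $\fpcheck_u$ are identical, so it suffices to argue that whenever $\post[u]$ is read in the $\memconfigopt$-run it holds the same value as in the $\memconfig$-run. The only way this can fail is if $\dpostopt$ deallocates $\post[u]$ at the end of $\instr[\dpostopt[u]]$ and then some later access reads $\bot$; so the proof reduces to the single subclaim: every read of $\post[u]$ that happens after the end of $\instr[\dpostopt[u]]$ is preceded (between the deallocation and the read) by a re-execution of $\instr[u]$ that restores the correct value.

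To prove the subclaim, I would fix an edge $u \cfgarrow w$ that triggers a read of $\post[u]$ in some execution of $\instr[w]$ and let $h = \foo(u,w)$. By definition, either $h = w$ or $h$ is the head of the maximal component that contains $w$ but not $u$; in either case every head strictly $\nesting$-above $h$ in $\postset{w}_{\nesting}$ is a head whose component also contains $u$. Consequently any execution of $\instr[w]$ that occurs after the postamble of $\instr[h]$ must lie inside a re-iteration of some outer $\fmrepeat\ h'$ whose component contains both $u$ and $w$. Within any single iteration of such $\fmrepeat\ h'$, the semantics in \pref{fig:FixpointMachineSemantics} execute the body in WTO order before performing the loop update: if $u \cfgarrow w$ is a forward edge ($u \prec w$), then $\instr[u]$ completes before $\instr[w]$ starts and thus before $\instr[w]$'s read; if $u \cfgarrow w$ is a back edge, then by the WTO definition $w \in \omega(u)$ and $w = h'$, so the read happens in the loop update of $\fmrepeat\ w$ which is performed after the body (hence after $\instr[u]$'s write). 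In both cases $\post[u]$ is written by the current iteration's execution of $\instr[u]$ before $\instr[w]$ reads it. Since $\dpostopt[u] = \max_{\luo}\{\foo(u,v) \mid u \cfgarrow v\} \luo$-dominates $h$, the end of $\instr[\dpostopt[u]]$ is no earlier than the postamble of $\instr[h]$, so the argument above covers exactly the reads that can possibly occur after the deallocation point.

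Combining these observations, every post-deallocation read of $\post[u]$ sees the correct re-computed value rather than $\bot$, so both executions produce the same $\pre$, $\post$, and $\checkmap$ values and in particular the same check map. Hence $\memconfig \lightning \dpostopt$ is valid whenever $\memconfig$ is. The main obstacle is the case analysis in the middle step: one must carefully relate $\foo$ and $\luo$ to the dynamic nesting of loop iterations and verify that, for back edges, the read issued by the loop update of a $\fmrepeat\ w$ is indeed sequenced after the body execution of $\instr[u]$; the total-order lemma \pref{thm:luo} and the structural fact that $\nestingrelation$ is a forest (\pref{thm:forest}) do most of the bookkeeping required to justify this ordering uniformly across all nested $\fmrepeat$ instructions.
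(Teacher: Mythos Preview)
Your proposal is essentially correct and follows the same approach as the paper's proof: both reduce validity to showing that every read of $\post[u]$ along an edge $u\cfgarrow w$ that occurs after the deallocation point is preceded by a fresh execution of $\instr[u]$, and both discharge this via the interplay of $\foo$, the nesting forest, and the WTO edge dichotomy (forward vs.\ back). The paper phrases the argument by contradiction (assume a violating edge, derive a head $h$ with $d\nestingneq h$, $v\nestingneq h$, and reach a contradiction with the definition of $\dpostopt$), whereas you give the direct version; the underlying case analysis is identical.

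One small slip: in the back-edge case you write ``$w=h'$'', but $h'$ was introduced as a head strictly above $h=\foo(u,w)$, and for a back edge $\foo(u,w)=w$, so in fact $w\nestingneq h'$ and $w\neq h'$. This does not break your argument, because the conclusion you draw---that the read of $\post[u]$ occurs in the loop update of $\fmrepeat\ w$, after $\instr[u]$ has executed in the body---follows from $w\in\omega(u)$ alone and does not need $w=h'$. Just drop that identification.
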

\begin{proof}
  Our approach does not change the iteration order and only changes where the
  deallocations are performed. Therefore, it is sufficient to show that for all
  $u \cfgarrow v$, $\post[u]$ is available whenever $\instr[v]$ is executed.

  Suppose that this is false: there exists an edge $u \cfgarrow v$ that violates
  it. Let $d$ be $\dpostopt[u]$ computed by our approach. Then, the execution
  trace of $P$ has execution of $\instr[v]$ after the deallocation of $\post[u]$
  in $\instr[d]$, with no execution of $\instr[u]$ in between.

  Because $\luo$ is a total order, it is either $d \luoneq v$ or $v \luo d$. It
  must be $v \luo d$, because $d \luoneq v$ implies $d \luoneq v \luo \foo(u,
  v)$, which contradicts the definition of $\dpostopt[u]$. Then, by definition
  of $\luo$, it is either $v \nesting d$ or $(d \not\nesting v) \wedge (v
  \preceq d)$. In both cases, the only way $\instr[v]$ can be executed after
  $\instr[d]$ is to have another head $h$ whose $\fmrepeat$ instruction includes
  both $\instr[d]$ and $\instr[v]$. That is, when $d \nestingneq h$ and $v
  \nestingneq h$.
  By definition of WTO and $u \cfgarrow v$, it is either $u \prec v$, or $u
  \nesting v$. It must be $u \prec v$, because if $u \nesting v$, $\instr[u]$ is
  part of $\instr[v]$, making $\instr[u]$ to be executed before reading
  $\post[u]$ in $\instr[v]$. Furthermore, it must be $u \prec h$, because if $h
  \preceq u$, $\instr[u]$ is executed before $\instr[v]$ in each iteration over
  $\components(h)$. However, that implies $h \in (\postset{v}_{\nesting}
  \setminus \postset{u}_{\nesting})$, which combined with $d \nestingneq h$,
  contradicts the definition of $\dpostopt[u]$. Therefore, no such edge $u
  \cfgarrow v$ can exist and the theorem is true.
  \qed
\end{proof}

\begin{restatable}{theorem}{ThmDpostOpt}
  \label{thm:dpostopt}
  $\dpostopt$ is optimal. That is, given an $\FM$ program $P$, memory footprint
  of $\llbracket P \rrbracket_{\memconfig \lightning \dpostopt}$ is smaller than
  or equal to that of $\llbracket P \rrbracket_{\memconfig}$ for all valid
  memory configuration $\memconfig$.
\end{restatable}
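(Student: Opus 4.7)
The plan is to show that for any valid memory configuration $\memconfig$ and any $u \in V$, the value $\post[u]$ is deallocated under $\memconfig \lightning \dpostopt$ no later in the execution trace than it is deallocated under $\memconfig$. Because the iteration order is identical for every valid configuration (only the deallocation points differ), this pointwise comparison implies that at every moment during execution the set of live $\post$ values under $\memconfig \lightning \dpostopt$ is a subset of that under $\memconfig$, whence the peak memory footprint under $\memconfig \lightning \dpostopt$ is no larger.

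First, I would pin down what ``when $\dpost[u] = w$ frees $\post[u]$'' means in trace terms: by \pref{fig:FixpointMachineSemantics}, the deallocation occurs in the \emph{postamble} of the last execution of $\instr[w]$. Since the total order $\luo$ in \pref{eq:TotalOrder} places each head at the position of its matching closing parenthesis (cf.\ \pref{exa:luo}), the order in which postambles complete in any execution is exactly the order $\luo$. Consequently, to compare two candidate deallocation points $w_1, w_2$ for $\post[u]$, it suffices to compare them under $\luo$.

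The key lemma is then: for every edge $u \cfgarrow v$, the last read of $\post[u]$ during execution of $P$ coincides with the completion of $\instr[\foo(u,v)]$. This holds because $\foo(u,v)$ is the head of the maximal component containing $v$ but not $u$ (or $v$ itself when no such component exists): every execution of $\instr[v]$ lies inside that component, and by the WTO property no execution of $\instr[v]$ occurs after that component's postamble finishes. Hence any valid $\dpost[u] = w$ must satisfy $\foo(u,v) \luo w$ for each successor $v$, and therefore $\dpostopt[u] = \max_{\luo}\{\foo(u,v) \mid u \cfgarrow v\} \luo w$. Combined with \pref{thm:dpostvalid}, which shows $\dpostopt[u]$ itself is valid, this establishes that $\dpostopt[u]$ is the $\luo$-earliest valid deallocation point for $\post[u]$, giving the desired trace-order domination and thus the memory inequality.

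The main obstacle I anticipate is making precise the correspondence between the total order $\luo$ and the trace order of instruction \emph{completion} (as opposed to $\preceq$, which orders initiations). This requires a careful induction on the structure of the $\FM$ program, handling the $\fmrepeat$ construct so that the last execution of a nested instruction is correctly identified with the closure of the outermost enclosing loop. A secondary subtlety is ruling out that a read of $\post[u]$ and its deallocation occur within the same instruction in a way that breaks validity; inspection of \pref{fig:FixpointMachineSemantics} shows deletions are scheduled before the use of $\tau_{\redv}$ in the preamble or at the very end of the postamble, so a read by $\instr[v]$ cannot be preempted by a deletion scheduled at the same $v$, and the argument above goes through unmodified.
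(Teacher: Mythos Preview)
Your proposal is correct and follows essentially the same route as the paper: both arguments establish that $\dpostopt[u]$ is the $\luo$-minimal valid choice for $\dpost[u]$, you by a direct lower-bound lemma ($\foo(u,v)\luo w$ for every valid $w$) and the paper by assuming a strictly earlier valid $b\luoneq d$ and deriving a contradiction via a case split on $u\nesting d$. The main substantive difference is that you make explicit two steps the paper leaves implicit---the correspondence between $\luo$ and the trace order of postamble completions, and the pointwise live-set inclusion that yields the peak-memory inequality---which are exactly the obstacles you flag and which the paper's proof simply asserts.
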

\begin{proof}
  For $\dpostopt$ to be optimal, deallocation of $\post$ values must be
  determined at earliest positions as possible with a valid memory configuration
  $\memconfig \lightning \dpostopt$. That is, there should not exists $u, b \in
  V$ such that if $d = \dpostopt[u]$, $b \neq d$, $\memconfig \lightning
  (\dpostopt[u \leftarrow b])$ is valid, and $\instr[b]$ deletes $\post[u]$
  earlier than $\instr[d]$.

  Suppose that this is false: such $u, b$ exists. Let $d$ be $\dpostopt[u]$,
  computed by our approach. Then it must be $b \luoneq d$ for $\instr[b]$ to be
  able to delete $\post[u]$ earlier than $\instr[d]$. Also, for all $u \cfgarrow
  v$, it must be $v \luo b$ for $\instr[v]$ to be executed before deleting
  $\post[u]$ in $\instr[b]$.

  By definition of $\dpostopt$, $v \luo d$ for all $u \cfgarrow v$. Also, by
  \pref{thm:luo}, $u \luo v$. Hence, $u \luo d$, making it either $u \nesting
  d$, or $(d \not\nesting u) \wedge (u \preceq d)$. If $u \nesting d$, by
  definition of $\foo$, it must be $u \cfgarrow d$. Therefore, it must be $d
  \luo b$, which contradicts that $b \luoneq d$. Alternative, if $(d
  \not\nesting u) \wedge (u \preceq d)$, there must exist $v \in V$ such that $u
  \cfgarrow v$ and $\foo(u, v) = d$. To satisfy $v \luo b$, $v \nesting d$, and
  $b \luoneq d$, it must be $b \nesting d$. However, this makes the analysis
  incorrect because when stabilization check fails for $\components(d)$,
  $\instr[v]$ gets executed again, attempting to read $\post[u]$ that is already
  deleted by $\instr[b]$. Therefore, no such $u, b$ can exist, and the theorem
  is true.
  \qed
\end{proof}

\begin{restatable}{theorem}{ThmAchkValid}
  \label{thm:achkvalid}
  $\achkopt$ is valid. That is, given an $\FM$ program $P$ and a valid memory
  configuration $\memconfig$, $\llbracket P \rrbracket_{\memconfig \lightning
  \achkopt} = \llbracket P \rrbracket_{\memconfig}$
\end{restatable}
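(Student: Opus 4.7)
The plan is to prove, for each $u \in \checkvertices$, that the check $\fpcheck_u(\pre[u])$ issued in the postamble of $\instr[\achkopt[u]]$ under $\memconfig \lightning \achkopt$ evaluates on the same value of $\pre[u]$ as the check issued under $\memconfig$ itself. Because $\memconfig$ is assumed valid, that common value must agree with the terminal value of $\pre[u]$ under $\memconfigdef$, i.e.\ with the Bourdoncle fixpoint value. Hence it suffices to establish two facts at the moment the postamble of $\instr[\achkopt[u]]$ fires: (a)~$\pre[u]$ is allocated, and (b)~it holds the fixpoint value.

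For (a), I would use $\achkopt[u] = \max_{\nesting}\postset{u}_{\nesting}$ to get $u \nesting \achkopt[u]$, so by the $\genprog$ rules (\pref{eq:genprog}) $\instr[u]$ is syntactically contained in $\instr[\achkopt[u]]$ (they coincide when $u = \achkopt[u]$) and fires at least once before the postamble. Inspecting \pref{fig:FixpointMachineSemantics}, the only rules that free $\pre[u]$ are the self-deallocation inside $\instr[u]$ (which does not fire since $u \in \checkvertices$), the $\achk$-triggered deallocation (which under $\achkopt$ does not fire before the postamble of $\instr[\achkopt[u]]$), and the $\dprel$-triggered deallocation at the start of an iteration of some enclosing $\fmrepeat$. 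Any such $\dprel$-deallocation is immediately followed inside the same body by a re-execution of $\instr[u]$, so $\pre[u]$ is re-initialized before control returns up to $\achkopt[u]$.

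For (b), if $u = \achkopt[u]$ the claim is immediate because $u$ lies in no enclosing component: either $\instr[u]$ is a $\fmexec$ that ran once, or it is a $\fmrepeat$ whose stabilization test has already succeeded. Otherwise $u \nestingneq \achkopt[u]$ and $\instr[\achkopt[u]]$ is a $\fmrepeat$ whose exit condition $tpre \sqsubseteq \pre[\achkopt[u]]$ has just become true; by Bourdoncle's correctness argument, which $\memconfigdef$ already relies on, every $\pre$ value inside the stabilized component equals its fixpoint value. Moreover, since $\achkopt[u]$ is the $\nesting$-maximum of $\postset{u}_{\nesting}$, no outer $\fmrepeat$ re-triggers $\instr[u]$ later in $P$, so $\pre[u]$ is unchanged between the postamble of $\instr[\achkopt[u]]$ and the end of $P$ under $\memconfig$.

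The main obstacle is the $\dprel$ subcase of (a): it requires the structural fact that in any valid $\memconfig$, a $\dprel[u]$-deallocation can only occur at the preamble of an enclosing $\fmrepeat$ whose body then re-executes $\instr[u]$ before the postamble of $\instr[\achkopt[u]]$ is reached. I would prove this analogously to \pref{thm:dpostvalid}, by contradiction: if some valid $\memconfig$ had a $\dprel$ deallocating $\pre[u]$ without a subsequent re-initialization before the check in $\memconfig$'s own $\achk[u]$, then that check would read $\bot$ and disagree with $\memconfigdef$, contradicting validity of $\memconfig$. With this structural lemma in hand, items (a) and (b) combine to give $\llbracket P \rrbracket_{\memconfig \lightning \achkopt}[u] = \llbracket P \rrbracket_{\memconfig}[u]$ for every $u \in \checkvertices$.
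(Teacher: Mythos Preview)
Your proposal is correct and its core observation---part (b), that once the outermost component $\components(\achkopt[u])$ has stabilized $\instr[u]$ is never executed again, so $\pre[u]$ equals its final value---is exactly the paper's argument. The paper's proof is a two-line case split on whether $\achkopt[u]$ is a head, concluding in each case that ``the value passed to $\checkmap_u$ [is] the same as when using $\achkdef$''; it says nothing more.

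Where you differ is part (a): you worry explicitly that the $\dprel$ component of the ambient valid $\memconfig$ might free $\pre[u]$ before the postamble of $\instr[\achkopt[u]]$. The paper never raises this interaction, so your treatment is strictly more careful. One small refinement: your proposed structural lemma yields ``re-initialization before $\memconfig$'s own $\achk[u]$'', whereas what (a) needs is ``re-initialization before the postamble of $\instr[\achkopt[u]]$''. The bridge is the fact you already use in (b): after the postamble of $\instr[\achkopt[u]]$, $\instr[u]$ never fires again, so a $\dprel$-deallocation not followed by a re-execution of $\instr[u]$ before that postamble is also not followed by one before $\achk[u]$, and the contradiction with validity of $\memconfig$ goes through (modulo the degenerate case where the fixpoint value of $\pre[u]$ is itself $\bot$, in which case the check reads the correct value anyway). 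Stating this link explicitly would close the argument.
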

\begin{proof}
  Let $v = \achkopt[u]$. If $v$ is a head, by definition of $\achkopt$,
  $\components(v)$ is the largest component that contains $u$. Therefore, once
  $\components(v)$ is stabilized, $\instr[u]$ can no longer be executed, and
  $\pre[u]$ remains the same. If $v$ is not a head, then $v = u$. That is, there
  is no component that contains $u$. Therefore, $\pre[u]$ remains the same after
  the execution of $\instr[u]$. In both cases, the value passed to $\checkmap_u$
  are the same as when using $\achkdef$.
  \qed
\end{proof}

\begin{restatable}{theorem}{ThmAchkOpt}
  \label{thm:achkopt}
  $\achkopt$ is optimal. That is, given an $\FM$ program $P$, memory footprint of
  $\llbracket P \rrbracket_{\memconfig \lightning \achkopt}$ is smaller than or
  equal to that of $\llbracket P \rrbracket_{\memconfig}$ for all valid memory
  configuration $\memconfig$.
\end{restatable}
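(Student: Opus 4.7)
The plan is to adapt the contradiction argument the authors use in \pref{thm:dpostopt} to the setting of assertion checks. Suppose, for contradiction, that there exist $u \in \checkvertices$ and $b \in V$ with $a = \achkopt[u]$, $b \luoneq a$, and $\llbracket P \rrbracket_{\memconfig \lightning (\achk[u \leftarrow b])} = \llbracket P \rrbracket_{\memconfig}$; the condition $b \luoneq a$ captures the statement that the assertion check fires strictly earlier in execution order than under $\memconfigopt$. By \pref{eq:achkopt}, $a$ is the $\nesting$-maximal element of $\postset{u}_{\nesting}$, i.e.\ the head of the outermost component containing $u$ (with $a = u$ when $u$ belongs to no component), and $\fpcheck_u$ is recorded into $\checkmap$ only on its first firing.

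Expanding $b \luoneq a$ via \pref{eq:TotalOrder} yields two sub-cases. In Case~(i), $b \nestingneq a$, so $b$ lies strictly inside $\components(a)$ and $\instr[b]$ is executed within the loop of $\fmrepeat\ a\ \texttt{[}\ldots\texttt{]}$; its first execution occurs during the first iteration, before $\components(a)$ has stabilized, so the value of $\pre[u]$ that it records is a pre-fixpoint iterate rather than the fixpoint value used under $\achkdef$. In Case~(ii), $a \not\nesting b$ and $b \prec a$: since $u \nesting a$ and $a$ is the outermost head ancestor of $u$, no loop encloses both $b$ and the first execution of $\instr[u]$, so $\instr[b]$ completes before $\instr[u]$ ever runs, meaning $\pre[u] = \bot$ at the time the check is recorded.

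In either case, the $\pre[u]$ that feeds $\fpcheck_u$ under $\achk[u \leftarrow b]$ differs from the fixpoint value used under $\achkdef$; choosing abstract transformers and a $\fpcheck_u$ that distinguish the two (paralleling the authors' ``reads $\bot$ that is already deleted'' argument in the proof of \pref{thm:dpostopt}) contradicts validity. Together with \pref{thm:achkvalid}, this establishes that $\achkopt$ places each assertion check at the earliest valid position, which forces $\pre[u]$ to be deallocated no later under $\memconfig \lightning \achkopt$ than under any other valid $\memconfig$, yielding the claimed peak-memory inequality. The main obstacle I anticipate is Case~(i): the ``pre-fixpoint iterate differs from the fixpoint value'' step is a worst-case claim over abstract transformers and check functions, so making it rigorous requires an adversarial construction a touch more delicate than the $\dpost$ case, where reading $\bot$ immediately exhibits incorrectness.
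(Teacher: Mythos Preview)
Your proposal is correct and follows the same high-level strategy as the paper: reduce optimality of $\achkopt$ to showing that each check is placed at the earliest valid position, then argue by contradiction that any strictly earlier $b$ breaks validity. The paper's proof is considerably terser: it asserts directly that any earlier $b$ must satisfy $b \nestingneq v$ (where $v = \achkopt[u]$) and then invokes the informal ``one cannot know in advance when a component of $v$ would stabilize'' to rule that out.

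Your decomposition via $\luo$ into Case~(i) ($b \nestingneq a$) and Case~(ii) ($a \not\nesting b \wedge b \prec a$) is more explicit than the paper's. Case~(ii) is the case the paper silently dismisses; you correctly observe that since $a$ is $\nesting$-maximal it is not nested in any component, so $\instr[b]$ completes before $\instr[u]$ ever runs and the recorded $\pre[u]$ is $\bot$. This is a genuine completeness improvement over the paper's write-up, though the case is easy. Your Case~(i) and the paper's single case coincide, and your acknowledged obstacle---that showing a pre-fixpoint iterate can differ from the fixpoint requires an adversarial choice of transformers and $\fpcheck_u$---is exactly the content of the paper's ``cannot know in advance'' phrase; neither you nor the paper spells out the construction, and both implicitly rely on quantifying over all abstract domains and check functions. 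So the two proofs are the same argument at different levels of detail, with yours making the case split and the worst-case nature of the claim explicit.
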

\begin{proof}
  Because $\pre$ value is deleted right after its corresponding assertions are
  checked, it is sufficient to show that assertion checks are placed at the
  earliest positions with $\achkopt$.

  Let $v = \achkopt[u]$. By definition of $\achkopt$, $u \nesting v$. For some
  $b$ to perform assertion checks of $u$ earlier than $v$, it must satisfy $b
  \nestingneq v$.
  However, because one cannot know in advance when a component of $v$ would
  stabilize and when $\pre[u]$ would converge, the assertion checks of $u$
  cannot be performed in $\instr[b]$.
  Therefore, our approach puts the assertion checks at the earliest positions,
  and it leads to the minimum memory footprint.
  \qed
\end{proof}

\begin{restatable}{theorem}{ThmDpostlValid}
  \label{thm:dpostlvalid}
  $\dpostlopt$ is valid. That is, given an $\FM$ program $P$ and a valid memory
  configuration $\memconfig$, $\llbracket P \rrbracket_{\memconfig \lightning
  \dpostlopt} = \llbracket P \rrbracket_{\memconfig}$.
\end{restatable}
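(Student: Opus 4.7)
The plan is to mimic the structure of \pref{thm:dpostvalid}. Since $\dpostlopt$ only changes where deallocations of $\post$ values happen (namely, at the top of iterations of $\fmrepeat$ instructions, as prescribed by the $\dpostl$ clause in \pref{fig:FixpointMachineSemantics}), it suffices to show that every read of $\post[u]$ during the execution of $P$ is preceded, since the most recent such deallocation, by a recomputation performed by $\instr[u]$, so that no read can ever observe the $\bot$ value produced by deallocation.

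First, I would observe from \pref{eq:dpostlopt} that any $v \in \dpostlopt[u]$ satisfies $u \nesting v$: this is immediate for $v \in \postset{u}_{\nesting} \setminus \postset{d}_{\nesting}$, and the conditional case contributes $v = d$ only when $u \nesting d$ by hypothesis. Since $\dpostl$ is only consulted in the body of a $\fmrepeat$ instruction, the relevant case is $v = u$ with $u$ a head, or $u \in \components(v) \setminus \{v\}$ with $v$ a head. In either case, $\instr[u]$ is executed inside $\fmrepeat\ v\ \texttt{[}\cdots\texttt{]}$ on every iteration: when $v = u$, via the head recomputation $\pre[v], \post[v] \gets tpre, \tau_v(tpre)$ that is performed immediately after the deallocation step; when $v \neq u$, as part of the inner $\FM$ program $P$.

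Second, I would enumerate the possible reads of $\post[u]$ during one iteration of $v$'s loop. Inspection of \pref{fig:FixpointMachineSemantics} shows that $\post[u]$ is read only at (a) the $\pre$-update of some $\instr[y]$ with $u \cfgarrow y$ in the body $P$, or (b) the end-of-iteration $tpre$-update of a head $y$ in the body with $u \cfgarrow y$, or (c) the outer-predecessor preamble of a nested $\fmrepeat$ head $y$ with $u \cfgarrow y$ and $y \notin \omega(u)$. By \pref{def:BourdoncleWTO}, an edge $u \cfgarrow y$ forces either $u \prec y$ or $y \preceq u$ with $y \in \omega(u)$. In the first subcase, $\instr[u]$ precedes $\instr[y]$ within the body order, so $\post[u]$ has been freshly recomputed before (a) or (c) occurs. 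In the second subcase $u \nesting y$, so $y$ is itself a head containing $u$; its preamble does not touch $\post[u]$ (since $u$ is not an outer predecessor), and its $tpre$-update in (b) happens after $\instr[u]$ has recomputed $\post[u]$ inside $y$'s body. An induction on the depth of nesting folds the argument down to the innermost head containing $u$, concluding validity.

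The main obstacle will be taming the nested case cleanly: showing that when a head $y$ with $u \nesting y$ precedes $u$ in the WTO, none of its preamble or earlier body instructions inadvertently read $\post[u]$ before $\instr[u]$ has run. This is exactly what the outer-predecessor restriction in the preamble line is designed to ensure, and the proof will hinge on combining that restriction with the WTO inequality $u \cfgarrow y \wedge y \preceq u \Rightarrow u \nesting y$. Once this point is settled, the standard bookkeeping that $\memconfig \lightning \dpostlopt$ and $\memconfig$ agree on every $\checkmap_u$ follows because both perform the same sequence of $\tau$-applications on the same abstract values, yielding $\llbracket P \rrbracket_{\memconfig \lightning \dpostlopt} = \llbracket P \rrbracket_{\memconfig}$.
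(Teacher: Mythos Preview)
Your approach and the paper's share the same core observation---that every $v \in \dpostlopt[u]$ satisfies $u \nesting v$, so $\instr[u]$ is re-executed inside every iteration of $v$'s loop---but decompose the remaining work differently. The paper argues by contradiction and splits on the \emph{position} of the offending reader relative to the deallocating head $d'$: either the reader lies inside $\components(d')$ (so $\instr[u]$ precedes it within the same iteration) or strictly after $d'$ in the order $\preceq$ (so the recomputation from the last iteration survives). You instead split on the \emph{kind} of read site---cases (a), (b), (c)---which is more fine-grained and makes explicit why the outer-predecessor filter in the preamble matters; this is a nice complement to the paper's coarser argument.

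There is, however, a gap you should close. Your enumeration is scoped to ``reads of $\post[u]$ during one iteration of $v$'s loop,'' but the sufficient condition you state up front concerns \emph{every} read of $\post[u]$ during the execution of $P$. You never address reads that occur after $v$'s loop has exited, i.e., an edge $u \cfgarrow w$ with $w$ outside $\components(v)$. This is exactly the paper's second case ($d' \prec v$ with $v \not\nesting d'$). It is easy to dispatch---the deallocation sits at the top of each iteration and $\instr[u]$ runs afterward in that same iteration, so when the loop exits $\post[u]$ holds the last recomputed value---but it must be stated for the argument to be complete.
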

\begin{proof}
  Again, our approach does not change the iteration order and only changes where
  the deallocations are performed. Therefore, it is sufficient to show that for
  all $u \cfgarrow v$, $\post[u]$ is available whenever $\instr[v]$ is executed.

  Suppose that this is false: there exists an edge $u \cfgarrow v$ that violates
  it. Let $d'$ be element in $\dpostlopt[u]$ that causes this violation. Then,
  the execution trace of $P$ has execution of $\instr[v]$ after the deallocation
  of $\post[u]$ in $\instr[d']$, with no execution of $\instr[u]$ in between.
  Because $\post[u]$ is deleted inside the loop of $\instr[d']$, $\instr[v]$
  must be nested in $\instr[d']$ or be executed after $\instr[d']$ to be
  affected. That is, it must be either $v \nesting d'$ or $d' \prec v$. Also,
  because of how $\dpostlopt[u]$ is computed, $u \nesting d'$.

  First consider the case $v \nesting d'$. By definition of WTO and $u \cfgarrow
  v$, it is either $u \prec v$ or $u \nesting v$. In either case, $\instr[u]$
  gets executed before $\instr[v]$ reads $\post[u]$. Therefore, deallocation of
  $\post[u]$ in $\instr[d']$ cannot cause the violation.

  Alternatively, consider $d' \prec v$ and $v \not\nesting d'$. Because $u
  \nesting d'$, $\post[u]$ is generated in each iteration over
  $\components(d')$, and the last iteration does not delete $\post[u]$.
  Therefore, $\post[u]$ will be available when executing $\instr[v]$. Therefore,
  such $u, d'$ does not exists, and the theorem is true.
  \qed
\end{proof}

\begin{restatable}{theorem}{ThmDpostlOpt}
  \label{thm:dpostlopt}
  $\dpostlopt$ is optimal. That is, given an $\FM$ program $P$, memory footprint
  of $\llbracket P \rrbracket_{\memconfig \lightning \dpostlopt}$ is smaller
  than or equal to that of $\llbracket P \rrbracket_{\memconfig}$ for all valid
  memory configuration $\memconfig$.
\end{restatable}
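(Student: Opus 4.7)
The plan is to mirror the structure of Theorems~\ref{thm:dpostopt} and \ref{thm:achkopt}: argue that the in-loop deallocations specified by $\dpostlopt$ are placed at the earliest points that are still valid, and that every head omitted from $\dpostlopt[u]$ is either invalid to include (adding it would break the analysis) or subsumed by $\dpost$ (adding it yields no peak-memory gain). Because in-loop deallocations happen at the start of each iteration of a $\fmrepeat$ loop (per the semantics in \pref{fig:FixpointMachineSemantics}), the question reduces to characterising exactly which heads $h$ may usefully appear in $\dpostl[u]$, and then checking that $\dpostlopt[u]$ collects precisely those.

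First I would fix an arbitrary valid $\memconfig(\dpost, \achk, \dpostl, \dprel)$ and, for each $u \in V$, set $d = \dpostopt[u]$. For an in-loop deallocation of $\post[u]$ at a head $h$ to be useful, $u \nesting h$ must hold; otherwise $\instr[u]$ is not re-executed inside $\instr[h]$'s loop body and freeing $\post[u]$ there either has no effect (the value is already dead) or reads $\bot$ on a later use (violating validity). Among $h \in \postset{u}_{\nesting}$, I would split on whether $h \in \postset{d}_{\nesting}$. If $h \notin \postset{d}_{\nesting}$, then $\instr[h]$ is nested strictly inside $\instr[d]$ (or incomparable to $d$ along $u$'s chain), so in-loop deallocation at $h$ frees $\post[u]$ strictly earlier than the postamble deallocation of $\instr[d]$ (which is valid by \pref{thm:dpostvalid}) and yields a real memory saving; these are exactly the heads collected by the first term in \pref{eq:dpostlopt}. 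Conversely, if $h \in \postset{d}_{\nesting}$ with $h \neq d$, then $\instr[d]$ nests strictly inside $\instr[h]$'s loop, so $\instr[d]$'s postamble already deallocates $\post[u]$ at the end of every iteration of $\instr[h]$; an extra in-loop deallocation at $h$ can free nothing still alive and gives no peak-memory improvement. The boundary case $h = d$ is precisely the conditional $\{d\}$ term: it is included exactly when $u \nesting d$, which is the regime in which $\instr[u]$ sits inside $\instr[d]$'s loop body and the start-of-iteration free happens strictly before the postamble free.

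The argument would then conclude by a step-by-step trace alignment of $\memconfig$ with $\memconfig \lightning \dpostlopt$: at every execution step, the set of live $\post$ values under $\memconfig \lightning \dpostlopt$ is a subset of that under $\memconfig$, because (i) every genuine early-deallocation opportunity available to a valid $\dpostl$ is already captured by $\dpostlopt$ by the case analysis above, and (ii) allocations are triggered purely by reads, and the reads performed by $P$ are identical in the two traces since $\dpost$, $\achk$ and $\dprel$ are held fixed. Taking the supremum over the trace gives the required peak-memory inequality.

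The main obstacle will be the case $h \in \postset{d}_{\nesting} \setminus \{d\}$ and the trace-alignment step: I have to argue precisely that, iteration by iteration of $\instr[h]$, the interval during which $\post[u]$ is alive under $\memconfig$ cannot be strictly shorter than under $\memconfig \lightning \dpostlopt$. The subtlety is that $\instr[d]$'s postamble fires at the same point in both traces, and the value that would be freed at the top of $\instr[h]$'s next iteration was already freed by that postamble, so the ``extra'' deallocation in $\dpostl[u]$ is a no-op on memory; making this watertight requires an induction on the nesting depth together with an invariant stating that $\post[u]$ is never live at the top of an $\instr[h]$ iteration whenever $d \nestingneq h$.
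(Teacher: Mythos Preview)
Your proposal is correct and follows essentially the same line as the paper: restrict attention to heads $h$ with $u \nesting h$, then case-split on whether $h \in \postset{d}_{\nesting}$ to argue that the omitted heads are subsumed by the deallocation at $d = \dpostopt[u]$, while the included ones (plus the conditional $\{d\}$) are exactly the heads where an in-loop free yields a genuine saving. The paper's proof is a terse two-paragraph version of precisely this case analysis; your explicit trace-alignment step and the inductive invariant for the $d \nestingneq h$ case add rigor that the paper leaves implicit, but the underlying decomposition is the same.
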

\begin{proof}
  Because one cannot know when a component would stabilize in advance, the
  decision to delete intermediate $\post[u]$ cannot be made earlier than the
  stabilization check of a component that contains $u$. Our approach makes such
  decisions in all relevant components that contains $u$.

  If $u \nesting d$, $\dpostlopt[u] = \postset{u}_{\nesting} \cap
  \preset{d}_{\nesting}$. Because $\post[u]$ is deleted in $\instr[d]$, we do
  not have to consider components in $\postset{d}_{\nesting} \setminus \{d\}$.
  Alternatively, if $u \not\nesting d$, $\dpostlopt[u] = \postset{u}_{\nesting}
  \setminus \postset{d}_{\nesting}$. Because $\post[u]$ is deleted $\instr[d]$,
  we do not have to consider components in $\postset{u}_{\nesting} \setminus
  \postset{d}_{\nesting}$. Therefore, $\dpostlopt$ is optimal.
  \qed
\end{proof}

\begin{restatable}{theorem}{ThmDprelValid}
  \label{thm:dprelvalid}
  $\dprelopt$ is valid. That is, given an $\FM$ program $P$ and a valid memory
  configuration $\memconfig$, $\llbracket P \rrbracket_{\memconfig \lightning
  \dprelopt} = \llbracket P \rrbracket_{\memconfig}$.
\end{restatable}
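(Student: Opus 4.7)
The plan is to reduce the equivalence $\llbracket P \rrbracket_{\memconfig \lightning \dprelopt} = \llbracket P \rrbracket_{\memconfig}$ to a local, per-read verification. Only the $\dprel$ component of $\memconfig$ is replaced, and $\dprel$ controls purely the deallocation of $\pre$ values at loop-iteration starts without ever influencing control flow or the firing of any write; thus the scheduled sequence of instructions and of writes to $\pre$ and $\post$ is identical in both executions. By induction on the trace it therefore suffices to show that every read of $\pre[u]$ performed during the run returns the same value under $\memconfig \lightning \dprelopt$ as under $\memconfig$. From \pref{fig:FixpointMachineSemantics}, such reads occur only (i) inside $\instr[u]$ itself---in $\post[u] \gets \tau_u(\pre[u])$ for $\fmexec\ u$, and in the widening $tpre \gets \pre[u] \widen \bigsqcup\{\post[p] \mid p \cfgarrow u\}$ for $\fmrepeat$---and (ii) in $\fpcheck_u(\pre[u])$, scheduled in the postamble of $\instr[\achk[u]]$, where $\achk[u]$ is the same in both configurations.

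For case (i), each such read is immediately preceded, within the same instruction, by a write to $\pre[u]$: $\pre[u] \gets \bigsqcup\{\post[p] \mid p \cfgarrow u\}$ for $\fmexec$, or $\pre[u] \gets tpre$ at the top of each iteration for $\fmrepeat$. Any $\dprelopt$-induced deallocation of $\pre[u]$ is triggered by some $v \in \dprelopt[u]$, which by the defining equation $\dprelopt[u] = \postset{u}_{\nesting} \setminus \{u\}$ satisfies $u \nestingneq v$; that deallocation fires only at the start of an iteration of $\instr[v]$, strictly outside the body of $\instr[u]$. Hence no $\dprelopt$-deallocation intervenes between the in-instruction write and the in-instruction read.

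For case (ii), since $\memconfig$ is valid the value received by $\fpcheck_u$ under $\memconfig$ coincides with the converged value of $\pre[u]$ computed under $\memconfigdef$, i.e., the value written by the final execution of $\instr[u]$, which takes place in the last iteration of $\instr[\achkopt[u]]$'s loop with $\achkopt[u] = \max_{\nesting}\postset{u}_{\nesting}$ (or at $\instr[u]$ itself when $u$ is not nested, in which case $\dprelopt[u] = \emptyset$ and there is nothing to show). Every $v \in \dprelopt[u]$ satisfies $v \nesting \achkopt[u]$, so every $\dprelopt$-deallocation of $\pre[u]$ is triggered at an iteration start of some $\instr[v]$ whose execution lies entirely inside $\instr[\achkopt[u]]$'s loop; no such iteration starts after that loop exits. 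Thus $\pre[u]$ is frozen from the final $\instr[u]$ write onward and $\fpcheck_u$ reads the same converged value as under $\memconfig$. The main obstacle is precisely this sequencing argument through nested $\fmrepeat$s; it is resolved by combining the outermost-head property of $\achkopt[u]$ with the exclusion of $u$ itself from $\dprelopt[u]$, which together guarantee that every $\pre[u]$-deallocator lies strictly inside $\instr[\achkopt[u]]$ and becomes inert once its loop exits.
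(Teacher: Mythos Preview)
Your proof is correct and follows essentially the same two-part structure as the paper's: first arguing that excluding $u$ from $\dprelopt[u]$ protects the in-instruction reads of $\pre[u]$ (widening/transformer), and second arguing that since every $v \in \dprelopt[u]$ satisfies $v \nesting \achkopt[u]$, all $\dprelopt$-deallocations lie inside the outermost enclosing loop and hence cannot touch $\pre[u]$ after its final write. Your version is considerably more explicit than the paper's terse proof---in particular, you spell out the trace-level reduction to per-read correctness and you are careful to distinguish the check location $\achk[u]$ (inherited from the arbitrary valid $\memconfig$) from the auxiliary point $\achkopt[u]$ used only to bound where deallocations can occur, a distinction the paper's proof glosses over.
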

\begin{proof}
  $\pre[u]$ is only used in assertion checks and to
  perform widening in $\instr[u]$. Because $u$ is removed from $\dprel[u]$, the
  deletion does not affect widening.

  For all $v \in \dprel[u]$, $v \nesting \achkopt[u]$. Because $\pre[u]$ is not
  deleted when $\components(v)$ is stabilized, $\pre[u]$ will be available when
  performing assertion checks in $\instr[\achkopt[u]]$. Therefore, $\dprel$ is
  valid.
  \qed
\end{proof}

\begin{restatable}{theorem}{ThmDprelOpt}
  \label{thm:dprelopt}
  $\dprelopt$ is optimal. That is, given an $\FM$ program $P$, memory footprint
  of $\llbracket P \rrbracket_{\memconfig \lightning \dprelopt}$ is smaller than
  or equal to that of $\llbracket P \rrbracket_{\memconfig}$ for all valid
  memory configuration $\memconfig$.
\end{restatable}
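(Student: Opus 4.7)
The plan is to mirror the optimality argument for $\dpostlopt$ (Theorem~\ref{thm:dpostlopt}). First I observe that $\dprelopt[u] = \postset{u}_{\nesting} \setminus \{u\}$ already contains every candidate element of $\postset{u}_{\nesting}$ except $u$ itself. To prove optimality it suffices to establish two things: (i)~that $\dprelopt$ is a valid configuration (which is Theorem~\ref{thm:dprelvalid}), and (ii)~that for every valid $\dprel$ we have $\dprel[u] \subseteq \postset{u}_{\nesting} \setminus \{u\}$, so that enlarging $\dprel[u]$ to $\dprelopt[u]$ only adds deletions that a valid configuration cannot have performed. Combined with the monotonicity of memory footprint with respect to the set of deallocation points, this yields the claim.

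For the containment, I would handle two exclusion arguments separately. First, I would show $u \notin \dprel[u]$ for any valid $\dprel$: if $u$ is a non-head then $\instr[u]$ is an $\fmexec$ instruction and $\dprel[u]$ has no effect at $u$, while if $u$ is a head the in-loop deallocation rule $\foreachKw\ u' \in \checkvertices \colon \redv \in \dprel[u']$ in the body of $\fmrepeat\ u\ \texttt{[}\ldots\texttt{]}$ would delete $\pre[u]$ precisely before the widening step $tpre \gets \pre[u] \widen \bigsqcup\{\post[p] \mid p \cfgarrow u\}$, causing it to read $\bot$ and invalidating the configuration (as in Example~\ref{exa:dpreloptwrong}). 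Second, I would show that no $v \notin \postset{u}_{\nesting}$ belongs to any sensibly valid $\dprel[u]$: the in-loop rule only fires when $\instr[v]$ is a $\fmrepeat$ (so $v$ is a head) and the deletion is only \emph{safe} when $\pre[u]$ is guaranteed to be recomputed before the next read, which requires $\instr[u]$ to sit inside $\instr[v]$'s body, i.e.\ $u \nesting v$. Any $v$ outside $\postset{u}_{\nesting}$ either does nothing or risks leaving $\pre[u]$ at $\bot$ before the assertion check at $\achkopt[u]$.

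Putting the two exclusions together gives $\dprel[u] \subseteq \postset{u}_{\nesting} \setminus \{u\} = \dprelopt[u]$, so $\dprelopt[u]$ maximises the set of in-loop deallocation points. Because each additional deletion only shortens the interval during which $\pre[u]$ occupies memory inside the loop body (the next reference simply reallocates to $\bot$ and is overwritten before being read), the peak-memory usage of $\llbracket P \rrbracket_{\memconfig \lightning \dprelopt}$ is no larger than that of $\llbracket P \rrbracket_{\memconfig}$ for every valid $\memconfig$.

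The main obstacle is the last monotonicity step: translating the set-theoretic inclusion $\dprel[u] \subseteq \dprelopt[u]$ into an actual inequality on peak memory. One needs to argue carefully that deleting $\pre[u]$ at additional heads $v \in \dprelopt[u] \setminus \dprel[u]$ cannot \emph{increase} the footprint at any intermediate point of the execution. This follows from the implicit allocation convention stated after \pref{fig:FixpointMachineSemantics} (a deallocated $\pre[u]$ is lazily reallocated to $\bot$ on next access), so the extra deletions only contract live ranges rather than create new ones; making this argument precise is the only non-routine part, and it parallels the analogous step already used in the proof of Theorem~\ref{thm:dpostlopt}. \qed
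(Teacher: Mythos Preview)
Your proposal is correct and follows essentially the same approach as the paper: both argue that the only effective in-loop deletion points for $\pre[u]$ are heads of components containing $u$ (minus $u$ itself, to preserve widening), and that $\dprelopt[u]=\postset{u}_{\nesting}\setminus\{u\}$ already exhausts this set. The paper's proof is a three-sentence sketch (``the decision to delete intermediate $\pre[u]$ cannot be made earlier than the stabilization check of a component that contains $u$; our approach makes such decisions in all such components''), whereas you spell out the two exclusion cases and the monotonicity of the footprint under enlarging $\dprel[u]$; your added care about elements outside $\postset{u}_{\nesting}$ that ``do nothing'' versus those that break validity is a refinement the paper leaves implicit.
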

\begin{proof}
  Because one cannot know when a component would stabilize in advance, the
  decision to delete intermediate $\pre[u]$ cannot be made earlier than the
  stabilization check of a component that contains $u$. Our approach makes such
  decisions in all components that contains $u$. Therefore, $\dprelopt$ is
  optimal.
  \qed
\end{proof}

\ThmMemopt*
\begin{proof}
  This follows from theorems \pref{thm:dpostlvalid} to \ref{thm:dprelopt}.
  \qed
\end{proof}

\subsection{Correctness and efficiency of \algwto{} in \pref{sec:Efficient}}
\label{app:appendix-alg}
This section presents the theorems and proofs about the correctness and
efficiency of \algwto{} (\pref{alg:bu}, \pref{sec:Efficient}).

\ThmAlgCorrect*
\begin{proof} We show that each map is constructed correctly.
  \begin{itemize}
    \item $\dpostopt$: Let $v'$ be the value of $\dpostopt[u]$ before
    overwritten in \pref{li:wto-prec1}, \ref{li:wto-prec2}, or
    \ref{li:wto-nest}. Descending post DFN ordering corresponds to a topological
    sorting of the nested SCCs. Therefore, in \pref{li:wto-prec1} and
    \ref{li:wto-prec2}, $v' \prec v$. Also, because $v \nesting \etch$ for all
    $v \in N_h$ in \pref{li:wto-nest}, $v' \nesting v$. In any case, $v' \luo
    v$. Because $\repKw(v)$ essentially performs $\foo(u,v)$ when restoring the
    edges, the final $\dpostopt[u]$ is the maximum of the lifted successors, and
    the map is correctly computed.
    \item $\dpostlopt$: The correctness follows from the correctness of $\TKw$.
    Because the components are constructed bottom-up, $\repKw(u)$ in
    \pref{li:wto-Tprec1} and \ref{li:wto-Tprec2} returns
    $\text{max}_{\nesting}(\postset{u}_{\nesting} \setminus
    \postset{\dpostopt[u]}_{\nesting})$. Also, $\mathsf{N}^* = \nesting$. Thus,
    $\dpostlopt$ is correctly computed.
    \item $\achkopt$: At the end of the algorithm $\repKw(v)$ is the head of
    maximal component that contains $v$, or $v$ itself when $v$ is outside of
    any components. Therefore, $\achkopt$ is correctly computed.
    \item $\dprelopt$: Using the same reasoning as in $\achkopt$, and because
    $\mathsf{N}^* = \nesting$, $\dprelopt$ is correctly computed.
  \end{itemize}
  \qed
\end{proof}

\ThmAlgFast*
\begin{proof}
  The base WTO-construction algorithm is almost-linear
  time~\cite{DBLP:journals/pacmpl/KimVT20}. The starred lines in \pref{alg:bu}
  visit each edge and vertex once. Therefore, time complexity still remains
  almost-linear time.
  \qed
\end{proof}

\ifarxiv
\newpage
\section{Further experimental evaluation}
\label{app:appendix-eval}

\begin{table} \caption{Measurements for benchmarks that took less than 5 seconds
  are summarized in the table below. Time diff shows the runtime of $\ikos$
  minus that of $\mikos$ (positive means speedup in $\mikos$). Mem diff shows
  the memory footprint of $\ikos$ minus that of $\mikos$ (positive means memory
  reduction in $\mikos$).}
  \let\center\empty
  \let\endcenter\relax
  \centering
  \resizebox{\width}{!}{\begin{center}
\begin{tabular}{@{}lccc|ccc|ccc|ccc@{}}
  \toprule
  $<\!5$s & \multicolumn{3}{c}{Time (s)} & \multicolumn{3}{c}{Memory (MB)} & \multicolumn{3}{c}{Time diff (s)} & \multicolumn{3}{c}{Memory diff (MB)} \\
  & min. & max. & avg. & min. & max. & avg. & min. & max. & avg. & min. & max. & avg. \\
  \midrule
  T1\; & \svcshorttimemin{} & \svcshorttimemax{} & \svcshorttimeavg{} & \svcshortmemmin{} & \svcshortmemmax{} & \svcshortmemavg{} & \svcshorttimem{} & \svcshorttimep{} & \svcshorttimea{} & \svcshortmemm{} & \svcshortmemp{} & \svcshortmema{} \\
  T2\; & \ossshorttimemin{} & \ossshorttimemax{} & \ossshorttimeavg{} & \ossshortmemmin{} & \ossshortmemmax{} & \ossshortmemavg{} & \ossshorttimem{} & \ossshorttimep{} & \ossshorttimea{} & \ossshortmemm{} & \ossshortmemp{} & \ossshortmema{} \\
  \bottomrule
\end{tabular}
\end{center}
}
\end{table}

\begin{table} \caption{\textbf{Task T1.} A sample of the results for task T1 in
  \pref{fig:mscat-svc}, excluding the non-completed benchmarks in $\ikos$. The
  first 5 rows list benchmarks with the smallest memory reduction ratio (MRR)s.
  The latter 5 rows list benchmarks with the largest memory footprints. The
  smaller the MRR, the greater the reduction in memory footprint. T: time; MF:
  memory footprint.}
  \label{tab:rq1-svc}
  \let\center\empty
  \let\endcenter\relax
  \centering
  \resizebox{\width}{!}{\begin{tabular}{l|rr|rr|r}
\toprule
     &                                      \multicolumn{2}{c}{$\ikos$} & \multicolumn{2}{c}{$\mikos$} &   \\
                                         Benchmark &  T (s) &  MF (MB) &  T (s) &  MF (MB) &  MRR \\
\midrule
3.16-rc1/205\_9a-net-rtl8187  &         1500 &   45905 &         1314 &          56 &      0.001 \\
4.2-rc1/43\_2a-mmc-rtsx       &        786.5 &   26909 &        594.8 &          42 &      0.002 \\
4.2-rc1/43\_2a-video-radeonfb &         2494 &   56752 &         1930 &         107 &      0.002 \\
4.2-rc1/43\_2a-net-skge       &         3523 &   47392 &         3131 &          98 &      0.002 \\
4.2-rc1/43\_2a-usb-hcd        &        220.4 &   17835 &        150.8 &          39 &      0.002 \\
\midrule
\midrule
4.2-rc1/32\_7a-target\_core\_mod &         1316 &   60417 &         1110 &        2967 &      0.049 \\
challenges/3.14-alloc-libertas   &         2094 &   60398 &         1620 &         626 &      0.010 \\
4.2-rc1/43\_2a-net-libertas      &         1634 &   59902 &         1307 &         307 &      0.005 \\
challenges/3.14-kernel-libertas  &         2059 &   59826 &         1688 &        2713 &      0.045 \\
3.16-rc1/43\_2a-sound-cs46xx     &         3101 &   58087 &         2498 &         193 &      0.003 \\
\bottomrule
\end{tabular}
}
\end{table}
\begin{table} \caption{\textbf{Task T1.} A sample of the results for task T1 in
  \pref{fig:tscat-svc}. The first 3 rows list benchmarks with lowest speedups.
  The latter 3 rows list benchmarks with highest speedups. T: time; MF: memory
  footprint.}
  \label{tab:rq2-svc}
  \let\center\empty
  \let\endcenter\relax
  \centering
  \resizebox{\width}{!}{\begin{tabular}{l|rr|rr|rr}
\toprule
     &                                      \multicolumn{2}{c}{$\ikos$} & \multicolumn{2}{c}{$\mikos$} &   &   \\
                                         Benchmark & T (s) &  MF (MB) &  T (s) &  MF (MB) &  MRR &   Speedup \\
\midrule
  challenges/3.8-usb-main11   &           42.63 &        541 &           48.92 &         122 &      0.225 &  0.87$\times$ \\
  challenges/3.8-usb-main0    &           54.31 &       3025 &           61.78 &         190 &      0.063 &  0.88$\times$ \\
  challenges/3.8-usb-main1    &           42.84 &        457 &           47.73 &         119 &      0.261 &  0.90$\times$ \\
\midrule
\midrule
  3.14/complex-kernel-tm6000  &          745.4 &   25903 &          413.4 &         234 &      0.009 &  1.80$\times$ \\
  4.2-rc1/43\_2a-scsi-st      &          214.6 &   20817 &          119.6 &         547 &      0.026 &  1.79$\times$ \\
  3.14/kernel--rtl8723ae      &          111.9 &     154 &          62.48 &         115 &      0.746 &  1.79$\times$ \\
\bottomrule
\end{tabular}
}
\end{table}

\begin{table} \caption{\textbf{Task T2.} A sample of the results for task T2 in
  \pref{fig:mscat-oss}, excluding the non-completed benchmarks in $\ikos$. The
  first 5 rows list benchmarks with the smallest memory reduction ratio (MRR)s.
  The latter 5 rows list benchmarks with the largest memory footprints. The
  smaller the MRR, the greater the reduction in memory footprint. T: time; MF:
  memory footprint.}
  \label{tab:rq1-oss}
  \let\center\empty
  \let\endcenter\relax
  \centering
  \resizebox{\width}{!}{\begin{tabular}{l|rr|rr|r}
\toprule
     &                                      \multicolumn{2}{c}{$\ikos$} & \multicolumn{2}{c}{$\mikos$} &   \\
                                         Benchmark &  T (s) &  MF (MB) &  T (s) &  MF (MB) &  MRR \\
\midrule
lxsession-0.5.4/lxsession &          146.1 &    5831 &          81.57 &         130 &      0.022 \\
rox-2.11/ROX-Filer        &          362.3 &    9569 &          400.6 &         329 &      0.034 \\
tor-0.3.5.8/tor-resolve   &          58.36 &    1930 &          53.10 &          70 &      0.036 \\
openssh-8.0p1/ssh-keygen  &           1212 &   29670 &           1170 &        1128 &      0.038 \\
xsane-0.999/xsane         &          499.8 &   10118 &          467.5 &         430 &      0.042 \\
\midrule
\midrule
openssh-8.0p1/sftp       &        3036 &   45903 &         3446 &     9137 &      0.199 \\
metacity-3.30.1/metacity &        2111 &   36324 &         2363 &     6329 &      0.174 \\
links-2.19/links         &        2512 &   29761 &         2740 &     3930 &      0.132 \\
openssh-8.0p1/ssh-keygen &        1212 &   29670 &         1170 &     1128 &      0.038 \\
links-2.19/xlinks        &        2523 &   29587 &         2760 &     3921 &      0.133 \\
\bottomrule
\end{tabular}}
\end{table}
\begin{table} \caption{\textbf{Task T2.} A sample of the results for task T2 in
  \pref{fig:tscat-oss}. The first 3 rows list benchmarks with lowest speedups.
  The latter 3 rows list benchmarks with highest speedups. T: time; MF: memory
  footprint.}
  \label{tab:rq2-oss}
  \let\center\empty
  \let\endcenter\relax
  \centering
  \resizebox{\width}{!}{\begin{tabular}{l|rr|rr|rr}
\toprule
     &                                      \multicolumn{2}{c}{$\ikos$} & \multicolumn{2}{c}{$\mikos$} &   &   \\
                                         Benchmark & T (s) &  MF (MB) &  T (s) &  MF (MB) &  MRR &   Speedup \\
\midrule
  moserial-3.0.12/moserial        &        422.3 &        109 &        585.5 &         107 &      0.980 &  0.72$\times$ \\
  openssh-8.0p1/ssh-pkcs11-helper &        82.70 &        674 &        94.61 &         613 &      0.910 &  0.87$\times$ \\
  openssh-8.0p1/sftp              &         3036 &      45903 &         3446 &        9137 &      0.199 &  0.88$\times$ \\
\midrule
\midrule
  packeth-1.9/packETH             &          188.7 &        153 &           83.82 &         120 &      0.782 &  2.25$\times$ \\
  lxsession-0.5.4/lxsession       &          146.1 &       5831 &           81.57 &         130 &      0.022 &  1.79$\times$ \\
  xscreensaver-5.42/braid         &           6.48 &        203 &            4.87 &          36 &      0.179 &  1.33$\times$ \\
\bottomrule
\end{tabular}
}
\end{table}

\fi
\end{document}